\newtheorem{lemma}{Lemma}[section]
\newtheorem{theorem}{Theorem}[section]
\newtheorem{exemp}{Example}[section]
\newcommand\oh{\nicefrac{1}{2}}
\begin{document}


\begin{frontmatter}
\title{Jeffreys priors for mixture estimation: properties and alternatives}

\author{ Clara Grazian\thanks
{Corresponging Author: Memotef,
Sapienza Universit\`a di Roma. 
Via del Castro Laurenziano, 9, 00161, Roma, Italy. CEREMADE Universit\'e Paris-Dauphine, Paris, France. e-mail: clara.grazian@uniroma1.it
} \and Christian P. Robert\thanks{CEREMADE Universit\'e Paris-Dauphine, University of Warwick and CREST, Paris.
e-mail: xian@ceremade.dauphine.fr. }
}

\begin{abstract}
While Jeffreys priors usually are well-defined for the parameters of mixtures of distributions, they are not available
in closed form. Furthermore, they often are improper priors. Hence, they have never been used to draw inference on the
mixture parameters. We study in this paper the implementation and the properties of Jeffreys priors in several 
mixture settings, show that the associated posterior distributions most often are improper, and then 
propose a noninformative alternative for the analysis of mixtures.
\end{abstract}

\begin{keyword}
\kwd{Noninformative prior}
\kwd{mixture of distributions}
\kwd{Bayesian analysis}
\kwd{Dirichlet prior}
\kwd{improper prior}
\kwd{improper posterior}
\kwd{label switching}
\end{keyword}

\end{frontmatter}

\section{Introduction}
\label{intro}

Bayesian inference in mixtures of distributions has been studied quite extensively in the literature. See, e.g.,
\cite{maclachlan:peel:2000} and \cite{fruhwirth:2006} for book-long references and
\cite{lee:marin:mengersen:robert:2008} for one among many surveys. From a Bayesian perspective, one of the several
difficulties with this type of distribution,
\begin{equation}\label{eq:theMix}
\sum_{i=1}^k p_i\,f(x|\theta_i)\,,\quad \sum_{i=1}^k p_i=1\,,
\end{equation}
is that its ill-defined nature (non-identifiability, multimodality, unbounded likelihood, etc.) 
leads to restrictive prior modelling since most
improper priors are not acceptable. This is due in particular to the feature that
a sample from \eqref{eq:theMix} may contain no subset from one of the
$k$ components $f(\cdot|\theta_i)$ (see. e.g.,
\citealp{titterington:smith:makov:1985}). Albeit the probability of such an event is
decreasing quickly to zero as the sample size grows, it nonetheless prevents the use
of independent improper priors, unless such events are prohibited \citep{diebolt:robert:1994}. 
Similarly, the exchangeable nature of the components often induces both multimodality in the posterior distribution and
convergence difficulties as exemplified by the {\em label switching} phenomenon that is now quite well-documented
\citep{celeux:hurn:robert:2000, stephens:2000b, jasra:holmes:stephens:2005, fruhwirth:2006, geweke:2007,
puolamaki:kaski:2009}. This feature is characterized by a lack of symmetry in the outcome of a Monte Carlo Markov chain
(MCMC) algorithm, in that the posterior density is exchangeable in the components of the mixture but the MCMC sample
does not exhibit this symmetry. In addition, most MCMC samplers do not concentrate around a single mode of the posterior
density, partly exploring several modes, which makes the construction of Bayes estimators of the components much harder.

When specifying a prior over the parameters of \eqref{eq:theMix}, it is
therefore quite delicate to produce a manageable and sensible non-informative
version and some have argued against using non-informative priors
in this setting (for example, \cite{maclachlan:peel:2000} argue that it is impossible to obtain proper posterior distribution from fully noninformative priors), on the basis that mixture models were ill-defined objects that required informative priors to give a meaning to the notion of a component of
\eqref{eq:theMix}. For instance, the distance between two components needs to be
bounded from below to avoid repeating the same component over and over again.
Alternatively, the components all need to be informed by the data, as
exemplified in \cite{diebolt:robert:1994} who imposed a completion scheme
(i.e., a joint model on both parameters and latent variables) such that all
components were allocated at least two observations, thereby ensuring that the
(truncated) posterior was well-defined. \cite{wasserman:2000} proved ten years
later that this truncation led to consistent estimators and moreover that only
this type of priors could produce consistency. While the constraint on the
allocations is not fully compatible with the i.i.d. representation of a mixture
model, it naturally expresses a modelling requirement that all components have
a meaning in terms of the data, namely that all components genuinely
contributed to generating a part of the data. This translates as a form of weak
prior information on how much one trusts the model and how meaningful each
component is on its own (by opposition with the possibility of adding
meaningless artificial extra-components with almost zero weights or almost
identical parameters).

While we do not seek Jeffreys priors as the ultimate prior modelling for non-informative settings, being altogether
convinced of the lack of unique reference priors \citep{robert:2001,robert:chopin:rousseau:2009}, we think it is
nonetheless worthwile to study the performances of those priors in the setting of mixtures in order to determine if
indeed they can provide a form of reference priors and if they are at least well-defined in such settings. We will show that only in very specific situations the Jeffreys prior provides reasonable inference.

In Section \ref{sec:jeffreys} we provide a formal characterisation of properness of the posterior distribution for the parameters of a mixture model, in particular with Gaussian components, when a Jeffreys prior is used for them. In Section \ref{sec:prosper} we will analyze the properness of the Jeffreys prior and of the related posterior distribution: only when the weights of the components (which are defined in a compact space) are the only unknown parameters it turns out that the Jeffreys prior (and so the relative posterior) is proper; on the other hand, when the other parameters are unknown, the Jeffreys prior will be proved to be improper and in only one situation it provides a proper posterior distribution. In Section \ref{sec:alternative} we propose a way to realize a noninformative analysis of mixture models and introduce improper priors for at least some parameters. Section \ref{sec:concl} concludes the paper. 

\vspace{0.3cm}
\section{Jeffreys priors for mixture models}
\label{sec:jeffreys}

We recall that the Jeffreys prior was introduced by \cite{jeffreys:1939} as a default prior based
on the Fisher information matrix
$$
\pi^\text{J}(\theta) \propto |I(\theta)|^{\oh}\,,
$$
whenever the later is well-defined; $I(\cdot)$ stand for the expected Fisher information matrix and the symbol $|\cdot|$
denotes the determinant. Although the prior is endowed with some frequentist properties like matching and asymptotic
minimal information \citep[Chapter 3]{robert:2001}, it does not constitute the ultimate answer to the selection of prior
distributions in non-informative settings and there exist many alternative such as reference priors
\citep{berger:bernardo:sun:2009}, maximum entropy priors \citep{rissanen:2012}, matching priors
\citep{ghosh:carlin:srivastava:1995}, and other proposals \citep{kass:wasserman:1996}. In most settings Jeffreys priors
are improper, which may explain for their conspicuous absence in the domain of mixture estimation, since the latter
prohibits the use of most improper priors by allowing any subset of components to go ``empty" with positive probability.
That is, the likelihood of a mixture model can always be decomposed as a sum over all possible partitions of the data
into $k$ groups at most, where $k$ is the number of components of the mixture. This means that there are terms in this
sum where no observation from the sample brings any amount of information about the parameters of a specific component. 

Approximations of the Jeffreys prior in the setting of mixtures can be found, e.g., in
\cite{figueiredo:jain:2002}, where the Authors revert to independent Jeffreys priors on the components of the mixture. This induces the same negative side-effect as with other independent priors, namely an impossibility to handle improper
priors.

\cite{rubio:steel:2014} provide a closed-form expression for the Jeffreys prior for a location-scale mixture with two components. The family of distributions they consider is
$$
\dfrac{2\epsilon}{\sigma_1}f\left(\frac{x-\mu}{\sigma_1}\right)\mathbb{I}_{x<\mu}+
\dfrac{2(1-\epsilon)}{\sigma_2}f\left(\frac{x-\mu}{\sigma_2}\right) \mathbb{I}_{x>\mu}
$$
(which thus hardly qualifies as a mixture, due to the orthogonality in the supports of both components that allows to identify which component each observation is issued from). The factor $2$ in the fraction
is due to the assumption of symmetry 
around zero for the density $f$. For this specific model, if we impose that the weight $\epsilon$ is a function of the variance parameters,
$
\epsilon=\nicefrac{\sigma_1}{\sigma_1+\sigma_2},
$
the Jeffreys prior is given by
$
\pi(\mu,\sigma_1,\sigma_2) \propto \nicefrac{1}{\sigma_1\sigma_2\{\sigma_1+\sigma_2\}}.
$
However, in this setting, \cite{rubio:steel:2014} demonstrate that the posterior associated with the (regular)
Jeffreys prior is improper, hence not relevant for conducting inference. (One may wonder at the pertinence of
a Fisher information in this model, given that the likelihood is not differentiable in $\mu$.)
\cite{rubio:steel:2014} also consider alternatives to the genuine Jeffreys prior, either by reducing the range or even
the number of parameters, or by building a product of conditional priors. They further consider so-called non-objective priors that are only relevant to the specific
case of the above mixture.

Another obvious explanation for the absence of Jeffreys priors is computational, namely the closed-form
derivation of the Fisher information matrix is almost inevitably impossible.
The reason is that integrals of the form

\begin{equation*}
-\int_{\mathcal{X}} \frac{\partial^2 \log \left[\sum_{h=1}^k p_h\,f(x|\theta_h)\right]}{\partial \theta_i \partial \theta_j}\left[\sum_{h=1}^k p_h\,f(x|\theta_h)\right]^{-1} d x
\end{equation*}

\noindent (in the special case of component densities with a single parameter) cannot be computed analytically.
We derive an approximation of the elements of the Fisher information matrix based on Riemann sums. 
The resulting computational expense is of order $\mathrm{O}(d^2)$ if $d$ is the total number of (independent) parameters.
Since the elements of the information matrix usually are ratios between the component densities and the mixture
density, there may be difficulties with non-probabilistic methods of integration. Here, we use Riemann sums (with $550$
points) when the component standard deviations are sufficiently large, as they produce stable results, and Monte Carlo 
integration (with sample sizes of $1500$) when they are small. In the latter case, the variability of MCMC results seems 
to decrease as $\sigma_i$ approaches $0$. 

\vspace{0.3cm}
\section{Properness for prior and posterior distributions}
\label{sec:prosper}

Unsurprisingly, most Jeffreys priors associated with mixture models are improper, the exception being when only the
weights of the mixture are unknown, as already demonstrated in \cite{bernardo:giron:1988}. 

We will characterize properness and improperness of Jeffreys priors and derived posteriors, when some or all of the parameters of distributions from location-scale families are unknown. These results are established both analytically and via simulations, with sufficiently large Monte Carlo experiments checking the behavior of the approximated posterior distribution.

\subsection{Characterization of Jeffreys priors}
\label{subsec:priors}

\subsubsection{Weights of mixture unknown}

A representation of the Jeffreys prior and the derived posterior distribution for the weights of a 3-component mixture
model is given in Figure \ref{weights-priorpost}: the prior distribution is much more concentrated around extreme
values in the support, i.e., it is a prior distribution conservative in the number of important components. 

\begin{figure}
\centering
\includegraphics[width=6.5cm, height=7.5cm]{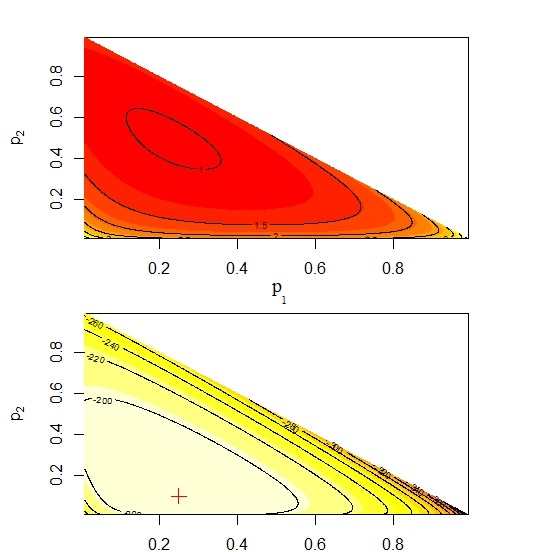}
\caption{Approximations (on a grid of values) of the Jeffreys prior (on the log-scale) when only the weights of a
Gaussian mixture model with 3-components are unknown (on the top) and of the derived posterior distribution (with known
means equal to -1, 0 and 2 respectively and known standard devitations equal to 1, 5 and 0.5 respectively). The red
cross represents the true values.} 
\label{weights-priorpost}
\end{figure}

\begin{lemma} 
\label{lem:weights}
When the weights $p_i$ are the only unknown parameters in \eqref{eq:theMix}, the corresponding Jeffreys prior 
is proper. 
\end{lemma}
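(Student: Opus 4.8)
The plan is to prove properness directly, by showing that the Jeffreys density $|I(p)|^{\oh}$ has finite integral over its support. The one structural feature that makes this case tractable---and that will fail for the location/scale parametrizations treated later---is that the weight vector lives on the unit simplex $\Delta_{k-1}=\{p:\ p_i\ge 0,\ \sum_i p_i=1\}$, a compact set of finite volume. Writing $f_i:=f(x|\theta_i)$ and $m(x):=\sum_{i=1}^k p_i f_i$ for the mixture density in \eqref{eq:theMix}, and using $(p_1,\dots,p_{k-1})$ as free coordinates with $p_k=1-\sum_{j<k}p_j$, the score in $p_j$ is $\partial_{p_j}\log m=(f_j-f_k)/m$. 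Hence the $(k-1)\times(k-1)$ Fisher information factors as $I(p)=B^{\mathsf T}G(p)B$, where $B$ is the $k\times(k-1)$ matrix whose columns are $e_j-e_k$ (standard basis vectors of $\mathbb{R}^k$) and
\[
G_{i\ell}(p)=\int_{\mathcal X}\frac{f_i\,f_\ell}{m}\,\dd x
\]
is the symmetric, positive semidefinite $k\times k$ Gram matrix of the ratios $f_i/m$ in $L^2(m\,\dd x)$.

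The key step, which I would isolate as a preliminary claim, is the determinant identity $|I(p)|=|G(p)|$. To obtain it I would append the column $p$ to $B$ and exploit two elementary facts: $G(p)\,p=\mathbf 1$ (since $\sum_\ell p_\ell f_\ell=m$, each entry of $G p$ integrates $f_i$ to $1$) and $\mathbf 1^{\mathsf T}B=0$ (the columns of $B$ sum to zero). These give
\[
[\,B\mid p\,]^{\mathsf T}\,G\,[\,B\mid p\,]=\begin{pmatrix}I(p)&0\\ 0&1\end{pmatrix},
\]
while a one-line column reduction shows $|[\,B\mid p\,]|=1$, so taking determinants yields $|I(p)|=|G(p)|$. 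This reduction is the whole point: it replaces the determinant of the \emph{asymmetric} reduced matrix (which privileges the eliminated component $k$) by that of the \emph{symmetric} Gram matrix, and it is exactly this symmetry that produces the correct integrability exponents below.

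It then remains to bound $|G(p)|^{\oh}$ and integrate. By Hadamard's inequality for positive semidefinite matrices together with the trivial bound $m\ge p_i f_i$,
\[
|G(p)|\le\prod_{i=1}^k G_{ii}(p)=\prod_{i=1}^k\int_{\mathcal X}\frac{f_i^2}{m}\,\dd x\le\prod_{i=1}^k\frac{1}{p_i},
\]
so $|I(p)|^{\oh}\le\prod_i p_i^{-\oh}$. Since each exponent $-\oh$ corresponds to a Dirichlet shape parameter $\oh>0$, the Dirichlet integral converges,
\[
\int_{\Delta_{k-1}}\prod_{i=1}^k p_i^{-\oh}\,\dd p=\frac{\Gamma(\oh)^k}{\Gamma(k/2)}<\infty,
\]
and properness follows. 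I expect the only genuine obstacle to be the determinant identity of the second paragraph: applied naively to the reduced matrix, Hadamard gives factors $(p_j^{-1}+p_k^{-1})$ and hence a power of $p_k$ as large as $(k-1)/2\ge 1$, which is \emph{not} integrable, so one must resist that route. Passing to the symmetric matrix $G$ is precisely what defeats this spurious boundary singularity and recovers the integrable $\mathrm{Dirichlet}(\oh,\dots,\oh)$ envelope. The argument uses nothing about $f$ beyond $\int f_i=1$, thereby reproducing transparently the properness asserted in \cite{bernardo:giron:1988}.
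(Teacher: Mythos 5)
Your proof is correct, and it takes a genuinely different route from the paper's at the decisive step. Both arguments share the same skeleton---Hadamard's inequality plus the crude bound $m(x)\ge p_i f_i(x)$---but the paper applies Hadamard directly to the reduced $(k-1)\times(k-1)$ matrix with entries $\int (f_i-f_k)(f_j-f_k)/m\,\mathrm{d}x$, bounding each diagonal entry by $2/p_i$ after asserting, in effect, that $|p_i(f_i-f_k)|/m\le 1$ pointwise. That assertion fails on the region $\{f_i\le f_k\}$ (take $f_i(x)\approx 0$, $p_i$ close to $1$ and $p_k$ small, so that $p_i f_k(x)>m(x)$); the correct pointwise bound there is $1/p_k$, so the reduced diagonal entries are only $O(p_i^{-1}+p_k^{-1})$, and---exactly as you observe in your closing caveat---Hadamard on the reduced matrix then produces a $p_k^{-(k-1)/2}$ boundary singularity that is integrable only for $k=2$. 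Your determinant identity $|I(p)|=|G(p)|$, obtained by bordering $B$ with the column $p$ and using $Gp=\mathbf{1}$, $\mathbf{1}^{\mathsf T}B=0$ and $\det[\,B\mid p\,]=1$, is precisely the symmetrization needed to transfer Hadamard to the Gram matrix, whose diagonal entries $\int f_i^2/m\,\mathrm{d}x\le p_i^{-1}$ give the integrable envelope $\prod_{i=1}^k p_i^{-1/2}$ on the compact simplex for every $k$. In short, your argument supplies the missing step that turns the paper's claimed bound $\pi^J(\mathbf{p})\lesssim\prod_{i=1}^k p_i^{-1/2}$ (which, as written there, is a product over $k$ factors extracted from a $(k-1)$-dimensional matrix) into a valid, distribution-free proof of properness; the paper's route is shorter but, once its pointwise bound is repaired, only covers $k=2$.
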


Figure \ref{weights-boxplots} shows the boxplots for the means of the approximated posterior distribution for the weights of a three-component Gaussian mixture model.

\begin{figure}
\centering
\includegraphics[width=6.5cm, height=7.5cm]{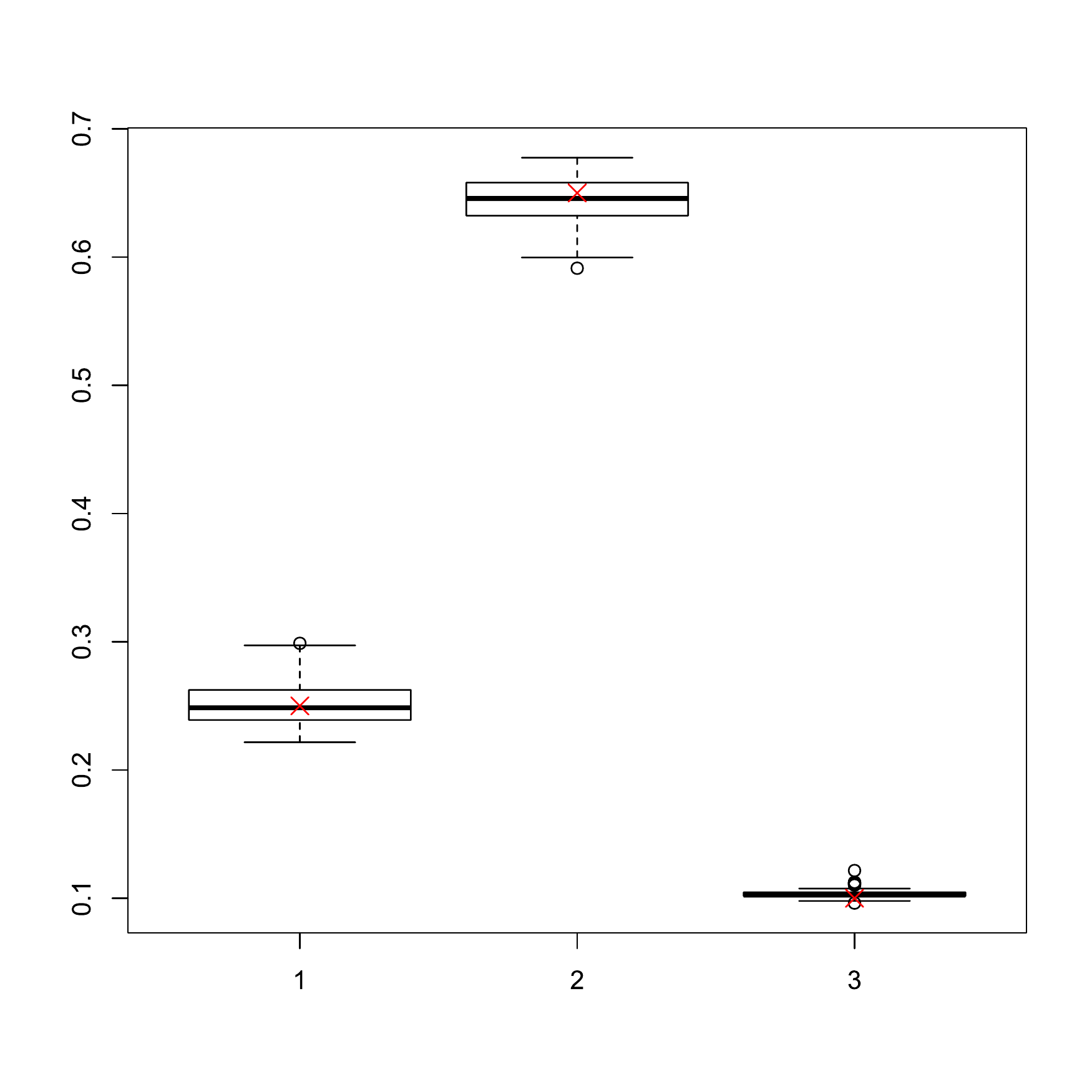}
\caption{Boxplots of the estimated means of the three-component mixture model
$0.25\mathcal{N}(-10,1)+0.65\mathcal{N}(0,5) +0.10\mathcal{N}(15,0.5)$ for 50 simulated samples of size $100$, obtained
via MCMC with $10^5$ simulations. The red crosses represent the true values of the weights.}
\label{weights-boxplots}
\end{figure}

\begin{proof}
The generic element of the Fisher information matrix is (for $i,j=\{1,\ldots,k-1\}$)

\begin{equation}
\int_\mathcal{X} \frac{(f_i(x)-f_k(x))(f_j(x)-f_k(x))}{\sum_{l=1}^k p_l f_l(x)}
d x
\label{eq:ww-prior}
\end{equation}
when we consider the parametrization in $(p_1,\ldots,p_{k-1})$, with
$$
p_k=1-p_1-\cdots-p_{k-1}\,.
$$

We remind that, since the Fisher information matrix is a positive semi-definite, its determinant is bounded by the product of the terms in the diagonal, thanks to the Hadamard's inequality.
 Therefore, we may consider the diagonal term,
\begin{align*}
\int_\mathcal{X} \frac{(f_i(x)-f_k(x))^2}{\sum\limits_{l=1}^k p_l f_l(x)} d x
&= \int_{f_i(x)\ge f_k(x)} \frac{(f_i(x)-f_k(x))^2}{\sum\limits_{l=1}^k p_l f_l(x)} d x\\
&\quad + \int_{f_i(x)\le f_k(x)} \frac{|(f_i(x)-f_k(x))^2|}{\sum\limits_{l=1}^k p_l f_l(x)} d x\\
&= \int_{f_i(x)\ge f_k(x)} \frac{f_i(x)-f_k(x)}{\sum\limits_{l=1}^k p_l f_l(x)} \{f_i(x)-f_k(x)\}d x\\
&\quad + \int_{f_i(x)\le f_k(x)} \Big| \frac{f_i(x)-f_k(x)}{\sum\limits_{l=1}^k p_l f_l(x)} \Big| |f_i(x)-f_k(x)| d x\\
&= \frac{1}{p_i}\,\int_{f_i\ge f_k} \frac{p_i\{f_i(x)-f_k(x)\}}{p_i\{f_i(x)-f_k(x)\}+\sum\limits_{l\ne i,k} p_l
\{f_l(x)-f_k(x)\}+f_k(x)}\\
&\qquad\qquad \{f_i(x)-f_k(x)\}d x\\
&\quad + \frac{1}{p_i}\,\int_{f_i\le f_k} \Big|\frac{p_i\{f_i(x)-f_k(x)\}}{p_i\{f_i(x)-f_k(x)\}+\sum\limits_{l\ne i,k} p_l
\{f_l(x)-f_k(x)\}+f_k(x)} \Big| \\
&\qquad\qquad |f_i(x)-f_k(x)| d x\\
&\le \frac{1}{p_i}\int_{f_i(x)\ge f_k(x)} \{f_i(x)-f_k(x)\}d x + \frac{1}{p_i}\int_{f_i(x)\le f_k(x)} | f_i(x)-f_k(x) |d x\\
&= \frac{2}{p_i}\int_{f_i(x)\ge f_k(x)} \{f_i(x)-f_k(x)\}d x
\end{align*}
since both integrals are equal.

Therefore, the Jeffreys prior will be bounded by the square root of the product of the terms in the diagonal of the Fisher information matrix

\begin{equation*}
\pi^J(\mathbf{p}) \propto \prod_{i=1}^k p_i^{-\frac{1}{2}}
\end{equation*}

\noindent which is a generalization to $k$ components of the prior provided in \cite{bernardo:giron:1988} for $k=2$ (however, \cite{bernardo:giron:1988} find the reference prior for the limiting case when all the components have pairwise disjoint supports, while for the opposite limiting case where all the components converge to the same distribution, the Jeffrey's prior is the uniform distribution on the $k$-dimensional simplex). 
\end{proof}

This reasoning leads \cite{bernardo:giron:1988} to conclude that the usual $\mathcal{D}(\lambda_1,\ldots,\lambda_k)$
Dirichlet prior with $\lambda_i \in [\nicefrac{1}{2},1]$ for $\forall i=1,\cdots,k$ seems to be a reasonable
approximation. They also prove that the Jeffreys prior for the weights $p_i$ is convex, with a argument based on the sign of the second derivative. 

As a remark, the configuration shown in proof of Lemma \ref{lem:weights} is compatible with the Dirichlet configuration of the prior proposed by \cite{rousseau:mengersen:2011}. 

The shape of the Jeffreys prior for the weights of a mixture model depends on the type of the components. 
Figure \ref{weights-GMM}, \ref{weights-GtMM} and \ref{weights-GtMM-df} show the form of the Jeffreys prior for a
2-component mixture model for different choices of components. It is always concentrated around the extreme values of the
support, however the amount of concentration around $0$ or $1$ depends on the information brought by each
component. In particular, Figure \ref{weights-GMM} shows that the prior is much more symmetric as there is symmetry
between the variances of the distribution components, while Figure \ref{weights-GtMM} shows that the prior is much more
concentrated around 1 for the weight relative to the normal component if the second component is a Student t
distribution. 

Finally Figure \ref{weights-GtMM-df} shows the behavior of the Jeffreys prior when the first component is Gaussian and the second is a Student t and the number of degrees of freedom is increasing. As expected, as the Student t is approaching a normal distribution, the Jeffreys prior becomes more and more symmetric.

\begin{figure}
\centering
\includegraphics[width=6.5cm, height=7.5cm]{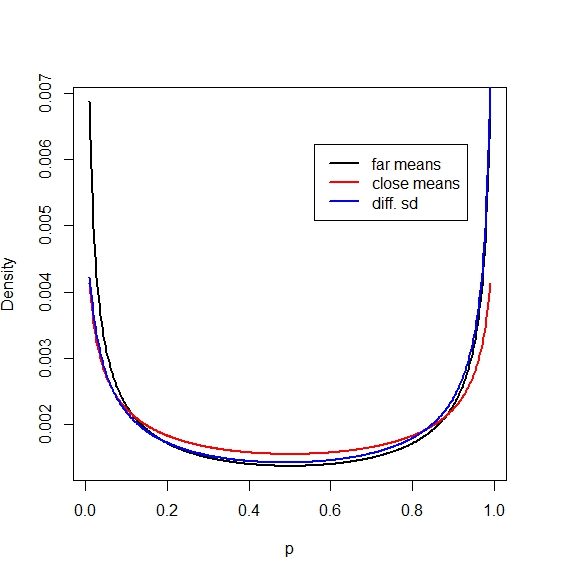}
\caption{Approximations of the marginal prior distributions for the first weight of a 2-component Gaussian mixture model, $p\,\mathcal{N}(-10,1)+(1-p)\,\mathcal{N}(10,1)$ (black), $p\,\mathcal{N}(-1,1)+(1-p)\,\mathcal{N}(1,1)$ (red) and $p\,\mathcal{N}(-10,1)+(1-p)\,\mathcal{N}(10,10)$ (blue).}
\label{weights-GMM}
\end{figure}

\begin{figure}
\centering
\includegraphics[width=6.5cm, height=7.5cm]{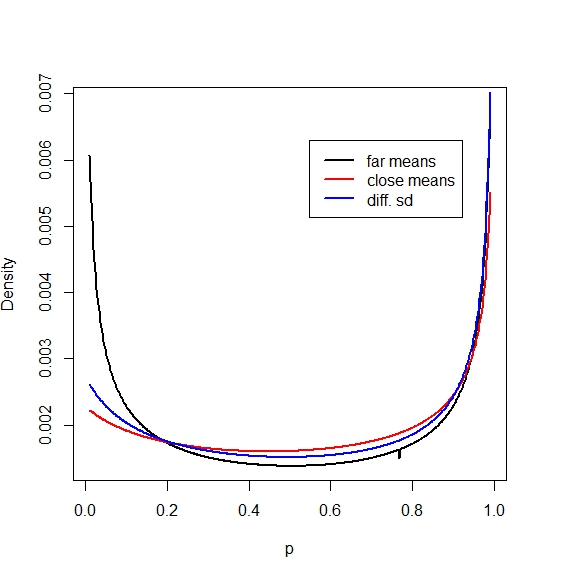}
\caption{Approximations of the marginal prior distributions for the first weight of a 2-component mixture model where the first component is Gaussian and the second is Student t, $p\,\mathcal{N}(-10,1)+(1-p)\,\mathrm{t}(df=1,10,1)$ (black), $p\,\mathcal{N}(-1,1)+(1-p)\,\mathrm{t}(df=1,1,1)$ (red) and $p\,\mathcal{N}(-10,1)+(1-p)\,\mathrm{t}(df=1,10,10)$ (blue).}
\label{weights-GtMM}
\end{figure}

\begin{figure}
\centering
\includegraphics[width=6.5cm, height=7.5cm]{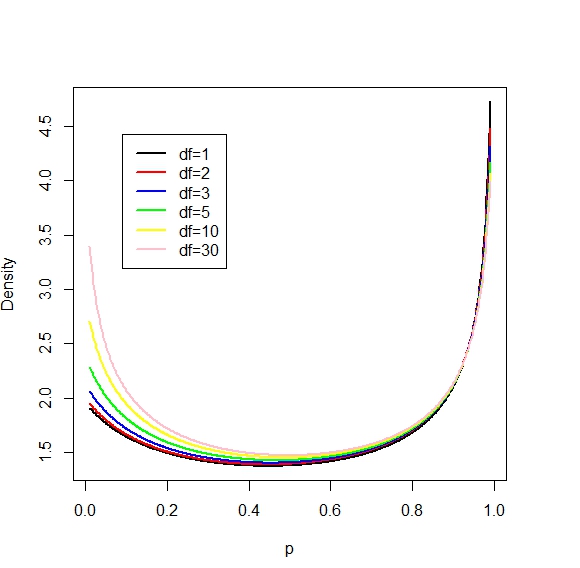}
\caption{Approximations of the marginal prior distributions for the first weight of a 2-component mixture model where the first component is Gaussian and the second is Student t with an increasing number of degrees of freedom.}
\label{weights-GtMM-df}
\end{figure}

\subsubsection{Location and scale parameters of a mixture model unknown}

If the components of the mixture model \eqref{eq:theMix} are distributions from a location-scale family and the location or scale parameters of the mixture components are unknown, this turns the mixture itself into a location-scale model. As a result, model \eqref{eq:theMix} may be reparametrized by following \cite{mengersen:robert:1996}, in the case of Gaussian components

\begin{equation}
\label{reparMix}
p\mathcal{N}(\mu,\tau^2)+(1-p)\mathcal{N}(\mu+\tau\delta,\tau^2\sigma^2)
\end{equation}

\noindent namely using a reference location $\mu$ and a reference scale $\tau$ (which may be, for instance, the location and scale of a specific component). Equation \eqref{reparMix} may be generalized to the case of $k$ components as

\begin{align}
\label{eq:k_reparMix}
p\mathcal{N}(\mu,\tau^2)&+\sum_{i=1}^{k-2} (1-p) (1-q_1) \cdots (1-q_{i-1})q_i \mathcal{N}(\mu+\tau\theta_1+\cdots+\tau\cdots\sigma_{i-1}\theta_i,\tau^2\sigma_1^2\cdots\sigma_i^2) \nonumber \\
&\qquad {} + (1-p)(1-q_1)\cdots (1-q_{k-2})\mathcal{N}(\mu+\tau\theta_1+\cdots+\tau\cdots\sigma_{k-2}\theta_{k-1},\tau^2\sigma_1^2\cdots\sigma_{k-1}^2)
\end{align}

In this way, the mixture model is more cleary a location-scale model, which implies that the Jeffreys prior is flat in the location and powered as $\tau^{-d/2}$ if $d$ is the total number of parameters of the components, respectively \citep[Chapter 3]{robert:2001}, as we will see in the following.

\begin{lemma}\label{lem:meansd-prior}
If the parameters of the components of a mixture model are either location or scale parameters, the corresponding Jeffreys prior is improper. 
\end{lemma}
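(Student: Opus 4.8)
The plan is to exploit the location--scale reparametrization \eqref{reparMix}--\eqref{eq:k_reparMix}, which recasts the whole mixture as a location--scale family with a single reference location $\mu\in\mathbb{R}$ and a single reference scale $\tau>0$, all the remaining parameters (the weights together with the relative shifts $\theta_j$ and relative scales $\sigma_j$) being dimensionless shape parameters left invariant by the affine group $x\mapsto ax+b$. Writing the reparametrized density as $m(x\mid\mu,\tau,\boldsymbol\eta)=\tau^{-1}g\!\left(\tfrac{x-\mu}{\tau}\mid\boldsymbol\eta\right)$, where $\boldsymbol\eta$ collects the shape parameters and $g$ is the standardized mixture, I would compute the Fisher information and read off its scaling in $(\mu,\tau)$, recovering the ``flat in the location, power of $\tau$'' form already recalled above.

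First I would record the scores: $\partial_\mu\log m$ and $\partial_\tau\log m$ each carry a factor $\tau^{-1}$, whereas $\partial_{\eta_j}\log m$ carries none. Performing the change of variables $z=(x-\mu)/\tau$ in every entry $I_{ab}=\int(\partial_a\log m)(\partial_b\log m)\,m\,\dd x$ makes $\mu$ disappear entirely and turns each integral into a quantity depending only on $\boldsymbol\eta$, multiplied by an explicit power of $\tau$: the $(\mu,\mu),(\mu,\tau),(\tau,\tau)$ entries scale as $\tau^{-2}$, the location/scale--by--shape entries as $\tau^{-1}$, and the shape block as $\tau^{0}$. Factoring $\tau^{-1}$ out of the two reference rows and the two reference columns then gives $|I(\mu,\tau,\boldsymbol\eta)|=\tau^{-4}\,|\widetilde I(\boldsymbol\eta)|$, so that
\begin{equation*}
\pi^{\mathrm J}(\mu,\tau,\boldsymbol\eta)\propto|I|^{\oh}\propto\tau^{-2}\,|\widetilde I(\boldsymbol\eta)|^{\oh},
\end{equation*}
the two reference coordinates $(\mu,\tau)$ each contributing a factor $\tau^{-1}$, and the result being, crucially, independent of $\mu$.

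From here improperness is immediate. Since the prior does not depend on $\mu$ and $\mu$ ranges over all of $\mathbb{R}$, its integral over the parameter space contains the divergent factor $\int_{\mathbb{R}}\dd\mu=\infty$ (and, in the scale--only version with the location held fixed, the equally divergent $\int_0^\infty\tau^{-a}\,\dd\tau=\infty$, valid for every real $a$ because the integrand is never simultaneously integrable at $0$ and at $\infty$). As long as $|\widetilde I(\boldsymbol\eta)|^{\oh}$ is positive on a set of $\boldsymbol\eta$ of positive measure, the total mass is infinite and the Jeffreys prior is improper.

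The hard part will be the two analytic points hidden in the factorization. One must first guarantee that the Fisher information actually exists, i.e.\ that the ratios of the component densities to the mixture density appearing in the scores are square-integrable against $m$; this is the same integrability issue flagged for the numerical approximation in Section~\ref{sec:jeffreys}, and it has to hold uniformly enough for the $\boldsymbol\eta$-integrals defining $\widetilde I(\boldsymbol\eta)$ to be finite. Second, one must verify that the cross terms between the reference scale and the shape parameters do not conspire to reintroduce a dependence on $\mu$ in $|I|$ after the block factorization, and that $|\widetilde I(\boldsymbol\eta)|$ does not vanish identically; both follow from the affine invariance of $g$, but making the block-determinant bookkeeping rigorous for general $k$ is the step I would spend the most care on.
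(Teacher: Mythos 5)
Your location argument and your joint location--scale factorization are both correct. Translation invariance makes every Fisher entry free of the reference location, so the prior is constant in $\mu$ and has infinite mass; this is exactly the paper's own argument (change of variable $t=x-\mu_i$, prior a function of the differences $\delta_j$ only). And when $\mu$ and $\tau$ are \emph{both} unknown, the parametrization \eqref{reparMix} does make the mixture an honest location--scale family (the relative shifts are dimensionless), your entry-wise scalings are right, and $|I(\mu,\tau,\boldsymbol\eta)|=\tau^{-4}|\widetilde I(\boldsymbol\eta)|$ follows from writing $I=D\widetilde I(\boldsymbol\eta)D$ with $D=\mathrm{diag}(\tau^{-1},\tau^{-1},1,\dots,1)$; that part is in fact cleaner than the paper's block-determinant computation of the same fact.

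The genuine gap is the scale-only case, which the lemma covers, which the paper proves separately, and which you dispose of in a parenthesis. When the component means are known and \emph{distinct} and only the scales are unknown, the mixture is \emph{not} a scale family: rescaling the data about any centre moves the fixed means (e.g.\ $x\mapsto\mu_1+c(x-\mu_1)$ sends the second component's centre to $\mu_1+c(\mu_2-\mu_1)\neq\mu_2$), so the family is not closed under the group action your factorization relies on. Concretely, after setting $\sigma_1=\tau$, $\sigma_2=\tau\sigma$, the Fisher integrals still depend on $\tau$ through the ratio $(\mu_1-\mu_2)/\tau$ --- see the factors $\exp\{-(z\tau+\mu_1-\mu_2)^2/2\tau^2\sigma^2\}$ in \eqref{eq:scaleprior} --- so $\pi^{\mathrm J}$ is \emph{not} of the form $\tau^{-a}h(\sigma)$, and your observation that $\int_0^\infty\tau^{-a}\,\dd\tau=\infty$ for every real $a$ does not conclude. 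What is missing is a lower bound controlling the residual $\tau$-dependence: for instance, since $(\mu_1-\mu_2)/\tau\to 0$ as $\tau\to\infty$, dominated convergence shows the internal integrals tend to strictly positive limits, so on a region $\{\tau\ge T_0\}$ the prior is bounded below by $c\,(\tau\sigma)^{-1}$ with $c>0$, whose integral diverges. This is precisely the analytic step the paper's proof supplies when it argues that the inner integrals in \eqref{eq:scaleprior} converge and that the outer behavior is governed by $1/(\tau\sigma)$; without it, or an equivalent estimate, improperness in the scale-only case is not established.
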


In the proof of Lemma \ref{lem:meansd-prior}, we will consider a Gaussian mixture model and then extend the results to the general situation of components from a location-scale family.
 
\subsubsection*{Unknown location parameters}

\begin{proof}
We first consider the case where the means are the only unknown parameters of a Gaussian mixture model

\begin{equation*}
g_X(x)=\sum_{l=1}^k p_l \mathfrak{n}(x|\mu_l,\sigma_l^2)
\end{equation*}

The generic elements of the expected Fisher information matrix are, in the case of diagonal and off-diagonal terms respectively:

$$
\mathbb{E}\left[- \frac{\partial^2 \log g_X(X)}{\partial \mu_i^2}\right]=\frac{p_i^2}{\sigma_i^4} \bigintsss_{-\infty}^\infty
\frac{\left[ (x-\mu_i) \mathfrak{n}(x|\mu_i,\sigma_i^2)\right]^2}{\sum_{l=1}^k p_l \mathfrak{n}(x|\mu_l,\sigma_l^2)}
d x
$$

$$
\mathbb{E}\left[- \frac{\partial^2 \log g_X(X)}{\partial \mu_i \partial \mu_j}\right]=\frac{p_i p_j}{\sigma_i^2
\sigma_j^2} \bigintsss_{-\infty}^\infty \frac{(x-\mu_i) \mathfrak{n}(x|\mu_i,\sigma_i^2)(x-\mu_j)
\mathfrak{n}(x|\mu_j,\sigma_j^2) }{\sum_{l=1}^k p_l \mathfrak{n}(x|\mu_l,\sigma_l^2)}  d x
$$

Now, consider the change of variable $t=x-\mu_i$ in the above integrals, where $\mu_i$ is thus the mean of the $i$-th Gaussian component
($i\in\{1,\cdots,k\}$). The above integrals are then equal to
\begin{align*}
\mathbb{E}\left[- \frac{\partial^2 \log g_X(X)}{\partial \mu_j^2}\right] &= \frac{p_j^2}{\sigma_j^4} \bigintsss_{-\infty}^\infty
\frac{\left[ (t-\mu_j+\mu_i) \mathfrak{n}(t|\mu_j-\mu_i,\sigma_i^2)\right]^2}{\sum_{l=1}^k p_l
\mathfrak{n}(t|\mu_l-\mu_i,\sigma_l^2)} d x\\
\mathbb{E}\left[- \frac{\partial^2 \log g_X(X)}{\partial \mu_j \partial \mu_m}\right] &= \frac{p_j p_m}{\sigma_j^2
\sigma_m^2} \bigintsss_{-\infty}^\infty \frac{(t-\mu_j+\mu_i) \mathfrak{n}(x|\mu_j,\sigma_j^2)(t-\mu_m+\mu_i)
\mathfrak{n}(t|\mu_m-\mu_i,\sigma_m^2) }{\sum_{l=1}^k p_l \mathfrak{n}(t|\mu_l-\mu_i,\sigma_l^2)}  d x\\
\label{eq:means-prior}
\end{align*}
Therefore, the terms in the Fisher information only depend on the differences $\delta_j=\mu_i-\mu_j$ for $j \in
\{1,\cdots,k \}$. This implies that the Jeffreys prior is improper since a reparametrization in
($\mu_i,\mathbf{\delta}$) shows the prior does not depend on $\mu_i$.

This feature will reappear whenever the location parameters are unknown.

When considering the general case of components from a location-scale family, this feature of improperness of the Jeffreys prior distribution is still valid, because, once reference location-scale parameters are chosen, the mixture model may be rewritten as

\begin{equation}
\label{eq:mix-locscale}
p_1 f_1(x|\mu,\tau)+\sum_{i=2}^k p_i f_i(\frac{a_i+ x}{b_i} |\mu,\tau,a_i,b_i).
\end{equation}

Then the second derivatives of the logarithm of model \eqref{eq:mix-locscale} behave as the ones we have derived for the Gaussian case, i.e. they will depend on the differences between each location parameter and the reference one, but not on the reference location itself. Then the Jeffreys prior will be constant with respect to the global location parameter. 

\end{proof}

When considering the reparametrization \eqref{reparMix}, the Jeffreys prior for $\delta$ for a fix $\mu$ has the form:

\begin{equation*}
\pi^J(\delta|\mu)\propto \left[
\int_\mathfrak{X}\frac{\left[{(1-p)x\exp\{-\frac{x^2}{2}\}}\right]^2}{{p\sigma\exp\{-\frac{\sigma^2(x+\frac{\delta}{\sigma\tau})^2}{2}\}}+{(1-p)\exp\{-\frac{x^2}{2}\}}}
d x \right]^{\frac{1}{2}}
\end{equation*}

\noindent and the following result may be demonstrated.

\begin{lemma} 
The Jeffreys prior of $\delta$ conditional on $\mu$ when only the location parameters are unknown is improper.
\end{lemma}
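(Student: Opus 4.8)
The plan is to prove improperness directly, by showing that $\int_{\mathbb{R}}\pi^J(\delta\mid\mu)\,\dd\delta=\infty$. Since $\pi^J(\delta\mid\mu)$ is the square root of the inner integral, call it $J(\delta)$, it suffices to understand the tail behaviour of $J(\delta)$ as $\delta\to\pm\infty$ and to show that $J(\delta)$ does \emph{not} decay but instead converges to a strictly positive limit. If that holds, then $\pi^J(\delta\mid\mu)$ is bounded below by a positive constant for all sufficiently large $|\delta|$, and a positive constant is not integrable over the line.

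First I would isolate the role of $\delta$ in the integrand. The only $\delta$-dependence sits in the first summand of the denominator,
$$
p\sigma\exp\Big\{-\tfrac{\sigma^2}{2}\big(x+\tfrac{\delta}{\sigma\tau}\big)^2\Big\},
$$
a Gaussian bump in $x$ centred at $x=-\delta/(\sigma\tau)$. For each fixed $x$ this summand tends to $0$ as $|\delta|\to\infty$, so the integrand converges pointwise to $(1-p)\,x^2\exp\{-x^2/2\}$. Next I would justify passing the limit inside the integral by dominated convergence: bounding the denominator below by its second, positive, term gives for every $\delta$ the uniform envelope
$$
0\le \frac{(1-p)^2 x^2 \exp\{-x^2\}}{p\sigma\exp\{-\tfrac{\sigma^2}{2}(x+\tfrac{\delta}{\sigma\tau})^2\}+(1-p)\exp\{-\tfrac{x^2}{2}\}}\le (1-p)\,x^2\exp\{-\tfrac{x^2}{2}\},
$$
whose right-hand side is integrable with $\int_{\mathbb{R}} x^2\exp\{-x^2/2\}\,\dd x=\sqrt{2\pi}$. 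The same envelope incidentally shows $J(\delta)<\infty$ for every $\delta$, so $\pi^J(\cdot\mid\mu)$ is well defined to begin with. Dominated convergence then yields $J(\delta)\to(1-p)\sqrt{2\pi}>0$ as $|\delta|\to\infty$, the same positive limit on both tails. Hence $\pi^J(\delta\mid\mu)\to\{(1-p)\sqrt{2\pi}\}^{1/2}>0$, so $\pi^J(\cdot\mid\mu)$ stays above a positive constant outside a compact set and $\int_{\mathbb{R}}\pi^J(\delta\mid\mu)\,\dd\delta=\infty$, which is the claim.

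The computation is short because the dominating function coincides with the pointwise limit, so the single point deserving care---and the step I would expect to be the genuine obstacle---is confirming that the region $x\approx-\delta/(\sigma\tau)$, where the displaced Gaussian in the denominator is not negligible, contributes nothing in the limit. This is exactly what the integrable envelope above controls: on that region the numerator $x^2\exp\{-x^2\}$ is already exponentially small, so no mass is carried off to infinity as $\delta$ grows, and the limit is governed entirely by the bulk near $x=0$. The intuition is that driving $\delta$ to infinity pushes one component's location away from the other, after which the displaced component ceases to inform the Fisher information for $\delta$, leaving a flat tail---the familiar mechanism behind improperness of location priors noted in the proof of Lemma~\ref{lem:meansd-prior}.
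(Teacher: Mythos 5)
Your proof is correct, and it takes a genuinely different route from the paper's. The paper works with the double integral $\int_\Delta\int_{\mathfrak{X}}(\cdot)\,\dd x\,\dd\delta$, i.e.\ with $\int_\Delta J(\delta)\,\dd\delta$ \emph{without} the square root: it exchanges the order of integration and shows that, for each fixed $x$, the inner $\delta$-integral diverges, since past some threshold $A$ the displaced Gaussian term in the denominator falls below any $\varepsilon>0$ and the integrand is then bounded below by the constant $1/(\varepsilon+d)$. You instead fix $\delta$ and apply dominated convergence in $x$, with an envelope $(1-p)x^2e^{-x^2/2}$ that coincides with the pointwise limit, to conclude that $J(\delta)$ itself tends to the positive constant $(1-p)\sqrt{2\pi}$, whence $\pi^J(\delta\mid\mu)=J(\delta)^{1/2}$ is bounded away from zero in the tails and cannot integrate. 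Both arguments rest on the same phenomenon (the $\delta$-dependent Gaussian bump washes out of the denominator as $|\delta|\to\infty$, leaving a flat tail), but your version buys something real: it handles the square root in the definition of the Jeffreys prior, which the paper's argument silently drops. Divergence of $\int J(\delta)\,\dd\delta$ does not in general imply divergence of $\int J(\delta)^{1/2}\,\dd\delta$ (a function can be non-integrable while its square root is integrable, via increasingly tall and narrow spikes), so strictly speaking the paper's proof needs the additional observation that $J(\delta)$ stays bounded away from zero --- which is precisely what your dominated-convergence (or even just Fatou) step supplies, together with the exact value of the limit. In this sense your proof is not merely an alternative but a tightening of the paper's.
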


\begin{proof}
The improperness of the conditional Jeffreys prior on $\delta$ depends (up to a constant) on the double integral

\begin{eqnarray*}
\int_\Delta
\int_\mathfrak{X} c \frac{\left[(1-p)x\exp\{-\frac{x^2}{2}\}\right]^2}{p\sigma\exp\{-\frac{\sigma^2(x+\frac{\delta}{\sigma\tau})^2}{2}\}+(1-p)\exp\{-\frac{x^2}{2}\}}
d x d\delta.
\end{eqnarray*}

\noindent The order of the integrals is allowed to be changed, then

\begin{eqnarray*}
\int_\mathfrak{X} x^2 \int_\Delta
\frac{\left[(1-p)\exp\{-\frac{x^2}{2}\}\right]^2}{p\sigma\exp\{-\frac{\sigma^2(x+\frac{\delta}{\sigma\tau})^2}{2}\}+(1-p)\exp\{-\frac{x^2}{2}\}}
d\delta d x 
\end{eqnarray*}

\noindent Define $f(x)=(1-p)e^{-\frac{x^2}{2}}=\frac{1}{d}$. Then

\begin{eqnarray*}
\int_\mathcal{X} x^2 \int_\Delta \frac{1}{d^2
p\sigma\exp\{-\frac{\sigma^2(x+\frac{\delta}{\sigma\tau})^2}{2}\}+d} d\delta d x 
\end{eqnarray*}

\noindent Since the behavior of $\left[d^2
p\sigma\exp\{-\frac{\sigma^2(x+\frac{\delta}{\sigma\tau})^2}{2}\}+d\right]$ depends on $\exp\{-\delta^2\}$
as $\delta$ goes to $\infty$, we have that 

\begin{equation*}
\int_{-\infty} ^{+\infty} \frac{1}{\exp\{-\delta^2\}+d} d\delta > \int_{A} ^{+\infty} \frac{1}{\exp\{-\delta^2\}+d} d\delta  
\end{equation*}

\noindent because the integrand function is positive. Then

\begin{equation*}
\int_{A} ^{+\infty} \frac{1}{\exp\{-\delta^2\}+d} d\delta > \int_{A} ^{+\infty} \frac{1}{\varepsilon+d} d\delta = +\infty
\end{equation*}

Therefore the conditional Jeffreys prior on $\delta$ is improper.

Figure \ref{fig:priorpost-diff} compares the
behavior of the prior and the resulting posterior distribution for the difference between the means of a two-component
Gaussian mixture model: the prior distribution is symmetric and it has different behaviors depending on the value of the
other parameters, but it always stabilizes for large enough values; the posterior distribution appears to always concentrate
around the true value. 

\begin{figure}
\centering
\includegraphics[width=6.5cm, height=7.5cm]{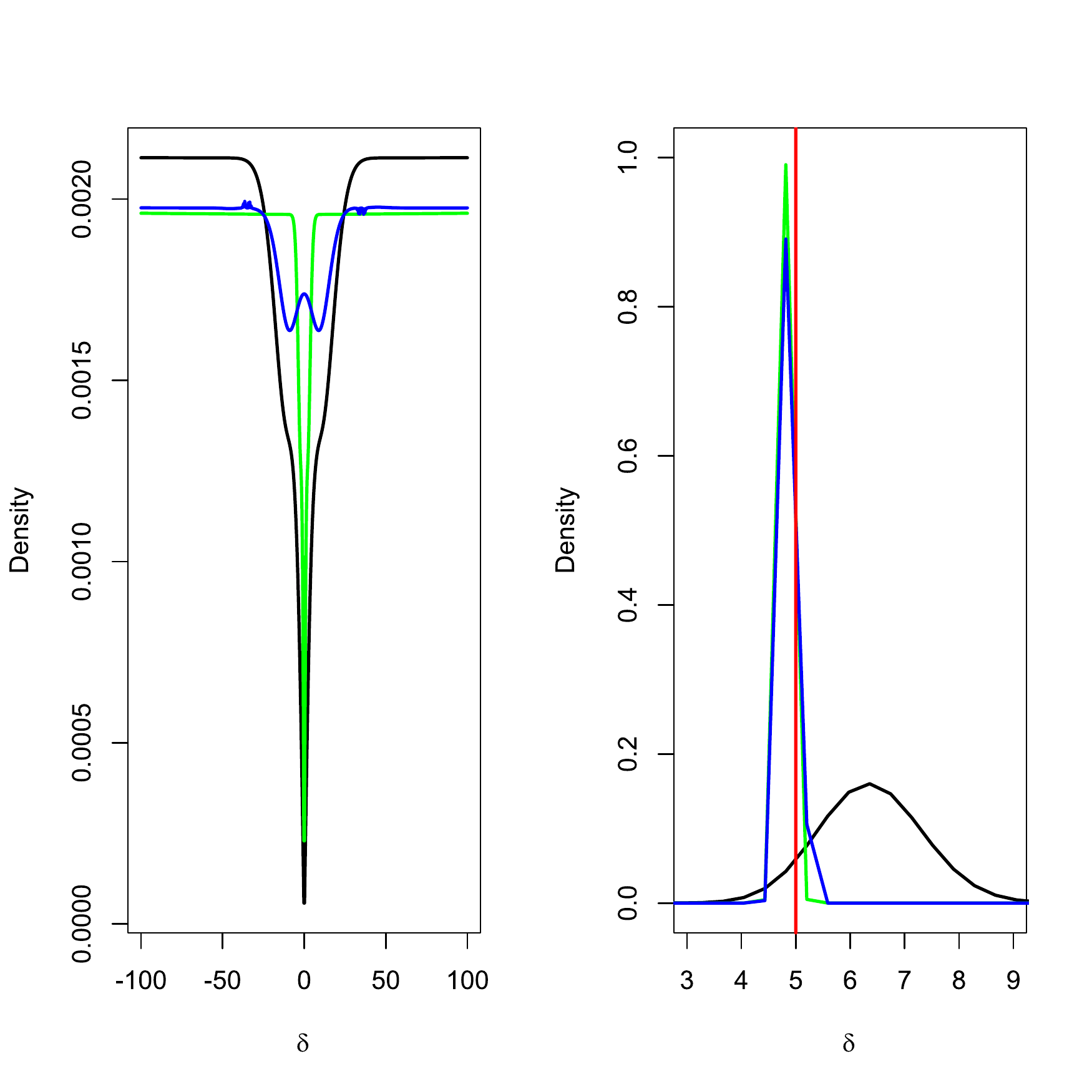}
\caption{Approximations (on a grid of values) of the Jeffreys prior (on the natural scale) of the difference between the means of a Gaussian mixture model with only the means unknown (left) and of the derived posterior distribution (on the right, the red line represents the true value), with known weights equal to $(0.5,0.5)$ (black lines), $(0.25,0.75)$ (green and blue lines) and known standard deviations equal to $(5,5)$ (black lines), $(1,1)$ (green lines) and $(7,1)$ (blue lines).} 
\label{fig:priorpost-diff}
\end{figure} 

\end{proof}

\subsubsection*{Unknown scale parameters}

Consider now the second case of the scale parameters being the only unknown parameters. 

\begin{proof}
First, consider a Gaussian mixture model and suppose the mixture model is composed by only two components; the Jeffreys prior for the scale parameters is defined as

\begin{align*}
\pi^J(\sigma_1,\sigma_2)&\propto \left\{ \frac{p_1^2}{\sigma_1^2} \bigintsss_{-\infty}^\infty
\frac{\left[ \left(\frac{(x-\mu_1)^2}{\sigma_1^2}-1\right) \mathfrak{n}(x|\mu_1,\sigma_1^2)\right]^2}{\sum_{l=1}^2 p_l \mathfrak{n}(x|\mu_l,\sigma_l^2)}
d x \right. \nonumber \\
					&\cdot \left.{} \frac{p_2^2}{\sigma_2^2} \bigintsss_{-\infty}^\infty
\frac{\left[ \left(\frac{(x-\mu_2)^2}{\sigma_2^2}-1\right) \mathfrak{n}(x|\mu_2,\sigma_2^2)\right]^2}{\sum_{l=1}^2 p_l \mathfrak{n}(x|\mu_l,\sigma_l^2)}
d x \right. \nonumber \\
					&- \left.{} \left[\frac{p_1 p_2}{\sigma_1 \sigma_2} \bigintsss_{-\infty}^\infty
\frac{ \left(\frac{(x-\mu_1)^2}{\sigma_1^2}-1\right) \left(\frac{(x-\mu_2)^2}{\sigma_2^2}-1\right) \mathfrak{n}(x|\mu_1,\sigma_1^2)\mathfrak{n}(x|\mu_2,\sigma_2^2)}{\sum_{l=1}^2 p_l \mathfrak{n}(x|\mu_l,\sigma_l^2)} d x \right]^2\right\}^\frac{1}{2} 
\end{align*}

Since the Fisher information matrix is positive definite, it is bounded by the product on the diagonal, then we can write:

\begin{align*}
\pi^J(\sigma_1,\sigma_2)&\leq c \frac{p_1 p_2}{\sigma_1\sigma_2}\left\{ \bigintsss_{-\infty}^\infty
\frac{\left(\frac{(x-\mu_1)^2}{\sigma_1^2}-1\right)^2 \frac{1}{\sigma_1^2} \exp\left\{ -\frac{(x-\mu_1)^2}{\sigma_1^2}\right\}}{\frac{p_1}{\sigma_1}\exp\left\{-\frac{(x-\mu_1)^2}{2\sigma_1^2}\right\}+\frac{p_2}{\sigma_2}\exp\left\{-\frac{(x-\mu_2)^2}{2\sigma_2^2}\right\}}
d x \right. \nonumber \\
					&\cdot \left.{} \bigintsss_{-\infty}^\infty
\frac{\left(\frac{(x-\mu_2)^2}{\sigma_2^2}-1\right)^2 \frac{1}{\sigma_2^2} \exp\left\{ -\frac{(x-\mu_2)^2}{\sigma_2^2}\right\}}{\frac{p_1}{\sigma_1}\exp\left\{-\frac{(x-\mu_1)^2}{2\sigma_1^2}\right\}+\frac{p_2}{\sigma_2}\exp\left\{-\frac{(x-\mu_2)^2}{2\sigma_2^2}\right\}}
d x \right\}^\frac{1}{2} 
\end{align*}

In particular, if we reparametrize the model by introducing $\sigma_1=\tau$ and $\sigma_2=\tau \sigma$ and study the behavior of the following integral

\begin{align}
\label{eq:scaleprior}
\bigintsss_0^{\infty} \bigintsss_0^{\infty} & c \frac{p_1 p_2}{\tau\sigma}\left\{ \bigintsss_{-\infty}^\infty
\frac{\left(z^2-1\right)^2 \exp\left\{ -z^2\right\}}{p_1 \exp\left\{-\frac{z^2}{2}\right\}+\frac{p_2}{\sigma}\exp\left\{-\frac{(z\tau+\mu_1-\mu_2)^2}{2\tau^2\sigma^2}\right\}}
d z \right. \nonumber \\
					&\cdot \left.{} \left\{ \bigintsss_{-\infty}^\infty
\frac{\left(u^2-1\right)^2 \exp\left\{-u^2\right\}}{p_1\sigma \exp\left\{-\frac{(u\tau\sigma+\mu_2-\mu_1)^2}{2\tau^2}\right\}+p_2\exp\left\{-\frac{u^2}{2}\right\}}\right\} 
d u \right\}^\frac{1}{2} d \tau d \sigma
\end{align}
\noindent where the internal integrals with respect to $z$ and $u$ converge with respect to $\sigma$ and $\tau$, then the behavior of the external
integrals only depends on $\frac{1}{\tau\sigma}$. Therefore they do not converge. 

This proof can be easily extended to the case of $k$ components: the behavior of the prior depends on the inverse of the
product of the scale parameters, which implies that the prior is improper. 

Moreover this proof may be easily extended to the general case of mixtures of location-scale distributions \eqref{eq:mix-locscale}, because the second derivatives of the logarithm of the model will depend on factors $b_i^{-2}$ for $i \in {1,\cdots,k}$. When the square root is considered, it is evident that the integral will not converge. 
\end{proof}

Figures \ref{fig:sd-priorpost-clm} and \ref{fig:sd-priorpost-asym} show the prior and the posterior distributions of the scale parameters of a two-component mixture model for some situations with different weights and different means.

\begin{figure}
\centering
\includegraphics[width=6.5cm, height=7.5cm]{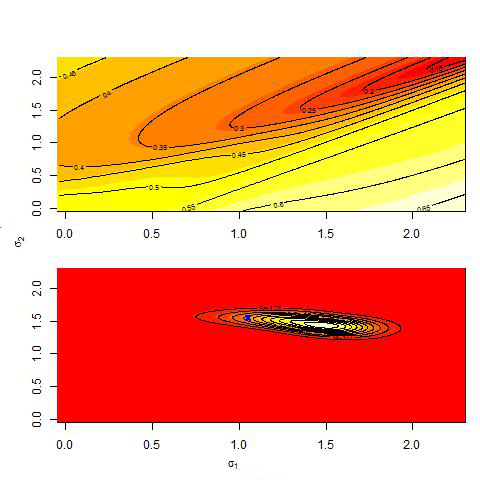}
\caption{Same as Figure \ref{fig:sd-priorpost-farm} but with known weights equal to $(0.25,0.75)$ and known means equal to $(-1,1)$.}
\label{fig:sd-priorpost-clm}
\end{figure}

\begin{figure}
\centering
\includegraphics[width=6.5cm, height=7.5cm]{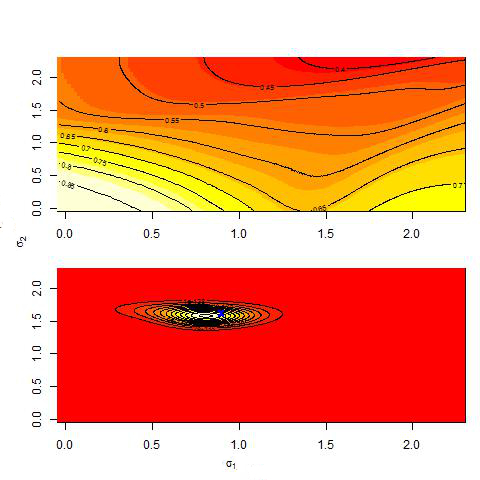}
\caption{Same as Figure \ref{fig:sd-priorpost-farm} but with known weights equal to $(0.25,0.75)$ and known means equal to $(-2,7)$.}
\label{fig:sd-priorpost-asym}
\end{figure}

Summarized results of the posterior approximation obtained via a random-walk Metropolis-Hastings algorithm by exploring
the posterior distribution associated with the Jeffreys prior on the standard deviations are shown in Figures
\ref{fig:sd2-bxp} and \ref{fig:sd3-bxp}, which display boxplots of the posterior means: provided a sufficiently high
sample size, simulations exhibit a convergent behavior. 

\begin{figure}
\centering
\includegraphics[width=6.5cm, height=7.5cm]{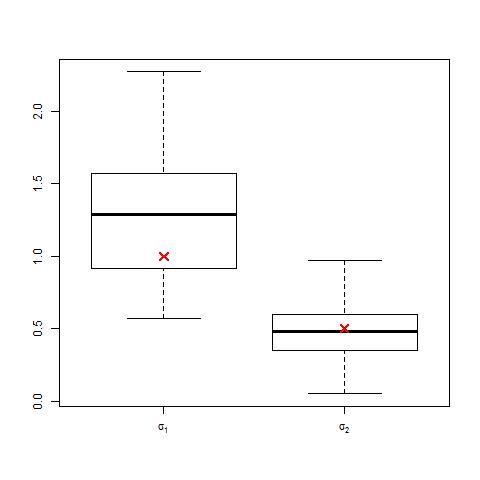}
\caption{Boxplots of posterior means of the standard deviations of the two-component mixture model $0.50\mathcal{N}(-1,1) + 0.50\mathcal{N}(2,0.5)$ for 50 replications of the experiment and a sample size equal to $10$, obtained via MCMC with $10^5$ simulations. The red cross represents the true values.}
\label{fig:sd2-bxp}
\end{figure} 

\begin{figure}
\centering
\includegraphics[width=6.5cm, height=7.5cm]{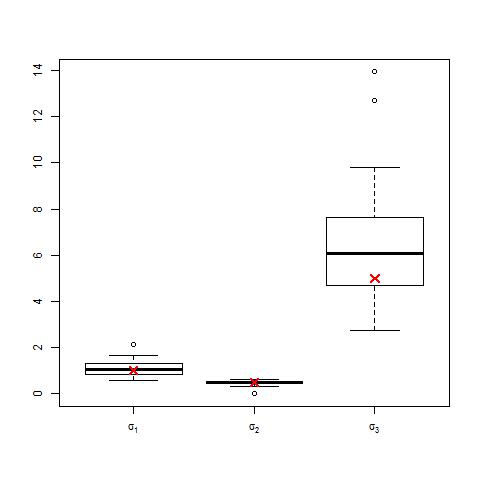}
\caption{Boxplots of posterior means of the standard deviations of the three-component mixture model $0.25\mathcal{N}(-1,1) + 0.65\mathcal{N}(0,0.5) + 0.10\mathcal{N}(2,5)$ for 50 replications of the experiment and a sample size equal to $50$, obtained via MCMC with $10^5$ simulations. The red cross represents the true values.}
\label{fig:sd3-bxp}
\end{figure} 

\subsubsection{Location and scale parameters unknown.}

\begin{proof}

Consider now the case where both location and scale parameters are unknown. Once again, each element of the Fisher
information matrix is an integral in which a change of variable $x-\mu_i$ can be used, for some choice of
$\mu_i,\ ,i=1,\cdots,k$ so that each term only depends on the difference $\delta_j=\mu_i-\mu_j$; the elements are
$$
\mathbb{E}\left[- \frac{\partial^2 log f_X(X)}{\partial \sigma_i^2}\right]=\frac{p_i^2}{\sigma_i^2} \int_{-\infty}^\infty
\frac{\left[ \left(\frac{(x-\mu_i)^2}{\sigma_i^2}-1\right) \mathfrak{n}(x|\mu_i,\sigma_i^2)\right]^2}{\sum_{l=1}^k p_l \mathfrak{n}(x|\mu_l,\sigma_l^2)}
d x
$$
 
$$
\mathbb{E}\left[- \frac{\partial^2 log f_X(X)}{\partial \sigma_i \partial \sigma_j}\right]=\frac{p_i p_j}{\sigma_i \sigma_j} \int_{-\infty}^\infty
\frac{ \left(\frac{(x-\mu_i)^2}{\sigma_i^2}-1\right) \left(\frac{(x-\mu_j)^2}{\sigma_j^2}-1\right) \mathfrak{n}(x|\mu_i,\sigma_i^2)\mathfrak{n}(x|\mu_j,\sigma_j^2)}{\sum_{l=1}^k p_l \mathfrak{n}(x|\mu_l,\sigma_l^2)}
d x
$$

$$
\mathbb{E}\left[- \frac{\partial^2 log f_X(X)}{\partial \mu_i \partial \sigma_i}\right]=\frac{p_i^2}{\sigma_i^3} \int_{-\infty}^\infty
\frac{ \left(x-\mu_i\right)\left(\frac{(x-\mu_i)^2}{\sigma_i^2}-1\right) \left[\mathfrak{n}(x|\mu_i,\sigma_i^2)\right]^2}{\sum_{l=1}^k p_l \mathfrak{n}(x|\mu_l,\sigma_l^2)}
d x
$$
 
$$
\mathbb{E}\left[- \frac{\partial^2 log f_X(X)}{\partial \mu_i \partial \sigma_j}\right]=\frac{p_i p_j}{\sigma_i \sigma_j} \int_{-\infty}^\infty
\frac{ \frac{(x-\mu_i)}{\sigma_i^2 \sigma_j} \left(\frac{(x-\mu_j)^2}{\sigma_j^2}-1\right) \mathfrak{n}(x|\mu_i,\sigma_i^2)\mathfrak{n}(x|\mu_j,\sigma_j^2)}{\sum_{l=1}^k p_l \mathfrak{n}(x|\mu_l,\sigma_l^2)}
d x
$$

\end{proof}

\subsubsection*{Location and scale parameters unknown}

When considering all the parameters unknown, the form of the Jeffreys prior may be partly defined by considering the mixture model as a location-scale model, for which a general solution exists; see \cite{robert:2001}. 

\begin{lemma}
When all the parameters of a Gaussian mixture model are unknown, the Jeffreys prior is constant in $\mu$ and powered as $\tau^{-d/2}$, where $d$ is the total number of components parameters. 
\end{lemma}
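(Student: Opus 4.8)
The plan is to exploit the fact, stressed in the paragraph preceding the statement, that once all parameters are free the Gaussian mixture is itself a location--scale model: writing it in the reparametrised form \eqref{eq:k_reparMix}, with a reference location $\mu$, a reference scale $\tau$, and a vector $\boldsymbol\eta$ of dimensionless ``shape'' parameters (the relative locations $\theta_i$, the relative scales $\sigma_i$ and the weights $p,q_i$), the two assertions decouple. Constancy in $\mu$ is a translation-invariance statement, while the $\tau^{-d/2}$ behaviour is a scale-equivariance statement; I would treat them separately and then recombine, finally transferring the conclusion back to the original coordinates $(\mu_i,\sigma_i)$ through the change-of-variables formula for the Jeffreys prior (which, being the square root of the information determinant, transforms with the Jacobian and is therefore genuinely parametrisation-equivariant).

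For the constancy in $\mu$, I would reuse the device of the proof of Lemma \ref{lem:meansd-prior}: in every entry of the full information matrix --- now the $(3k-1)\times(3k-1)$ matrix mixing $\partial/\partial\mu_i$, $\partial/\partial\sigma_i$ and the weight derivatives --- perform the substitution $t=x-\mu$, equivalently $z=(x-\mu)/\tau$. Under this substitution each integrand depends on the component means only through their differences, exactly as in the location-only computation, so every entry, and hence $|I|$, is free of the reference location $\mu$. This already yields the first half of the statement.

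For the power of $\tau$, which is the substantive part, I would write the score vector after the substitution $z=(x-\mu)/\tau$ as $S=D(\tau)\,\tilde S(z;\boldsymbol\eta)$, where $D(\tau)$ is diagonal and $\tilde S$ is free of $\mu$ and $\tau$ because the law of $z$ depends only on $\boldsymbol\eta$. The only coordinates whose score carries a factor $\tau^{-1}$ are the scale-type ones; differentiation in a dimensionless shape parameter carries none. Consequently $I=D(\tau)\,\tilde I(\boldsymbol\eta)\,D(\tau)$ with $\tilde I$ finite and $(\mu,\tau)$-free, and taking determinants isolates a pure power of $\tau$ times $|\tilde I(\boldsymbol\eta)|$. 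Counting these factors, and folding in the Jacobian of the passage back to $(\mu_i,\sigma_i)$, one finds exactly one factor $\tau^{-1}$ per scale parameter, i.e. $d/2$ of them, which reproduces the $\prod_i\sigma_i^{-1}\asymp\tau^{-d/2}$ behaviour already established for the scale-only case in the proof of Lemma \ref{lem:meansd-prior}; the location parameters contribute no power of $\tau$, consistently with the flatness just shown. The weights, being confined to the compact simplex and dimensionless, contribute only the bounded $\prod_i p_i^{-1/2}$-type factor of Lemma \ref{lem:weights} and hence affect neither the $\mu$- nor the $\tau$-behaviour.

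The hard part is precisely this bookkeeping of powers of $\tau$: one must track how many $\tau^{-1}$'s survive in $|D(\tau)|$ once the determinant is expanded, and then compose with the Jacobian of the reparametrisation so that the exponent collapses to $d/2$ rather than to the naive count suggested by the raw $(\mu,\tau)$-block. A secondary point requiring care is to check that the residual shape integral $|\tilde I(\boldsymbol\eta)|^{1/2}$ is finite for generic $\boldsymbol\eta$, so that the extracted power of $\tau$ is genuinely the leading behaviour and not masked by a divergence of the shape block. Finally I would extend the statement beyond the Gaussian case through the representation \eqref{eq:mix-locscale}, observing as in Lemma \ref{lem:meansd-prior} that the second derivatives of the log-density again depend on the locations only through their differences and on the scales only through the factors $b_i^{-2}$, so that the same translation-invariance and scale-counting arguments carry over unchanged.
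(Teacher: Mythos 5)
Your overall strategy coincides with the paper's: both proofs handle the constancy in $\mu$ by the substitution argument from Lemma \ref{lem:meansd-prior}, and both extract the power of $\tau$ by passing to the reparametrisation \eqref{reparMix} and exploiting the fact that, after the change of variable $z=(x-\mu)/\tau$, every entry of the Fisher information matrix is a pure power of $\tau$ times a $\tau$-free quantity. The paper records these powers in a table and computes the determinant by the block formula $\det(M)=\det(A-BD^{-1}C)\det(D)$, with $D$ the $(\mu,\tau)$ block; your sandwich factorisation $I=D(\tau)\,\tilde I(\boldsymbol\eta)\,D(\tau)$, giving $\det I=\det D(\tau)^2\det\tilde I$, is an equivalent and arguably cleaner route to the same determinant.

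However, the bookkeeping step that you yourself single out as the hard part contains a genuine error. The coordinates whose scores carry a factor $\tau^{-1}$ are \emph{not} ``the scale-type ones'': they are exactly the reference pair $(\mu,\tau)$. Indeed the score in $\mu$ behaves as $z/\tau$ and the score in $\tau$ as $(z^2-1)/\tau$, whereas the relative scales $\sigma_i$, the relative locations $\theta_i$ and the weights are dimensionless, so their scores are $\tau$-free. Hence $D(\tau)$ has exactly two entries equal to $\tau^{-1}$ (whatever $k$ is), so $\det I\propto\tau^{-4}$ and the prior in the reparametrised coordinates is $\propto\tau^{-2}$; this equals $\tau^{-d/2}$ in the two-component case $d=4$ that the paper actually proves, but your count of ``one factor $\tau^{-1}$ per scale parameter, i.e.\ $d/2$ of them'' is a different quantity and would disagree with what the method yields as soon as $k\ge 3$. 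Furthermore, the final ``folding in the Jacobian'' back to the original coordinates $(\mu_i,\sigma_i)$ is both unnecessary (the statement, like the paper's proof, lives in the reparametrised coordinates) and counterproductive: for $k=2$ the change of variables multiplies the prior by $\bigl|\det\partial(\mu,\tau,\delta,\sigma)/\partial(\mu_1,\sigma_1,\mu_2,\sigma_2)\bigr|=\sigma_1^{-2}$, so the transferred prior is homogeneous of degree $-d$, not $-d/2$, under a global rescaling. In short, your argument lands on the correct exponent for $k=2$ only because $d/2$ happens to equal the number of reference parameters $(\mu,\tau)$ there; carried out literally, your attribution of the $\tau$-powers and the Jacobian transfer would not establish the stated form.
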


\begin{proof}
We have already proved the Jeffreys prior is constant on the global mean (first proof of Lemma \ref{lem:meansd-prior}).

Consider a two-component mixture model and the reparametrization \eqref{reparMix}. With some computations, it is straightforward to derive the Fisher information matrix for this model, partly shown in Table \ref{tab:FishInfo_repar}, where each term is multiplied for a term which does not depend on $\tau$.

\begin{table}[h]
\centering
\caption{Factors depending on $\tau$ of the Fisher information matrix for the reparametrized model \eqref{reparMix}}.
\label{tab:FishInfo_repar}
\begin{tabular}{c|ccccc}
                  & \textbf{$\sigma$} & \textbf{$\delta$} & \textbf{p}                       & \textbf{$\mu$} & \textbf{$\tau$} \\ \hline
\textbf{$\sigma$} & 1                 & 1                 & \multicolumn{1}{c|}{1}           & $\tau^{-1}$      & $\tau^{-1}$     \\
\textbf{$\delta$} & 1                 & 1                 & \multicolumn{1}{c|}{1}           & $\tau^{-1}$      & $\tau^{-1}$     \\
\textbf{p}        & 1                 & 1                 & \multicolumn{1}{c|}{1}           & $\tau^{-1}$      & $\tau^{-1}$     \\ \cline{2-6} 
\textbf{$\mu$}    & $\tau^{-1}$       & $\tau^{-1}$       & \multicolumn{1}{c|}{$\tau^{-1}$} & $\tau^{-2}$      & $\tau^{-2}$     \\
\textbf{$\tau$}   & $\tau^{-1}$       & $\tau^{-1}$       & \multicolumn{1}{c|}{$\tau^{-1}$} & $\tau^{-2}$      & $\tau^{-2}$    
\end{tabular}
\end{table}

Therefore, the Fisher information matrix considered as a function of $\tau$ is a block matrix. From well-known results
in linear algebra, if we consider a block matrix 

\begin{equation*}
M=
\begin{bmatrix}
A & B \\
C & D
\end{bmatrix}
\end{equation*}

\noindent then its determinant is given by $\det(M)=\det(A-BD^{-1}C)\det(D)$. In the case of a two-component mixture
model, $\det(D)\propto\tau^{-4}$, while $\det(A-BD^{-1}C)\propto 1$ (always seen as functions of $\tau$ only). Then the
Jeffreys prior for a two-component location-scale mixture model is proportional to $\tau^{-2}$.

This result may be easily generalized to the case of $k$ components. 

\end{proof}

\subsection{Posterior distributions of Jeffreys priors}

We now derive analytical and computational characterizations of the posterior distributions associated with Jeffreys
priors for mixture models. Simulated examples are used to support the analytical results.

For this purpose, we have repeated simulations from the models 
\begin{equation}
0.50\mathcal{N}(\mu_1,1) + 0.50\mathcal{N}(\mu_2,0.5)
\label{eq:2mix}
\end{equation}
\noindent and
\begin{equation}
0.25\mathcal{N}(\mu_1,1) + 0.65\mathcal{N}(\mu_0,0.5) + 0.10\mathcal{N}(\mu_2,5)
\label{eq:3mix}
\end{equation}
\noindent where $\mu_1$ and $\mu_2$ are chosen to be either close ($\mu_1=-1$, $\mu_2=2$) or well separated ($\mu_1=-10$, $\mu_2=15$) and $\mu_0=0$. 

The Tables shown in the following will analyze the behavior of simulated Markov chains with the goal to approximate the posterior distribution. Even if the output of a MCMC method is not conclusive to assess the properness of the target distribution, it may give a hint on improperness: if the target is improper, an MCMC chain cannot be positive recurrent but instead either null-recurrent or transient \citep{robert:casella:2004}, then it should show convergence problems, as trends or difficulties to move from a particular region. Therefore, simulation studies will be used to support analytical results on properness or improperness of the posterior distribution. 
In the following, we will say that the results are stable if they show a convergent behavior, i.e. they move around the true values which have generated the data. In particular, an approximation is stable if the proportion of experiments for which the chains show no trend and acceptance rates around the expected values (20\%-40\%, which means that there are not regions where the chain have difficulties to move from) is 0. 

The following results are based on Gaussian mixture models, anyway, the Jeffreys prior has a behavior common to all the location-scale families, as shown in Section \ref{subsec:priors}, as well as the likelihood function; therefore the results may be generalized to any location-scale family.

\subsubsection{Location parameters unknown}

A first numerical study where the Jeffreys prior and its posterior are computed on a grid of parameter values confirms
that, provided the means only are unknown, the prior is constant on the difference between the means and takes higher and higher
values as the difference between them increases. However, the posterior distribution is correctly concentrated around the
true values for a sufficiently high sample size and it exhibits the classical bimodal nature of such posteriors \citep{celeux:hurn:robert:2000}.
In Figure \ref{fig:mean-priorpost}, the posterior distribution appears to be perfectly symmetric because the other parameters
(weights and standard deviations) have been fixed as identical.
 
\begin{figure}
\centering
\includegraphics[width=6.5cm, height=7.5cm]{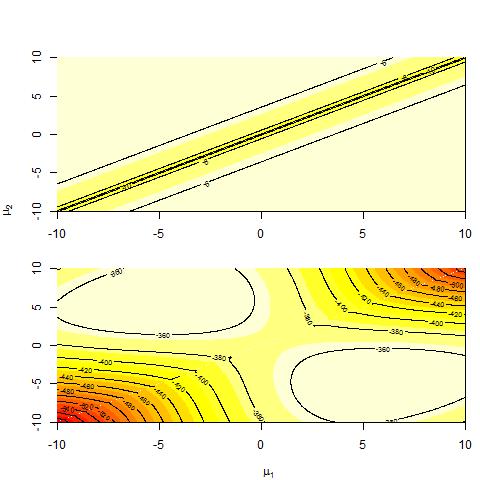}
\caption{Approximations (on a grid of values) of the Jeffreys prior (on the log-scale) when only the means of a Gaussian
mixture model with two components are unknown (on the top) and of the derived posterior distribution (with known weights both equal to 0.5 and known standard deviations both equal to 5).}
\label{fig:mean-priorpost}
\end{figure}

Tables \ref{tab:post2means} and \ref{tab:post3means} show that, when considering a two-component Gaussian mixture model, the results 
are stabilizing for a sample size equal to $10$ if the components are close and they are always stable if the means are far enough; 
on the other hand, huge sample sizes (around $100$ observations) are needed to have always converging chains for a three-component 
mixture model (even if, when the components are well-separated a sample size equal to $10$ seems to be enough to have stable results). 

\begin{lemma} 
When $k=2$, the posterior distribution derived from the Jeffreys prior when only the means are unknown is proper.
\label{lem:mean-post}
\end{lemma}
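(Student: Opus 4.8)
The statement is an assertion of posterior propriety, so the plan is to reduce it to the finiteness of the marginal likelihood
\[
m(\mathbf{x})=\int_{\mathbb{R}^2}\prod_{t=1}^n g(x_t)\,\pi^{\text{J}}(\mu_1,\mu_2)\,\dd\mu_1\,\dd\mu_2,
\qquad g(x)=p_1\mathfrak{n}(x|\mu_1,\sigma_1^2)+p_2\mathfrak{n}(x|\mu_2,\sigma_2^2),
\]
with $p_1,p_2,\sigma_1,\sigma_2$ fixed and known. The posterior is proper if and only if $m(\mathbf{x})<\infty$. First I would record a uniform bound on the prior, obtained by the same device as in the proof of Lemma~\ref{lem:weights}: since $g(x)\ge p_i\,\mathfrak{n}(x|\mu_i,\sigma_i^2)$, each diagonal Fisher term satisfies $I_{ii}\le p_i/\sigma_i^2$, and Hadamard's inequality then gives $\pi^{\text{J}}(\mu_1,\mu_2)\le \sqrt{p_1p_2}/(\sigma_1\sigma_2)=:C$, a constant that holds over the whole plane.

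Next I would expand the likelihood over the $2^n$ allocations,
\[
\prod_{t=1}^n g(x_t)=\sum_{z\in\{1,2\}^n}\ \prod_{t=1}^n p_{z_t}\,\mathfrak{n}(x_t|\mu_{z_t},\sigma_{z_t}^2),
\]
and treat each term separately. For every allocation that places at least one observation in each component, the $(\mu_1,\mu_2)$-integral factorises into two one-dimensional integrals of products of Gaussians; each is finite, and bounding $\pi^{\text{J}}$ by $C$ shows that the corresponding contribution to $m(\mathbf{x})$ is finite. Thus all ``balanced'' allocations are harmless, and the whole question collapses onto the two degenerate allocations $z\equiv 1$ and $z\equiv 2$ in which an entire component is left empty.

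The decisive step — and the one I expect to be the genuine obstacle — is the control of these empty-component terms. For $z\equiv 2$, say, the summand is $\prod_t p_2\,\mathfrak{n}(x_t|\mu_2,\sigma_2^2)$, which does not depend on $\mu_1$, so its contribution is
\[
\int_{\mathbb{R}}\prod_{t=1}^n p_2\,\mathfrak{n}(x_t|\mu_2,\sigma_2^2)\Bigl[\int_{\mathbb{R}}\pi^{\text{J}}(\mu_1,\mu_2)\,\dd\mu_1\Bigr]\dd\mu_2,
\]
and finiteness hinges entirely on the inner integral, i.e. on the tail behaviour of the Jeffreys prior as the free mean $\mu_1$ escapes to $\pm\infty$ while $\mu_2$ stays in the data range. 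Here the crude bound $\pi^{\text{J}}\le C$ is useless, and in fact the earlier conditional analysis (the Lemma on $\pi^{\text{J}}(\delta\mid\mu)$) shows that along this very direction the prior stabilises to a strictly positive constant rather than decaying. A successful proof must therefore extract additional decay that the leading-order analysis misses: concretely, I would expand the $2\times2$ Fisher determinant along the ray $\mu_1\to\infty$ to one further order, tracking how $I_{11}I_{22}-I_{12}^2$ approaches its limit $(p_1/\sigma_1^2)(p_2/\sigma_2^2)$, in the hope that the subleading correction is integrable in $\mu_1$; alternatively one would integrate the prior jointly against the (non-vanishing) likelihood over the full plane and exploit a cancellation special to two components. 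This is precisely where the restriction $k=2$ must do its work, and I would expect the estimate to be delicate and to deserve careful scrutiny, since the naive bound points the other way.
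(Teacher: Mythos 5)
Your proposal retraces the paper's own strategy step for step: reduction to finiteness of the marginal likelihood, the allocation expansion of the likelihood as in \eqref{eq:mixlik}, a Hadamard-type bound on the prior (the paper's \eqref{eq:deltaprior}, pushed to a constant), the observation that every allocation feeding both components gives a finite contribution, and the identification of the two empty-component terms as the crux. But you then stop: you never decide whether $\int_{\mathbb{R}}\pi^{\text{J}}(\mu_1,\mu_2)\,\dd\mu_1$ is finite, and you explicitly defer the ``decisive step''. As a proof of the Lemma this is a genuine gap, since the entire question rests on that one integral.

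The twist is that this gap cannot be filled, and your remark that ``the naive bound points the other way'' is exactly right. The paper completes the step by substituting $\mu_2=\mu_1+\delta$ and integrating out $\mu_1$, but the chain of equalities leading to \eqref{eq:postmean} contains an algebraic slip: the Gaussian integral equals
\[
\int_{\mathbb{R}} e^{\mu_1\sum_j(x_j-\delta)-\frac{n}{2}\mu_1^2}\,\dd\mu_1
=\sqrt{2\pi/n}\;\exp\Bigl\{\bigl(\textstyle\sum_j(x_j-\delta)\bigr)^2\big/(2n)\Bigr\},
\]
whereas the paper records the exponent as $\sum_j(x_j-\delta)/(2n)$, dropping the square. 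With the square restored, the $\delta$-dependence cancels identically: since $\sum_j(x_j-\delta)^2=\sum_j(x_j-\bar x)^2+n(\bar x-\delta)^2$ and $\bigl(\sum_j(x_j-\delta)\bigr)^2=n^2(\bar x-\delta)^2$, the total exponent collapses to the constant $-\frac{1}{2}\sum_j(x_j-\bar x)^2$, not to $-\frac{n-1}{2}\delta^2$. This is forced by the structure you isolated: the term $p_2^n e^{-\frac{1}{2}\sum_j(x_j-\mu_2)^2}$ informs only the global location, so once that location is integrated out nothing remains to damp the prior. The contribution of this allocation is therefore a positive constant times $\int_{\mathbb{R}}\pi^{\text{J}}(\delta)\,\dd\delta$, which diverges because $\pi^{\text{J}}(\delta)\to\sqrt{p_1p_2}/(\sigma_1\sigma_2)>0$ as $|\delta|\to\infty$ --- the very fact the paper uses to prove that the conditional prior on $\delta$ is improper. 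Since all allocation terms are nonnegative, the marginal likelihood is infinite: the posterior is improper, the restriction to $k=2$ buys nothing (the mechanism is the one the paper itself invokes in Lemma~\ref{lem:meankcomp-post} for $k>2$), and the Lemma as stated is false. So your proposal is incomplete, but its diagnosis is sound, and the paper's proof fails at precisely the step you declined to take on faith.
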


\begin{proof}
The conditional Jeffreys prior for the means of a Gaussian mixture model is
\begin{align*}
\pi^J(\mu|p,\sigma) &\propto \frac{p_1 p_2}{\sigma_1^2 \sigma_2^2}\left\{ \int_{-\infty}^{+\infty}\frac{\left[
t\mathfrak{n}(0,\sigma_1)\right]^2}{p_1\mathfrak{n}(0,\sigma_1)+p_2\mathfrak{n}(\delta,\sigma_2)} d t \right. \nonumber \\ 
					&{} \left. \times  \int_{-\infty}^{+\infty} \frac{\left[
u\mathfrak{n}(0,\sigma_2)\right]^2}{p_1\mathfrak{n}(-\delta,\sigma_1)+p_2\mathfrak{n}(0,\sigma_2)} d u \right. \nonumber \\ 
					&{} \left. -\left(\int_{-\infty}^{+\infty} \frac{ t\mathfrak{n}(0,\sigma_1)
(t-\delta)\mathfrak{n}(\delta,\sigma_2)}{p_1\mathfrak{n}(0,\sigma_1)+p_2\mathfrak{n}(\delta,\sigma_2)} d t\right)^2 \right\}^\frac{1}{2}
\end{align*}
\noindent where $\delta=\mu_2-\mu_1$. 

The posterior distribution is then defined as
\begin{equation*}
\prod_{j=1}^n \left[p_1\mathfrak{n}(\mu_1,\sigma_1)+p_2\mathfrak{n}(\mu_2,\sigma_2)\right]\pi^J(\mu_1,\mu_2|p,\sigma)
\end{equation*}

The likelihood may be rewritten (without loss of generality, by considering $\sigma_1=\sigma_2=1$, since they are known) as
\begin{align}
L(\theta)&=\prod_{j=1}^n \left[p_1\mathfrak{n}(\mu_1,1)+p_2\mathfrak{n}(\mu_2,1)\right] \nonumber \\
		&= \frac{1}{(2\pi)^\frac{n}{2}}\left[p_1^n e^{-\frac{1}{2}\sum_{i=1}^n (x_i-\mu_1)^2}+\sum_{j=1}^n p_1^{n-1}p_2e^{-\frac{1}{2}\sum_{i\neq j} (x_i-\mu_1)^2-\frac{1}{2}(x_j-\mu_2)^2}\right. \nonumber \\
		&{} \left. +\sum_{j=1}^n \sum_{k\neq j} p_1^{n-2}p_2^2 e^{-\frac{1}{2}\sum_{i\neq j,k} (x_i-\mu_1)^2-\frac{1}{2}\left[(x_j-\mu_2)^2+(x_k-\mu_2)^2 \right]} \right. \nonumber \\
		&{} \left. +\cdots+p_2^n e^{-\frac{1}{2}\sum_{j=1}^n (x_j-\mu_2)^2}\right]
\label{eq:mixlik}
\end{align}
Then, for $|\mu_1|\rightarrow\infty$, $L(\theta)$ tends to the term $p_2^n e^{-\frac{1}{2}\sum_{j=1}^n (x_j-\mu_2)^2}$
that is constant for $\mu_1$. Therefore we can study the behavior of the posterior distribution for this part of the
likelihood to assess its properness. 

This explains why we want the following integral to converge:
\begin{equation*}
\int_{\mathbb{R}\times\mathbb{R}} p_2^n e^{-\frac{1}{2}\sum_{j=1}^n (x_j-\mu_2)^2} \pi^J(\mu_1,\mu_2) d\mu_1 d\mu_2
\end{equation*}
which is equal to (by the change of variable $\mu_2-\mu_1=\delta$)
\begin{equation*}
\int_{\mathbb{R}\times\mathbb{R}} p_2^n e^{-\frac{1}{2}\sum_{j=1}^n (x_j-\mu_1-\delta)^2} \pi^J(\mu_1,\delta) d\mu_1 d\delta
\end{equation*}
We have seen that the prior distribution only depends on the difference between the means $\delta$:
\begin{align}
&\int_\mathbb{R} p_2^n \int_\mathbb{R} e^{-\frac{1}{2}\sum_{j=1}^n (x_j-\mu_1-\delta)^2}d\mu_1 \pi^J(\delta)d\delta \nonumber \\
&\propto \int_\mathbb{R} \int_\mathbb{R} e^{-\frac{1}{2}\sum_{j=1}^n (x_j-\delta)^2
+\mu_1\sum_{j=1}^n(x_j-\delta)-\frac{1}{2}n\mu_1^2} d\mu_1 \pi^J(\delta)d\delta \nonumber \\
&=\int_\mathbb{R} \left[\int_\mathbb{R} e^{\mu_1\sum_{j=1}^n(x_j-\delta)-\frac{1}{2}n\mu_1^2} d\mu_1\right]
e^{-\frac{1}{2}\sum_{j=1}^n (x_j-\delta)^2} \pi^J(\delta)d\delta \nonumber \\
&=\int_\mathbb{R} e^{-\frac{1}{2}\sum_{j=1}^n (x_j-\delta)^2+\sum_{j=1}^n\frac{(x_j-\delta)}{2n}} \pi^J(\delta)d\delta \nonumber \\
&\approx \int_\mathbb{R} e^{-\frac{n-1}{2}\delta^2} \pi^J(\delta)d\delta \label{eq:postmean}
\end{align}

The prior on $\delta$ depends on the determinant of the corresponding Fisher information matrix that is positive definite, then it is bounded by the product of the Fisher information matrix diagonal entries:
\begin{equation}
\footnotesize{
\pi(\delta)\leq \frac{p_1 p_2}{\sigma_1 \sigma_2}\left\{ \bigintsss_{-\infty}^{+\infty}
\frac{\left[t\mathfrak{n}(0,\sigma_1^2)\right]^2}{p_1 \mathfrak{n}(0,\sigma_1^2)+p_2\mathfrak{n}(\delta,\sigma_2^2)} d t
\times \bigintsss_{-\infty}^{+\infty} \frac{\left[u\mathfrak{n}(0,\sigma_2^2)\right]^2}{p_1
\mathfrak{n}(-\delta,\sigma_1^2)+p_2\mathfrak{n}(0,\sigma_2^2)} d u \right\}^\frac{1}{2} 
}
\label{eq:deltaprior}
\end{equation}

\noindent where we have used the proof of lemma \ref{lem:meansd-prior} and a change of variable $(t-\delta)=u$ in the
second integral. As $\delta \rightarrow \pm \infty$ this quantity is constant with respect to $\delta$. Therefore the integral \eqref{eq:postmean} is convergent for $n \geq 2$.
   
\end{proof} 

Unfortunately this result can not be extended to the general case of $k$ components. 

\begin{lemma} 
When $k>2$, the posterior distribution derived from the Jeffreys prior when only the means are unknown is improper.
\label{lem:meankcomp-post}
\end{lemma}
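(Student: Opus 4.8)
The plan is to prove improperness by isolating a single divergent contribution to the joint integral of likelihood times prior over the means, exploiting that the likelihood decomposes as in \eqref{eq:mixlik} into a sum of nonnegative allocation terms; since the prior is nonnegative, it suffices to exhibit one such term whose integral against $\pi^J$ diverges. I would work in the reparametrization by a reference mean $\mu_1$ and the differences $\delta_j=\mu_j-\mu_1$, and rely on two facts already available. First, from the proof of Lemma \ref{lem:meansd-prior}, $\pi^J$ is a function of $(\delta_2,\dots,\delta_k)$ alone. Second, on the cone where the components separate the Fisher information becomes diagonally dominant, its diagonal entries tending to $p_i/\sigma_i^2$ and its off-diagonal entries (products of two widely separated densities) vanishing, so that $\det I\to\prod_{i=1}^k p_i/\sigma_i^2$ and hence $\pi^J$ converges to the strictly positive constant $(\prod_{i=1}^k p_i/\sigma_i^2)^{1/2}$. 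In particular $\pi^J$ admits a positive lower bound on every sufficiently separated region.

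The decisive term is the allocation that assigns all $n$ observations to a single component, say the first. Its contribution to \eqref{eq:mixlik} is $p_1^n\prod_{j=1}^n\mathfrak{n}(x_j|\mu_1,\sigma_1^2)$, which pins $\mu_1$ near the data through a genuine Gaussian factor but is constant in $\mu_2,\dots,\mu_k$. This allocation leaves $k-1$ components empty, hence $k-1$ free location directions, and the key arithmetic observation is that $k-1\ge 2$ precisely when $k>2$. I would bound the normalizing integral below by freezing $\mu_1$ in a neighbourhood of the sample mean, where $\prod_j\mathfrak{n}(x_j|\mu_1,\sigma_1^2)$ stays bounded below, and then integrating $\pi^J$ over the empty means $(\mu_2,\dots,\mu_k)$ along a separated cone. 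On that cone $\pi^J$ is bounded below by a positive constant by the stabilization above, while the cone carries infinite $(k-1)$-dimensional Lebesgue volume; the inner integral therefore diverges, and with it the whole integral. The contrast with the two-component case is structural: there the same term leaves only a single free direction, which is controlled as in Lemma \ref{lem:mean-post} through the reduction \eqref{eq:postmean}, whereas the control that suffices for one free mean cannot be shared among the two or more free means that unavoidably appear once $k>2$.

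The step I expect to be the main obstacle is making the positive lower bound on $\pi^J$ uniform over an explicit infinite-volume region rather than a mere limit along the cone, and doing so while $\mu_1$ is held fixed near the data and only the empty means are sent to infinity. One must check that freezing the occupied mean does not spoil the diagonal-dominance estimate; I would handle this with the same Fisher-information computation as in the proof of Lemma \ref{lem:meansd-prior}, noting that the occupied block remains bounded and the empty means decouple from it as they separate, so that $\det I$ still factorizes into a positive occupied-block determinant times the positive diagonal entries of the separating components. A secondary point is to argue cleanly that the divergence genuinely hinges on $k>2$: it is the presence of at least two simultaneously free location directions, impossible for $k=2$ where no allocation can leave two components empty, that defeats the single taming mechanism of Lemma \ref{lem:mean-post}. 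Once the uniform lower bound on the cone is secured, the infinite Lebesgue volume in dimension $k-1\ge 2$ closes the argument.
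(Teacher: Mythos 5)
Your proof is correct for this lemma, and it runs on the same engine as the paper's: expand the likelihood as in \eqref{eq:mixlik}, keep a single allocation term that pins exactly one location parameter, and exploit the fact that the Jeffreys prior, which depends only on the differences $\delta_j$, tends to a strictly positive constant as the components separate, so the unpinned directions contribute infinite mass. The bookkeeping differs, though, and the difference is instructive. The paper assigns all observations to a non-reference component and integrates the reference mean $\mu_1$ out, producing the factor $e^{-\frac{n-1}{2}\delta_2^2}$ of \eqref{eq:postmean}; divergence must then come from the remaining $k-2$ directions $\delta_j$, $j\neq 2$, which is precisely where the hypothesis $k>2$ enters. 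You instead assign all observations to the reference component, freeze $\mu_1$ on a compact neighbourhood of the data, and harvest divergence from all $k-1$ free directions at once; the uniform positive lower bound on $\pi^J$ along a separated cone, which you flag as the main obstacle, is indeed available through the diagonal-dominance estimate you sketch, so that step is sound.

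What your route buys is simplicity (no Gaussian integration step at all); what it costs is the link to the hypothesis. Nothing in your construction actually uses $k-1\ge 2$: an infinite-volume region on which $\pi^J$ is bounded below exists already when $k-1=1$, so your argument, run verbatim, gives improperness at $k=2$ as well, in conflict with Lemma~\ref{lem:mean-post}. Your structural explanation---that the single free direction at $k=2$ is ``tamed'' as in \eqref{eq:postmean}---does not apply to your own setup, since that taming factor arises only from integrating $\mu_1$ out, not from freezing it. Moreover, if one does integrate $\mu_1$ out, the correct Gaussian integral gives $\int e^{\mu_1\sum_j(x_j-\delta)-\frac{n}{2}\mu_1^2}d\mu_1 \propto \exp\{(\sum_j(x_j-\delta))^2/(2n)\}$, whose squared exponent exactly cancels the $\delta$-dependence of $e^{-\frac{1}{2}\sum_j(x_j-\delta)^2}$, leaving a factor constant in $\delta$ rather than $e^{-\frac{n-1}{2}\delta^2}$. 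So the tension your proof exposes is genuine, but it points at the paper's two-component lemma rather than at any flaw in your argument for the present statement.
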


\begin{proof}
In the case of $k\neq 2$ components, the Jeffreys prior for the location parameters is still constant with respect to a
reference mean (for example, $\mu_1$).  Therefore it depends on the difference parameters
$(\delta_2=\mu_2-\mu_1,\delta_3=\mu_3-\mu_1,\cdots,\delta_k=\mu_k-\mu_1)$.

The Jeffreys prior will be bounded by the product on the diagonal, which is an extension of Equation \eqref{eq:deltaprior}:

\begin{align*}
\pi^J(\delta_2,\cdots,\delta_k) &\leq c \left\{ \bigintsss_{-\infty}^\infty \frac{[t\mathfrak{n}(0,\sigma_1^2)]^2}{p_1\mathfrak{n}(0,\sigma_1^2)+\cdots+p_k \mathfrak{n}(\delta_k,\sigma_k^2)}d t \right. \nonumber \\
							& \left. {} \cdots \bigintsss_{-\infty}^\infty \frac{[u \mathfrak{n}(0,\sigma_k^2)]^2}{p_1\mathfrak{n}(-\delta_k,\sigma_1^2)+\cdots+p_k \mathfrak{n}(0,\sigma_k^2)} d u \right\}^\frac{1}{2}.
\end{align*}

If we consider the case as in Lemma \ref{lem:mean-post}, where only the part of the likelihood depending on e.g. $\mu_2$
may be considered, the convergence of the following integral has to be studied:

\begin{equation*}
\int_\mathbb{R} \cdots \int_\mathbb{R} e^{-\frac{n-1}{2}\delta_2^2} \pi^J(\delta_2,\cdots,\delta_k) d \delta_2 \cdots d \delta_k
\end{equation*}

In this case, however, the integral with respect to $\delta_2$ may converge, nevertheless the integrals with respect to $\delta_j$ with $j\neq 2$ will diverge, since the prior tends to be constant for each $\delta_j$ as $|\delta_j| \rightarrow \infty$.

This results confirms the idea that each part of the likelihood  gives information about at most the difference between the location of the respective components and the reference locations, but not on the locations of the other components. 

\end{proof}

\begin{table}[h]
\centering
\footnotesize
\caption{$\mu$ unknown, k=2: results of 50 replications of the experiment for both close and far means with a Monte Carlo approximation of the posterior distribution based on $10^5$ simulations and a burn-in of $10^4$ simulations. The table shows the average acceptance rate, the proportion of chains diverging towards higher values and the average ratio between the log-likelihood of the last accepted values and the true values in the 50 replications when using the Jeffreys prior (on the left) and a prior constant on the means (on the right).}
\label{tab:post2means}
\begin{tabular}{|cccc|ccc|}
\hline
\multicolumn{1}{|l|}{{\bf k=2}}                                                    & \multicolumn{1}{l}{{\bf Jeffreys prior}}                             & \multicolumn{1}{l}{(Close Means)}                                          & \multicolumn{1}{l|}{}                                                                                & \multicolumn{1}{l}{{\bf Constant prior}}                             & \multicolumn{1}{l}{}                                                       & \multicolumn{1}{l|}{}                                                                                \\ \hline
\multicolumn{1}{|c|}{{\it \begin{tabular}[c]{@{}c@{}}Sample \\ Size\end{tabular}}} & {\it \begin{tabular}[c]{@{}c@{}}Ave. \\ Accept.\\ Rate\end{tabular}} & {\it \begin{tabular}[c]{@{}c@{}}Chains towards\\ high values\end{tabular}} & {\it \begin{tabular}[c]{@{}c@{}}Ave. \\ lik($\theta^{fin}$)\\ /\\ lik($\theta^{true}$)\end{tabular}} & {\it \begin{tabular}[c]{@{}c@{}}Ave. \\ Accept.\\ Rate\end{tabular}} & {\it \begin{tabular}[c]{@{}c@{}}Chains towards\\ high values\end{tabular}} & {\it \begin{tabular}[c]{@{}c@{}}Ave. \\ lik($\theta^{fin}$)\\ /\\ lik($\theta^{true}$)\end{tabular}} \\ \hline
\multicolumn{1}{|c|}{2}                                                            & 0.2505                                                               & 0.88                                                                       & 1.8182                                                                                               & 0.2709                                                               & 0.72                                                                       &  1.9968                                                                                                 \\
\multicolumn{1}{|c|}{3}                                                            & 0.2656                                                               & 0.94                                                                       & 1.6804                                                                                               & 0.2782                                                               & 0.58                                                                       &  1.9613                                                                                                  \\
\multicolumn{1}{|c|}{4}                                                            & 0.2986                                                               & 0.56                                                                       & 1.3097                                                                                               & 0.2812                                                               & 0.18                                                                       & 1.9824                                                                                                  \\
\multicolumn{1}{|c|}{5}                                                            & 0.2879                                                               & 0.48                                                                       & 1.2918                                                                                               & 0.2830                                                               & 0.14                                                                       & 1.8358
\\
\multicolumn{1}{|c|}{6}                                                            & 0.3066                                                               & 0.16                                                                       & 1.1251                                                                                               & 0.3090                                                               & 0.00                                                                       & 1.9363                                                                                               \\
\multicolumn{1}{|c|}{7}                                                            & 0.3052                                                               & 0.24                                                                       & 1.1205                                                                                               & 0.3103                                                               & 0.02                                                                       & 1.7994                                                                                               \\
\multicolumn{1}{|c|}{8}                                                            & 0.3181                                                               & 0.02                                                                       & 1.0149                                                                                               & 0.3521                                                               & 0.00                                                                       & 1.3923                                                                                               \\
\multicolumn{1}{|c|}{9}                                                            & 0.3101                                                               & 0.02                                                                       & 1.0244                                                                                               & 0.3369                                                               & 0.00                                                                       & 1.5219                                                                                               \\
\multicolumn{1}{|c|}{10}                                                           & 0.3460                                                               & 0.00                                                                       & 0.9914                                                                                               & 0.3627                                                               & 0.00                                                                       & 1.2933                                                                                               \\
\multicolumn{1}{|c|}{15}                                                           & 0.3418                                                               & 0.00                                                                       & 1.0097                                                                                               & 0.3913                                                               & 0.00                                                                       & 1.1970                                                                                               \\
\multicolumn{1}{|c|}{20}                                                           & 0.3881                                                               & 0.00                                                                       & 0.9948                                                                                               & 0.4097                                                               & 0.00                                                                       & 1.1032                                                                                               \\
\multicolumn{1}{|c|}{50}                                                           & 0.4556                                                               & 0.00                                                                       & 1.0005                                                                                               & 0.4515                                                               & 0.00                                                                       & 1.0303                                                                                               \\
\multicolumn{1}{|c|}{100}                                                          & 0.5090                                                               & 0.00                                                                       & 1.0008                                                                                               & 0.5090                                                               & 0.00                                                                       & 1.0007                                                                                               \\
\multicolumn{1}{|c|}{500}                                                          & 0.5603                                                               & 0.00                                                                       & 1.0006                                                                                               & 0.5305                                                               & 0.00                                                                       & 1.0002                                                                                               \\
\multicolumn{1}{|c|}{1000}                                                         & 0.4915                                                               & 0.00                                                                       & 1.0006                                                                                               & 0.2327                                                               & 0.00                                                                       & 1.0042                                                                                               \\ \hline
\multicolumn{1}{|c|}{{\bf k=2}}                                                              & \multicolumn{1}{l}{}                                        & \multicolumn{1}{l}{(Far means)}                                            & \multicolumn{1}{l|}{}                                                                                & \multicolumn{1}{l}{{\bf Constant prior}}                             & \multicolumn{1}{l}{}                                                       & \multicolumn{1}{l|}{}                                                                                \\ \hline
\multicolumn{1}{|c|}{2}                                                            & 0.2752                                                               & 0.00                                                                       & 1.0838                                                                                               & 0.2736                                                               & 0.00                                                                       & 1.0474                                                                                               \\
\multicolumn{1}{|c|}{3}                                                            & 0.2692                                                               & 0.00                                                                       & 1.0313                                                                                               & 0.2546                                                               & 0.00                                                                       & 1.0313                                                                                               \\
\multicolumn{1}{|c|}{4}                                                            & 0.2969                                                               & 0.00                                                                       & 1.1385                                                                                               & 0.3152                                                               & 0.00                                                                       & 1.0167                                                                                               \\
\multicolumn{1}{|c|}{5}                                                            & 0.2938                                                               & 0.00                                                                       & 1.0138                                                                                               & 0.2920                                                               & 0.00                                                                       & 0.9968                                                                                               \\
\multicolumn{1}{|c|}{6}                                                            & 0.3066                                                               & 0.00                                                                       & 1.2207                                                                                               & 0.3470                                                               & 0.00                                                                       & 0.9975                                                                                               \\
\multicolumn{1}{|c|}{7}                                                            & 0.3350                                                               & 0.00                                                                       & 1.1055                                                                                               & 0.3473                                                               & 0.00                                                                       & 0.9920                                                                                               \\
\multicolumn{1}{|c|}{8}                                                            & 0.3154                                                               & 0.00                                                                       & 1.1374                                                                                               & 0.3583                                                               & 0.00                                                                       & 1.0092                                                                                               \\
\multicolumn{1}{|c|}{9}                                                            & 0.3309                                                               & 0.00                                                                       & 1.1566                                                                                               & 0.3512                                                               & 0.00                                                                       & 0.9893                                                                                               \\
\multicolumn{1}{|c|}{10}                                                           & 0.3338                                                               & 0.00                                                                       & 1.1820                                                                                               & 0.3601                                                               & 0.00                                                                       & 1.0112                                                                                               \\
\multicolumn{1}{|c|}{15}                                                           & 0.3579                                                               & 0.00                                                                       & 1.1796                                                                                               & 0.3840                                                               & 0.00                                                                       & 1.0136                                                                                               \\
\multicolumn{1}{|c|}{20}                                                           & 0.3950                                                               & 0.00                                                                       & 1.1615                                                                                               & 0.4190                                                               & 0.00                                                                       & 1.0096                                                                                               \\
\multicolumn{1}{|c|}{50}                                                           & 0.4879                                                               & 0.00                                                                       & 1.1682                                                                                               & 0.4659                                                               & 0.00                                                                       & 1.0059                                                                                               \\
\multicolumn{1}{|c|}{100}                                                          & 0.5083                                                               & 0.00                                                                       & 1.2123                                                                                               & 0.4957                                                               & 0.00                                                                       & 1.0017                                                                                               \\
\multicolumn{1}{|c|}{500}                                                          & 0.5570                                                               & 0.00                                                                       & 1.1996                                                                                               & 0.4777                                                               & 0.00                                                                            & 0.9976                                                                                               \\
\multicolumn{1}{|c|}{1000}                                                         & 0.3463                                                               & 0.00                                                                       & 1.2161                                                                                               & 0.1792                                                               & 0.00                                                                       & 1.0010                                                                                               \\ \hline
\end{tabular}
\end{table}

\begin{table}[h]
\centering
\small
\caption{$\mu$ unknown, k=3: as in Table \ref{tab:post2means} for two three-component Gaussian mixture models, with close and far means, only for the Jeffreys prior.}
\label{tab:post3means}
\begin{tabular}{|cccc|}
\hline
\multicolumn{1}{|l|}{{\bf k=3}}                                                    & \multicolumn{1}{l}{{\bf Jeffreys prior}}                             & \multicolumn{1}{l}{(Close Means)}                                          & \multicolumn{1}{l|}{}                                                                                \\ \hline
\multicolumn{1}{|c|}{{\it \begin{tabular}[c]{@{}c@{}}Sample \\ Size\end{tabular}}} & {\it \begin{tabular}[c]{@{}c@{}}Ave. \\ Accept.\\ Rate\end{tabular}} & {\it \begin{tabular}[c]{@{}c@{}}Chains towards\\ high values\end{tabular}} & {\it \begin{tabular}[c]{@{}c@{}}Ave. \\ lik($\theta^{fin}$)\\ /\\ lik($\theta^{true}$)\end{tabular}} \\ \hline
\multicolumn{1}{|c|}{2}                                                            & 0.2366                                                               & 1.00                                                                       & 2.5175                                                                                               \\
\multicolumn{1}{|c|}{3}                                                            & 0.2608                                                               & 1.00                                                                       & 2.8447                                                                                               \\
\multicolumn{1}{|c|}{4}                                                            & 0.2455                                                               & 0.98                                                                       & 1.3749                                                                                               \\
\multicolumn{1}{|c|}{5}                                                            & 0.2446                                                               & 1.00                                                                       & 1.3807                                                                                               \\
\multicolumn{1}{|c|}{6}                                                            & 0.2330                                                               & 1.00                                                                       & 1.4062                                                                                               \\
\multicolumn{1}{|c|}{7}                                                            & 0.2480                                                               & 0.98                                                                       & 1.2411                                                                                               \\
\multicolumn{1}{|c|}{8}                                                            & 0.2684                                                               & 0.94                                                                       & 1.2535                                                                                               \\
\multicolumn{1}{|c|}{9}                                                            & 0.2784                                                               & 0.98                                                                       & 1.2744                                                                                               \\
\multicolumn{1}{|c|}{10}                                                           & 0.2904                                                               & 0.68                                                                       & 1.1168                                                                                               \\
\multicolumn{1}{|c|}{15}                                                           & 0.3214                                                               & 0.74                                                                       & 1.1217                                                                                               \\
\multicolumn{1}{|c|}{20}                                                           & 0.3819                                                               & 0.32                                                                       & 1.0616                                                                                               \\
\multicolumn{1}{|c|}{30}                                                           & 0.3774                                                               & 0.10                                                                       & 1.0383                                                                                               \\
\multicolumn{1}{|c|}{50}                                                           & 0.4407                                                               & 0.04                                                                       & 1.0108                                                                                               \\
\multicolumn{1}{|c|}{100}                                                          & 0.4935                                                               & 0.00                                                                       & 1.0018                                                                                               \\
\multicolumn{1}{|c|}{500}                                                          & 0.5577                                                               & 0.00                                                                       & 1.0068                                                                                               \\
\multicolumn{1}{|c|}{1000}                                                         & 0.5511                                                               & 0.00                                                                       & 1.0006                                                                                               \\ \hline
\multicolumn{1}{|c|}{{\bf k=3}}                                                              & \multicolumn{1}{l}{}                                        & \multicolumn{1}{l}{(Far means)}                                            & \multicolumn{1}{l|}{}                                                                                \\ \hline
\multicolumn{1}{|c|}{2}                                                            & 0.2641                                                               & 1.00                                                                       & 2.1786                                                                                               \\
\multicolumn{1}{|c|}{3}                                                            & 0.2804                                                               & 1.00                                                                       & 2.1039                                                                                               \\
\multicolumn{1}{|c|}{4}                                                            & 0.2813                                                               & 0.82                                                                       & 1.1173                                                                                               \\
\multicolumn{1}{|c|}{5}                                                            & 0.2840                                                               & 0.84                                                                       & 1.0412                                                                                               \\
\multicolumn{1}{|c|}{6}                                                            & 0.2887                                                               & 0.84                                                                       & 1.1050                                                                                               \\
\multicolumn{1}{|c|}{7}                                                            & 0.2865                                                               & 0.82                                                                       & 1.0840                                                                                               \\
\multicolumn{1}{|c|}{8}                                                            & 0.3248                                                               & 0.66                                                                       & 1.0982                                                                                               \\
\multicolumn{1}{|c|}{9}                                                            & 0.3277                                                               & 0.76                                                                       & 1.1177                                                                                               \\
\multicolumn{1}{|c|}{10}                                                           & 0.2998                                                               & 0.00                                                                       & 1.2604                                                                                               \\
\multicolumn{1}{|c|}{15}                                                           & 0.3038                                                               & 0.00                                                                       & 1.3149                                                                                               \\
\multicolumn{1}{|c|}{20}                                                           & 0.2869                                                               & 0.00                                                                       & 1.3533                                                                                               \\
\multicolumn{1}{|c|}{30}                                                           & 0.3762                                                               & 0.00                                                                       & 1.2479                                                                                               \\
\multicolumn{1}{|c|}{50}                                                           & 0.4283                                                               & 0.00                                                                       & 1.3791                                                                                               \\
\multicolumn{1}{|c|}{100}                                                          & 0.5251                                                               & 0.00                                                                       & 1.2585                                                                                               \\
\multicolumn{1}{|c|}{500}                                                          & 0.5762                                                               & 0.00                                                                       & 1.4779                                                                                               \\
\multicolumn{1}{|c|}{1000}                                                         & 0.4751                                                               & 0.00                                                                       & 1.2161                                                                                               \\ \hline
\end{tabular}
\end{table}

\subsubsection{Scale parameters unknown}

\begin{lemma} 
The posterior distribution derived from the Jeffreys prior when only the standard deviations are unknown is improper.
\label{lem:sd-post}
\end{lemma}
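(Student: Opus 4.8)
The plan is to mirror the strategy used for the location case in Lemma \ref{lem:mean-post}, but now exploiting the fact that sending a single scale parameter to a boundary of its range leaves a component of the likelihood that is constant in that parameter, while the Jeffreys prior decays only like the reciprocal of that parameter. I would restrict attention to the Gaussian two-component case, the extension to $k$ components and to the general location-scale family \eqref{eq:mix-locscale} following the arguments already given in the proof of Lemma \ref{lem:meansd-prior} (there the prior depended on the factors $b_i^{-2}$).

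First I would isolate a lower bound on the likelihood that is uniform in $\sigma_1$. Expanding
$$L(\sigma_1,\sigma_2)=\prod_{j=1}^n\bigl[p_1\mathfrak{n}(x_j|\mu_1,\sigma_1^2)+p_2\mathfrak{n}(x_j|\mu_2,\sigma_2^2)\bigr]$$
into its sum over the $2^n$ allocations, every term is nonnegative and one of them, namely $p_2^n\prod_{j=1}^n\mathfrak{n}(x_j|\mu_2,\sigma_2^2)=:C(\sigma_2)$, does not depend on $\sigma_1$ and is strictly positive. Hence $L(\sigma_1,\sigma_2)\ge C(\sigma_2)>0$ for every $\sigma_1>0$, exactly as the term $p_2^n e^{-\frac12\sum_j(x_j-\mu_2)^2}$ was singled out in the means-unknown proof. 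Writing the posterior normalising constant as $Z=\int_0^\infty\int_0^\infty L\,\pi^J\,\dd\sigma_1\,\dd\sigma_2$ and applying Tonelli to this nonnegative integrand,
$$Z\ \ge\ \int_0^\infty C(\sigma_2)\Bigl(\int_0^\infty\pi^J(\sigma_1,\sigma_2)\,\dd\sigma_1\Bigr)\dd\sigma_2,$$
so it suffices to show that the inner integral diverges for $\sigma_2$ in a set of positive measure.

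For this I would return to the Fisher-information integrals displayed in the scale-parameter part of Lemma \ref{lem:meansd-prior}. With the change of variable $z=(x-\mu_1)/\sigma_1$ one gets, as $\sigma_1\to0$ (the first component concentrating at $\mu_1\neq\mu_2$, so the two supports become asymptotically disjoint), $I_{11}\sim 2p_1/\sigma_1^2$ and $I_{22}\to 2p_2/\sigma_2^2$, while the off-diagonal entry $I_{12}$, which carries the product $\mathfrak{n}(\cdot|\mu_1,\sigma_1^2)\mathfrak{n}(\cdot|\mu_2,\sigma_2^2)$, is of smaller order. Consequently $\det I\sim I_{11}I_{22}\sim 4p_1p_2/(\sigma_1^2\sigma_2^2)$ and $\pi^J=\sqrt{\det I}\sim 2\sqrt{p_1p_2}/(\sigma_1\sigma_2)$, whence $\int_0^\infty\pi^J(\sigma_1,\sigma_2)\,\dd\sigma_1=\infty$, the divergence occurring already at $\sigma_1\to0$. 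The same $\sigma_1^{-1}$ behaviour reappears as $\sigma_1\to\infty$, consistent with the $\tau^{-1}\sigma^{-1}$ rate found when proving improperness of the prior, so the divergence is genuine at both ends of the range.

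The delicate point, and the main obstacle, is the last step: turning the order-of-magnitude statement $\pi^J\asymp 1/(\sigma_1\sigma_2)$ into a genuine \emph{lower} bound, since Hadamard's inequality (used for the prior) only controls $\det I$ from above. What is needed is a matching lower bound on the determinant, i.e. a guarantee that $I_{12}^2$ does not cancel the diagonal product $I_{11}I_{22}$ to leading order. I would obtain this through the Schur-complement representation $\det I=I_{11}\,(I_{22}-I_{12}^2/I_{11})$, reading $I_{22}-I_{12}^2/I_{11}$ as the residual Fisher information on $\sigma_2$: as $\sigma_1\to0$ the first component degenerates to a point mass away from $\mu_2$ and contributes no competing information, so this Schur complement converges to the full value $2p_2/\sigma_2^2>0$. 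Combined with $I_{11}\sim 2p_1/\sigma_1^2$ this gives $\det I\gtrsim c(\sigma_2)/\sigma_1^2$ with $c(\sigma_2)>0$, hence $\pi^J\gtrsim\sqrt{c(\sigma_2)}/\sigma_1$ and the claimed divergence. Controlling the tail contributions of these three integrals uniformly in $\sigma_1$ near the boundary is where the real work lies.
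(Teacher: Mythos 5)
Your proof is correct, and at its core it runs on the same engine as the paper's own argument: expand the likelihood \eqref{eq:mixlik} over allocations, keep a single term that is completely flat in one scale direction, and let the non-integrable $1/\sigma$ tail of the Jeffreys prior in that direction force divergence. The paper keeps the term in which all observations are assigned to the \emph{first} component, reparametrizes $(\sigma_1,\sigma_2)=(\tau,\tau\sigma)$ as in \eqref{eq:scaleprior}, integrates out $\tau$ (which converges) and lets the $\sigma$-integral diverge; you keep the mirror term $p_2^n\prod_j \mathfrak{n}(x_j|\mu_2,\sigma_2^2)$, stay in the original coordinates, and locate the divergence at $\sigma_1\to 0$. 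These two choices are equivalent. The genuine difference lies in how the prior is controlled, and here your route is strictly stronger. The paper only ever bounds $\pi^J$ \emph{from above} (via Hadamard's inequality, i.e.\ the product of the diagonal Fisher entries) and then integrates the likelihood term against that bound; strictly speaking, divergence of $\int L\times(\text{upper bound on }\pi^J)$ establishes nothing about $\int L\,\pi^J$, so the paper's own proof of this lemma carries the gap you identified. Your Schur-complement step, $\det I = I_{11}\,(I_{22}-I_{12}^2/I_{11})$ combined with the limits $\sigma_1^2 I_{11}\to 2p_1$, $I_{22}\to 2p_2/\sigma_2^2$ and $\sigma_1\sigma_2 I_{12}\to 0$, supplies exactly the missing \emph{lower} bound $\pi^J\ge c(\sigma_2)/\sigma_1$ near $\sigma_1=0$, which is what improperness actually requires. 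The residual work you flag (uniform-in-$\sigma_1$ domination of the three Fisher integrals) is routine: for instance, for $I_{12}$, after the substitution $z=(x-\mu_1)/\sigma_1$, bound the mixture denominator below by $p_2\mathfrak{n}(\mu_1+\sigma_1 z|\mu_2,\sigma_2^2)$, which leaves a polynomial-times-Gaussian dominating function independent of $\sigma_1\le 1$, and conclude by dominated convergence; the diagonal terms are handled the same way using the first component in the denominator. So your outline is sound, structurally parallel to the paper's proof, and more rigorous on the one step where the paper's argument is loose; what the paper's version buys in exchange is brevity and consistency with the coordinates $(\tau,\sigma)$ already used in its analysis of the prior.
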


\begin{proof}
Consider equation \eqref{eq:mixlik} generalized to the case of $\sigma_1$ and $\sigma_2$ unknown: then when we integrate
the posterior distribution with respect to $\sigma_1$ and $\sigma_2$, the complete integral may be split into several
integrals then summed up. In particular, if we consider the first part of the likelihood (which only depends on the
first component of the mixture) and use the change of variable used in \eqref{eq:scaleprior}, we have:

\begin{align*}
\bigintsss_0^{\infty} \bigintsss_0^{\infty} & c \frac{p_1^n}{\tau^n}\frac{p_1 p_2}{\tau\sigma} \exp \left\{-\frac{1}{2\tau^2}\sum_{i=1}^n(x_i-\mu_1)^2\right\} \nonumber \\
				& \times{} \left\{ \bigintsss_{-\infty}^\infty
\frac{\left(z^2-1\right)^2 \exp\left\{ -z^2\right\}}{p_1 \exp\left\{-\frac{z^2}{2}\right\}+\frac{p_2}{\sigma}\exp\left\{-\frac{(z\tau+\mu_1-\mu_2)^2}{2\tau^2\sigma^2}\right\}}
d z \right. \nonumber \\
					&\times \left.{} \bigintsss_{-\infty}^\infty
\frac{\left(u^2-1\right)^2 \exp\left\{-u^2\right\}}{p_1\sigma \exp\left\{-\frac{(u\tau\sigma+\mu_2-\mu_1)^2}{2\tau^2}\right\}+p_2\exp\left\{-\frac{u^2}{2}\right\}} 
d u \right\}^\frac{1}{2} d \tau d \sigma
\end{align*}

The integral with respect to $\tau$ in the previous equation converges, nevertheless the likelihood does not provide information for $\sigma$, then the integral with respect to $\sigma$ diverges and the posterior will be improper.

This results may be easily extented to the case of $k$ components: there is a part of the likelihood which only depends on the global scale parameter and is not informative for the ay other components; the form of the integral will remain the same, with integrations with respect to $\sigma_1,\sigma_2,\cdots,\sigma_k$ which do not converge.

\end{proof}

When only the standard deviations are unknown, the Jeffreys prior is concentrated around $0$.
Nevertheless, the posterior distribution shown in Figures \ref{fig:sd-priorpost-farm} turns out to be concentrated around the true values of the parameters for a sufficient high sample size (in the figures, $n$ is always equal to $100$). 

\begin{figure}
\centering
\includegraphics[width=6.5cm, height=7.5cm]{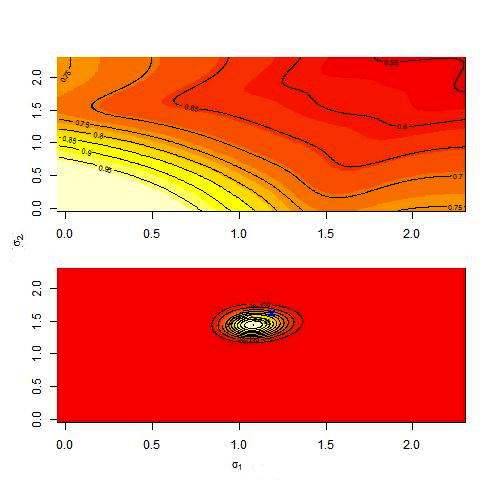}
\caption{Approximations (on a grid of values) of the Jeffreys prior (on the log-scale) when only the standard deviations of a Gaussian mixture model with 2 components are unknown (on the top) and of the derived posterior distribution (with known weights both equal to $0.5$ and known means equal to $(-5,5)$). The blue cross represents the maximum likelihood estimates.}
\label{fig:sd-priorpost-farm}
\end{figure}

Figures \ref{fig:sd-priorpost-clm} and \ref{fig:sd-priorpost-asym} show the prior and the posterior distributions of the scale parameters of a two-component mixture model for some situations with different weights and different means.

\begin{figure}
\centering
\includegraphics[width=6.5cm, height=7.5cm]{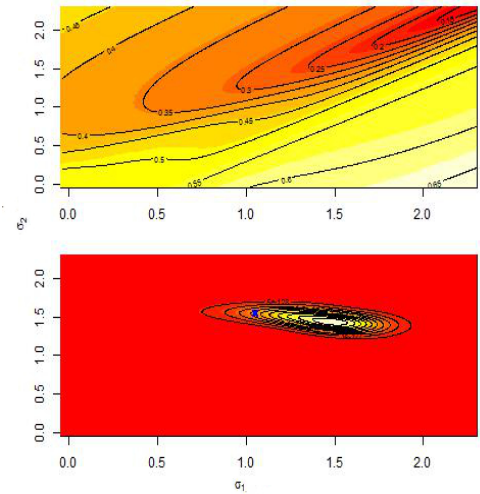}
\caption{Same as Figure \ref{fig:sd-priorpost-farm} but with known weights equal to $(0.25,0.75)$ and known means equal to $(-1,1)$.}
\label{fig:sd-priorpost-clm}
\end{figure}

\begin{figure}
\centering
\includegraphics[width=6.5cm, height=7.5cm]{lsd-priorpost-farm2}
\caption{Same as Figure \ref{fig:sd-priorpost-farm} but with known weights equal to $(0.25,0.75)$ and known means equal to $(-2,7)$.}
\label{fig:sd-priorpost-asym}
\end{figure}

Summarized results of the posterior approximation obtained via a random-walk Metropolis-Hastings algorithm by exploring
the posterior distribution associated with the Jeffreys prior on the standard deviations are shown in Figures
\ref{fig:sd2-bxp} and \ref{fig:sd3-bxp}, which display boxplots of the posterior means: provided a sufficiently high
sample size, simulations exhibit a convergent behavior. 

\begin{figure}
\centering
\includegraphics[width=6.5cm, height=7.5cm]{sd2-boxplot}
\caption{Boxplots of posterior means of the standard deviations of the two-component mixture model $0.50\mathcal{N}(-1,1) + 0.50\mathcal{N}(2,0.5)$ for 50 replications of the experiment and a sample size equal to $10$, obtained via MCMC with $10^5$ simulations. The red cross represents the true values.}
\label{fig:sd2-bxp}
\end{figure} 

\begin{figure}
\centering
\includegraphics[width=6.5cm, height=7.5cm]{sd3-boxplot}
\caption{Boxplots of posterior means of the standard deviations of the three-component mixture model $0.25\mathcal{N}(-1,1) + 0.65\mathcal{N}(0,0.5) + 0.10\mathcal{N}(2,5)$ for 50 replications of the experiment and a sample size equal to $50$, obtained via MCMC with $10^5$ simulations. The red cross represents the true values.}
\label{fig:sd3-bxp}
\end{figure} 

Repeated simulations show that, for a Gaussian mixture model with two components, a sample size equal to $10$ is necessary to have convergent results, while for a three-component Gaussian mixture model with a sample size equal to $50$ is still possible to have chains stuck to values of standard deviations close to $0$. 

Table \ref{tab:post2sd} and \ref{tab:post3sd} show results for repeated simulations in the cases of two-component and three-component Gaussian mixture models with unknown standard deviations, respectively, where the means that generate the data may be close or far from one another. In Table \ref{tab:post2sd} it seems that the chains tend to be convergent for sample sizes smaller than $10$, but in Table \ref{tab:post3sd} one may see that even with a high sample size (equal to $50$) it may happens, for $k=3$, that the chains are stuck to very small values of standard deviations and this fact confirms what we have proved with Lemma \ref{lem:sd-post}.

\begin{table}[h]
\centering
\footnotesize
\caption{$\sigma$ unknown, $k=2$: results of 50 replications of the experiment for both close and far means with a Monte Carlo approximation of the posterior distribution based on $10^5$ simulations and a burn-in of $10^4$ simulations. The table shows the average acceptance rate, the proportion of chains stuck at values of standard deviations close to 0 and the average ratio between the log-likelihood of the last accepted values and the true values in the 50 replications when using the Jeffreys prior.}
\label{tab:post2sd}
\begin{tabular}{|cccc|}
\hline
\multicolumn{1}{|l|}{{\bf k=2}}                                                    & \multicolumn{1}{l}{{\bf Jeffreys prior}}                             & \multicolumn{1}{l}{(Close Means)}                                                             & \multicolumn{1}{l|}{}                                                                                \\ \hline
\multicolumn{1}{|c|}{{\it \begin{tabular}[c]{@{}c@{}}Sample \\ Size\end{tabular}}} & {\it \begin{tabular}[c]{@{}c@{}}Ave. \\ Accept.\\ Rate\end{tabular}} & {\it \begin{tabular}[c]{@{}c@{}}Chains stuck\\ at small values\\ of $\sigma$\end{tabular}} & {\it \begin{tabular}[c]{@{}c@{}}Ave. \\ lik($\theta^{fin}$)\\ /\\ lik($\theta^{true}$)\end{tabular}} \\ \hline
\multicolumn{1}{|c|}{2}                                                            & 0.2414                                                               & 0.02                                                                                          & 1.2245                                                                                               \\
\multicolumn{1}{|c|}{3}                                                            & 0.1875                                                              & 0.02                                                                                          & 1.1976                                                                                               \\
\multicolumn{1}{|c|}{4}                                                            & 0.2403                                                               & 0.00                                                                                          & 1.0720                                                                                               \\
\multicolumn{1}{|c|}{5}                                                            & 0.2233                                                               & 0.02                                                                                          & 1.1269                                                                                               \\
\multicolumn{1}{|c|}{6}                                                            & 0.2475                                                               & 0.00                                                                                          & 1.0553                                                                                               \\
\multicolumn{1}{|c|}{7}                                                            & 0.2494                                                               & 0.02                                                                                          & 1.0324                                                                                               \\
\multicolumn{1}{|c|}{8}                                                            & 0.2465                                                               & 0.00                                                                                          & 1.0093                                                                                               \\
\multicolumn{1}{|c|}{9}                                                            & 0.2449                                                               & 0.00                                                                                          & 1.0026                                                                                               \\
\multicolumn{1}{|c|}{10}                                                           & 0.2476                                                               & 0.00                                                                                          & 0.9960                                                                                               \\
\multicolumn{1}{|c|}{15}                                                           & 0.2541                                                               & 0.00                                                                                          & 0.9959                                                                                               \\
\multicolumn{1}{|c|}{20}                                                           & 0.2480                                                               & 0.00                                                                                          & 0.9946                                                                                               \\
\multicolumn{1}{|c|}{30}                                                           & 0.2364                                                               & 0.00                                                                                          & 1.0052                                                                                               \\
\multicolumn{1}{|c|}{50}                                                           & 0.2510                                                               & 0.00                                                                                          & 0.9981                                                                                               \\
\multicolumn{1}{|c|}{100}                                                          & 0.3033                                                               & 0.00                                                                                          & 0.9994                                                                                               \\
\multicolumn{1}{|c|}{500}                                                          & 0.4314                                                               & 0.00                                                                                          & 0.9999                                                                                               \\
\multicolumn{1}{|c|}{1000}                                                         & 0.4353                                                               & 0.00                                                                                          & 1.0001                                                                                               \\ \hline
{\bf k=2}                                                                          & \multicolumn{1}{|c}{{\bf }}                                           & \multicolumn{1}{l}{(Far means)}                                                               & \multicolumn{1}{l}{}                                                                                \\ \hline
\multicolumn{1}{|c|}{2}                                                            & 0.2262                                                               & 0.14                                                                                          & 1.09202                                                                                              \\
\multicolumn{1}{|c|}{3}                                                            & 0.2384                                                               & 0.10                                                                                          & 1.0536                                                                                               \\
\multicolumn{1}{|c|}{4}                                                            & 0.2542                                                               & 0.02                                                                                          & 1.0281                                                                                               \\
\multicolumn{1}{|c|}{5}                                                            & 0.2502                                                               & 0.04                                                                                          & 0.9932                                                                                               \\
\multicolumn{1}{|c|}{6}                                                            & 0.2550                                                               & 0.00                                                                                          & 0.9981                                                                                               \\
\multicolumn{1}{|c|}{7}                                                            & 0.2554                                                               & 0.00                                                                                          & 0.9569                                                                                               \\
\multicolumn{1}{|c|}{8}                                                            & 0.2473                                                               & 0.00                                                                                          & 0.9929                                                                                               \\
\multicolumn{1}{|c|}{9}                                                            & 0.2481                                                               & 0.00                                                                                          & 0.9888                                                                                               \\
\multicolumn{1}{|c|}{10}                                                           & 0.2402                                                               & 0.00                                                                                          & 0.9969                                                                                               \\
\multicolumn{1}{|c|}{15}                                                           & 0.2431                                                               & 0.00                                                                                          & 0.9988                                                                                               \\
\multicolumn{1}{|c|}{20}                                                           & 0.2416                                                               & 0.00                                                                                          & 0.9998                                                                                               \\
\multicolumn{1}{|c|}{30}                                                           & 0.2453                                                               & 0.04                                                                                          & 1.0016                                                                                               \\
\multicolumn{1}{|c|}{50}                                                           & 0.2550                                                               & 0.00                                                                                          & 0.9992                                                                                               \\
\multicolumn{1}{|c|}{100}                                                          & 0.2359                                                               & 0.00                                                                                          & 0.9999                                                                                               \\
\multicolumn{1}{|c|}{500}                                                          & 0.3000                                                               & 0.00                                                                                          & 1.0001                                                                                               \\
\multicolumn{1}{|c|}{1000}                                                         & 0.3345                                                               & 0.00                                                                                          & 1.0000                                                                                               \\ \hline
\end{tabular}
\end{table}

\begin{table}[h]
\centering
\caption{$\sigma$ unknown, $k=3$: as in table \ref{tab:post2sd} for two three-components Gaussian mixture models, with close and far means.}
\label{tab:post3sd}
\begin{tabular}{|cccc|}
\hline
\multicolumn{1}{|l|}{{\bf k=3}}                                                    & \multicolumn{1}{l}{{\bf Jeffreys prior}}                             & \multicolumn{1}{l}{(Close Means)}                                                             & \multicolumn{1}{l|}{}                                                                                \\ \hline
\multicolumn{1}{|c|}{{\it \begin{tabular}[c]{@{}c@{}}Sample \\ Size\end{tabular}}} & {\it \begin{tabular}[c]{@{}c@{}}Ave. \\ Accept.\\ Rate\end{tabular}} & {\it \begin{tabular}[c]{@{}c@{}}Chains stuck\\ at small values\\ of $\sigma$\end{tabular}} & {\it \begin{tabular}[c]{@{}c@{}}Ave. \\ lik($\theta^{fin}$)\\ /\\ lik($\theta^{true}$)\end{tabular}} \\ \hline
\multicolumn{1}{|c|}{2}                                                            & 0.0441                                                               & 0.88                                                                                          & 0.1206                                                                                               \\
\multicolumn{1}{|c|}{5}                                                            & 0.0659                                                               & 0.72                                                                                          & 1.0638                                                                                               \\
\multicolumn{1}{|c|}{6}                                                            & 0.0621                                                               & 0.70                                                                                          & 1.1061                                                                                               \\
\multicolumn{1}{|c|}{7}                                                            & 0.1013                                                               & 0.54                                                                                          & 1.0655                                                                                               \\
\multicolumn{1}{|c|}{8}                                                            & 0.0781                                                               & 0.52                                                                                          & 1.0880                                                                                               \\
\multicolumn{1}{|c|}{9}                                                            & 0.0729                                                               & 0.60                                                                                          & 1.1003                                                                                               \\
\multicolumn{1}{|c|}{10}                                                           & 0.1506                                                               & 0.26                                                                                          & 1.0516                                                                                               \\
\multicolumn{1}{|c|}{15}                                                           & 0.1689                                                               & 0.18                                                                                          & 1.0493                                                                                               \\
\multicolumn{1}{|c|}{20}                                                           & 0.2322                                                               & 0.10                                                                                          & 1.0478                                                                                               \\
\multicolumn{1}{|c|}{30}                                                           & 0.2366                                                               & 0.00                                                                                          & 1.0125                                                                                               \\
\multicolumn{1}{|c|}{50}                                                           & 0.4407                                                               & 0.02                                                                                          & 1.0061                                                                                               \\
\multicolumn{1}{|c|}{100}                                                          & 0.2666                                                               & 0.00                                                                                          & 1.0021                                                                                               \\
\multicolumn{1}{|c|}{500}                                                          & 0.3871                                                               & 0.00                                                                                          & 1.0003                                                                                               \\
\multicolumn{1}{|c|}{1000}                                                         & 0.4353                                                               & 0.00                                                                                          & 1.0001                                                                                               \\ \hline
\multicolumn{1}{|l}{{\bf k=3}}                                                              & \multicolumn{1}{|c}{}                                        & \multicolumn{1}{l}{(Far means)}                                                               & \multicolumn{1}{l|}{}                                                                                \\ \hline
\multicolumn{1}{|c|}{2}                                                            & 0.0222                                                               & 0.78                                                                                          & 1.0045                                                                                               \\
\multicolumn{1}{|c|}{5}                                                            & 0.0610                                                               & 0.44                                                                                          & 1.0427                                                                                               \\
\multicolumn{1}{|c|}{6}                                                            & 0.0567                                                               & 0.52                                                                                          & 1.0317                                                                                               \\
\multicolumn{1}{|c|}{7}                                                            & 0.0779                                                               & 0.46                                                                                          & 1.0147                                                                                               \\
\multicolumn{1}{|c|}{8}                                                            & 0.0862                                                               & 0.32                                                                                          & 1.0244                                                                                               \\
\multicolumn{1}{|c|}{9}                                                            & 0.1312                                                               & 0.26                                                                                          & 1.0027                                                                                               \\
\multicolumn{1}{|c|}{10}                                                           & 0.1472                                                               & 0.18                                                                                          & 1.0350                                                                                               \\
\multicolumn{1}{|c|}{15}                                                           & 0.15884                                                              & 0.14                                                                                          & 1.0170                                                                                               \\
\multicolumn{1}{|c|}{20}                                                           & 0.2331                                                               & 0.06                                                                                          & 1.0092                                                                                               \\
\multicolumn{1}{|c|}{30}                                                           & 0.2464                                                               & 0.04                                                                                          & 1.0062                                                                                               \\
\multicolumn{1}{|c|}{50}                                                           & 0.2498                                                               & 0.00                                                                                          & 1.0017                                                                                               \\
\multicolumn{1}{|c|}{100}                                                          & 0.2567                                                               & 0.00                                                                                          & 1.0008                                                                                               \\
\multicolumn{1}{|c|}{500}                                                          & 0.2594                                                               & 0.00                                                                                          & 0.9999                                                                                               \\
\multicolumn{1}{|c|}{1000}                                                         & 0.3073                                                               & 0.00                                                                                          & 1.2161                                                                                               \\ \hline
\end{tabular}
\end{table} 

\subsubsection{Location and weight parameters unknown.}

Figure \ref{fig:MW-bxp} shows the boxplots of repeated simulations when both the weights and the means are unknown. 
It is evident that the posterior chains are concentrated around the true values, neverthless some chains (the 14\% of
the replications) show a drift to very high values (in absolute value) and this behavior suggests
improperness of the posterior distribution. 

\begin{figure}
\centering
\includegraphics[width=6.5cm, height=7.5cm]{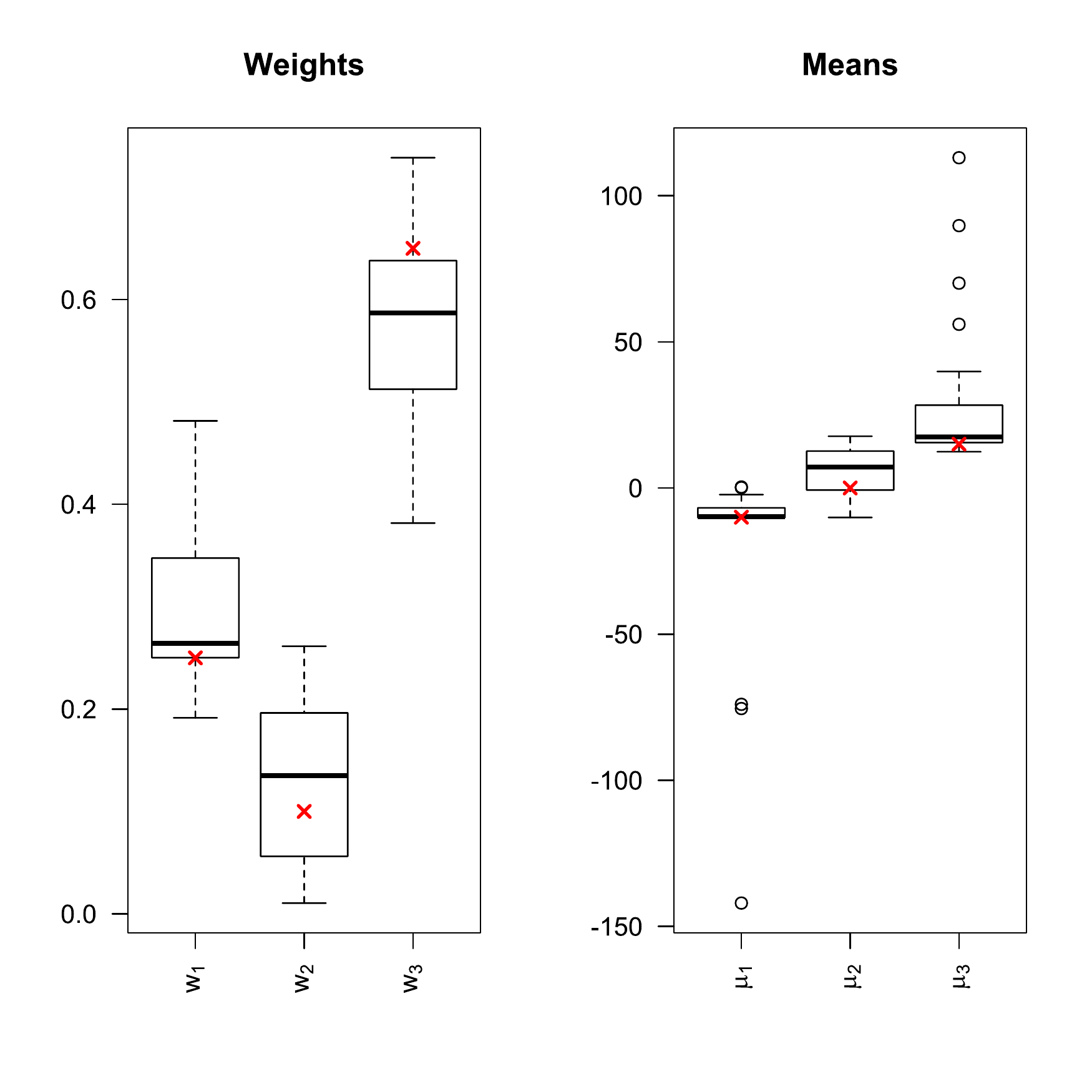}
\caption{Boxplots of posterior means of the weigths and the means of the three-component mixture model $0.25\mathcal{N}(-1,1) + 0.65\mathcal{N}(0,0.5) + 0.10\mathcal{N}(2,5)$ for 50 replications of the experiment, obtained via MCMC with $10^5$ simulations. The red cross represents the true value.}
\label{fig:MW-bxp}
\end{figure} 

\subsubsection{All the parameters unknown}
\label{sub:post}

Improperness of the prior does not imply improperness of the posterior, obviously,
but requires a careful checking of whether or not the posterior is proper, however the proof of Lemma \ref{lem:sd-post} gives an hint about the actual properness of the posterior distribution when all
the parameters are unknown. 

\begin{theorem}
The posterior distribution derived from the Jeffreys prior when all the parameters are unknown is improper.
\label{lem:all-post}
\end{theorem}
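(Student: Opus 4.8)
The plan is to show that the normalising integral $\int \pi^J(\theta)\,L(\theta)\,\dd\theta$ diverges, where $L$ is the mixture likelihood. As in \eqref{eq:mixlik}, I would expand $L$ as a sum over all allocations of the sample to the $k$ components; since every term is positive, it suffices to exhibit a single term whose integral against $\pi^J$ is infinite. The natural candidate is the term in which all $n$ observations are attributed to one component, say the first,
$$
L_1(\theta)=(2\pi)^{-n/2}\,p_1^n\,\prod_{j=1}^n \mathfrak{n}(x_j\mid\mu_1,\sigma_1^2),
$$
which is flat in the location and scale parameters $(\mu_2,\sigma_2,\ldots,\mu_k,\sigma_k)$ of the remaining components. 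This is exactly the non-informative term already exploited in the proof of Lemma \ref{lem:sd-post}.

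Next I would pass to the reparametrisation \eqref{reparMix}--\eqref{eq:k_reparMix}, taking $(\mu_1,\sigma_1)=(\mu,\tau)$ as reference location and scale, so that the free parameters become $\mu$, $\tau$, the scale ratios $\sigma_i$, the rescaled mean gaps, and the weights. By the preceding lemma the Jeffreys prior then factorises as $\pi^J\propto\tau^{-d/2}\,g(\text{shape})$, flat in $\mu$ and independent of $\tau$ in the shape factor $g$. Integrating $L_1$ first in $\mu$ (a Gaussian integral) and then in $\tau$ is harmless: the term $L_1$ supplies a factor $\tau^{-n}\exp\{-\tfrac{1}{2\tau^2}\sum_j(x_j-\mu_1)^2\}$ that dominates the prior power $\tau^{-d/2}$ and renders both integrals finite for $n$ large enough, exactly as the $\tau$-integral was seen to converge in the display following \eqref{eq:scaleprior}.

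The divergence is then produced by the remaining integral over the scale ratio $\sigma$ of an unoccupied component. Repeating the change of variables of \eqref{eq:scaleprior}, the surviving integrand is a bounded and strictly positive function of $\sigma$ multiplied by $1/\sigma$, so that the outer integral behaves like $\int^{\infty}\sigma^{-1}\,\dd\sigma$ and diverges; intuitively, $L_1$ carries no information on $\sigma$ while $\pi^J$ decays only as $1/\sigma$. Since this mechanism is driven by the scale directions alone, it applies for every $k\ge 2$, and in particular it covers the case $k=2$ for which the means-only posterior of Lemma \ref{lem:mean-post} was proper.

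The step I expect to be the main obstacle is the last one, because concluding divergence requires a \emph{lower} bound on $\pi^J$ — that is, on the square root of the full Fisher determinant — along the non-informed scale direction, whereas Hadamard's inequality, used throughout Section \ref{subsec:priors}, only delivers an \emph{upper} bound. To secure the lower bound I would argue that, as the scale of the unoccupied component separates from the reference scale, the corresponding rows and columns of the expected Fisher information decouple and the matrix becomes asymptotically block diagonal; its determinant then factorises and $|I(\theta)|^{1/2}$ retains the $1/\sigma$ behaviour, which furnishes the required positive lower bound and closes the argument.
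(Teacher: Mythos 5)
Your overall strategy is the same as the paper's: expand the likelihood as in \eqref{eq:mixlik}, keep only the term $L_1$ in which all $n$ observations are allocated to the first component, and show that the prior fails to be integrable in the parameters of the unoccupied component, which $L_1$ leaves completely uninformed. Your diagnosis of the weak point is also accurate, and in fact sharper than the paper itself: the paper's proof bounds $\pi^J$ \emph{above} by the product of diagonal Fisher entries (Hadamard) and then observes that the integral of this bound diverges, which is logically insufficient, since improperness requires a \emph{lower} bound on the prior. Your proposal to obtain that lower bound from asymptotic decoupling (block-diagonality) of the expected Fisher information is the right repair, and it does make the determinant factorise with a strictly positive limiting factor for the occupied block.

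However, the step where you quantify the decay fails. When both the location and the scale of the unoccupied component are unknown, its decoupled block of the Fisher information is asymptotically $\mathrm{diag}\left(p_2/\sigma_2^2,\,2p_2/\sigma_2^2\right)$, so after factorisation $|I(\theta)|^{1/2}$ picks up a factor of order $\sigma_2^{-2}$, not $\sigma_2^{-1}$; the $1/\sigma$ rate you quote belongs to the scale-only setting of \eqref{eq:scaleprior} and Lemma \ref{lem:sd-post}, where each component carries a single parameter. Since $\int^{\infty}\sigma^{-2}\,\dd\sigma<\infty$, there is no divergence at $\sigma\to\infty$, and your closing argument collapses as stated. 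The divergence must instead be located at $\sigma\to 0$, where $\sigma^{-2}$ is non-integrable; this is exactly the factor $\sigma^{-2}$ invoked (with an upper bound, hence incompletely) in the paper's proof. Your block-diagonal lower bound works equally well in that limit: as $\sigma_2\to 0$ the cross terms vanish, the $(p,\mu_1,\sigma_1)$ block has a positive limiting determinant, and restricting the remaining parameters to a compact set on which $L_1$ is bounded below yields $\int \pi^J L_1 \gtrsim c\int_0^{\epsilon}\sigma_2^{-2}\,\dd\sigma_2=\infty$. Alternatively, an even simpler exit uses the location direction of the unoccupied component: the same decoupling shows $\pi^J$ tends to a strictly positive constant as $|\mu_2|\to\infty$ while $L_1$ is flat in $\mu_2$, so the $\mu_2$-integral alone diverges.
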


\begin{proof}
Consider the elements on the diagonal of the Fisher information matrix; again, since the Fisher information matrix is positive definite, the determinant is bounded by the product of the terms in the diagonal. 

Consider a reparametrization into $\tau=\sigma_1$ and $\tau\sigma=\sigma_2$. Then it is straightforward to see that the
integral of this part of the prior distribution will depend on a term $(\tau)^{-(d+1)}(\sigma)^{-d}$. Again, as in the
proof of Lemma \ref{lem:sd-post}, when composing the prior with the part of the likelihood which only depends on the
first component, this part does not provide information about the parameters $\sigma$ and the integral will diverge. 

In particular, the integral of the first part of the posterior distribution relative to the part of the likelihood dependent on the first component only and on the product of the diagonal terms of the Fisher information matrix for the prior when considering a two-component mixture model is

\begin{align*}
\int_0^1  & \int_{-\infty}^{+\infty} \int_{-\infty}^{+\infty} \int_0^{\infty} \int_0^{\infty} c \frac{p_1^n}{\tau^n}\frac{p_1^2 p_2^2}{\tau^3\sigma^2} \exp \left\{-\frac{1}{2\tau^2}\sum_{i=1}^n(x_i-\mu_1)^2\right\} \nonumber \\
				& \times{} \left\{ \int_{-\infty}^{\infty} \frac{\left[ \sigma\exp\left\{-\frac{(\tau\sigma y + \delta)^2}{2\tau^2} \right\} - \exp\left\{-\frac{y^2}{2} \right\}\right]^2}{p_1 \sigma \exp\left\{-\frac{(\tau \sigma y + \delta)^2}{2\tau^2}\right\}+p_2\exp\left\{-\frac{y^2}{2}\right\}}
d y \right. \nonumber \\
				& \times \left.{} \int_{-\infty}^{\infty} \frac{z^2 \exp(-z^2)}{p_1 \exp\left\{-\frac{z^2}{2}\right\}+\frac{p_2}{\sigma}\exp\left\{-\frac{(z\tau-\delta)^2}{2\tau^2\sigma^2}\right\}}
d z \right. \nonumber \\
				& \times \left.{} \int_{-\infty}^\infty
\frac{w^2 \exp\left\{ -w^2\right\}}{p_1 \sigma \exp\left\{-\frac{(\tau \sigma w+\delta)^2}{2\tau^2\sigma^2}\right\}+p_2\exp\left\{-\frac{w^2}{2}\right\}}
d w \right. \nonumber \\
				& \times \left.{} \int_{-\infty}^\infty
\frac{\left(z^2-1\right)^2 \exp\left\{ -z^2\right\}}{p_1 \exp\left\{-\frac{z^2}{2}\right\}+\frac{p_2}{\sigma}\exp\left\{-\frac{(z\tau+\mu_1-\mu_2)^2}{2\tau^2\sigma^2}\right\}}
d z \right. \nonumber \\
				&\times \left.{}  \int_{-\infty}^\infty
\frac{\left(u^2-1\right)^2 \exp\left\{-u^2\right\}}{p_1\sigma \exp\left\{-\frac{(u\tau\sigma+\mu_2-\mu_1)^2}{2\tau^2}\right\}+p_2\exp\left\{-\frac{u^2}{2}\right\}} 
d u \right\}^\frac{1}{2} d \tau d \sigma d \mu_1 d \mu_2 d p_1 \nonumber
\end{align*}

When considering the integrals relative to the Jeffreys prior, they do not represent an issue for convergence with respect to the scale parameters, because exponential terms going to $0$ as the scale parameters tend to $0$ are present. However, when considering the part out of the previous integrals, a factor $\sigma^-2$ whose behavior is not convergent is present. Then this particular part of the posterior distribution is not integrating.

When considering the case of $k$ components, the integral will always inversily depends on $\sigma_1, \sigma_2,\cdots, \sigma_{k-1}$ and then the posterior will always be improper. 

As a note aside, it is worth noting that the usual separation between parameters proposed by Jeffreys himself in the multidimensional problems does not change the behavior of the posterior, because even if the Fisher information matrix is decomposed as

\begin{equation*}
I(\theta)=\left( \begin{matrix} I_1(\theta_1) & 0 \\ 0 & I_2(\theta_2) \end{matrix} \right)
\end{equation*}

\noindent for any possible combination of the parameters $\theta=(p,\mu_1,\mu_2,\sigma_1,\sigma_2)$ (note that $\theta_1$ and $\theta_2$ are vectors and $I(\theta_1)$ and $I(\theta_2)$ are diagonal or non-diagonal matrices), the product of the elements in the diagonal (considered in the proof) will be the same.

\end{proof}

A comparison with maximum likelihood estimation obtained via EM has shown that the Bayesian estimates obtained via MCMC and by using a Jeffreys prior seems to better identify the true values which have generated the data for a sufficient high sample size. Table \ref{tab:EMvsBayes} shows the comparison between the ML and the Bayesian estimates (for repeated simulations, the initial values for the MCMC algorithm have been randomly chosen to have a sufficiently high likelihood level). The log-likelihood value of the ML estimates is always lower that the log-likelihood value of the Bayesian estimates. The better performance of the Bayesian algorithm is only shown for practical reasons, since we have already proved the posterior distribution is improper. Figure \ref{priorlik} shows that the MCMC algorithm accepts moves with an increasing likelihood value, until this value stabilizes around $-210$. The same happens for the prior level. 

\begin{table}
\centering
    \caption{Comparison between ML estimates and Bayesian estimates obtained by using a Jeffreys prior for a 3-components Gaussian mixture model.} \label{tab:EMvsBayes}
    \begin{tabular}{ | c | c | c | c | c | c |}
    \hline
    \textbf{Parameters} & \textbf{ML} & \textbf{Bayes}  & \textbf{True} \\ \hline
     $\mu$ & (-7.245,13.308,14.999) & (-10.003,0.307,14.955) & (-10,0,15) \\ \hline
	 $\sigma$ & (0.547,5.028,0.154) & (1.243,3.642,0.607) & (1.0,5.0,0.5) \\ \hline    
	 w & (0.350,0.016,0.634) & (0.258,0.106,0.636) & (0.25,0.10,0.65) \\ \hline
    \end{tabular}
\end{table}

\begin{figure}
\centering
\includegraphics[width=6.5cm, height=7.5cm]{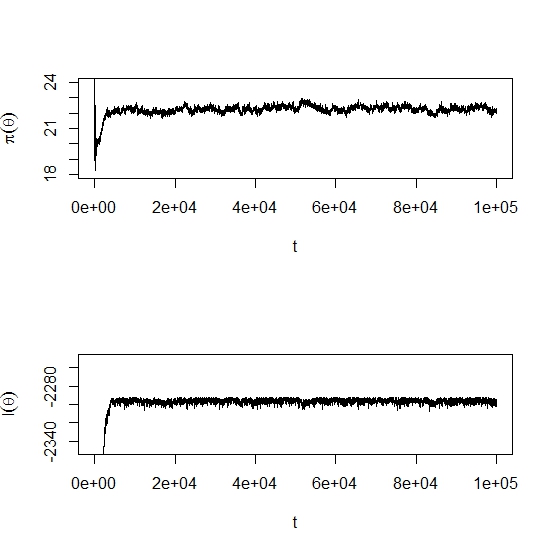}
\caption{Values of (Jeffreys) prior (above) and likelihood function (below) for the accepted moves of the MCMC algorithm which estimate the posterior distribution of the parameters of a 3-components Gaussian mixture model and a sample size equal to 1000.}
\label{priorlik}
\end{figure}

For small sample sizes, the chains tend to get stuck when very small values of standard deviations are accepted. Table \ref{tab:allunkn2} and \ref{tab:allunkn3} show the results for different sample sizes and different scenarios (in particular, the situations when the means are close or far from each other are considered) for a mixture model with two and three components respectively. The second and the third columns show the reason why the chain goes into trouble: sometimes the chains do not converge and tend towards very high values of means, sometimes the chains get stuck to very small values of standard deviations. 

\begin{table}[]
\centering
\small
\caption{k=2, ($\mathbf{p}$, $\mu$, $\sigma$) unknown: results of 50 replications of the experiment for both close and far means with a Monte Carlo 
approximation of the posterior distribution based on $10^5$ simulations and a burn-in of 
$10^4$ simulations. The table shows the average acceptance rate, the proportion of chains 
diverging towards higher values, the proportion of chains stuck at values of standard deviations close to 0 and the average ratio between the log-likelihood of the last accepted values and the true values in the 50 replications when using the Jeffreys prior.}
\label{tab:allunkn2}
\begin{tabular}{|ccccc|}
\hline
\multicolumn{1}{|l|}{{\bf k=2}}                                                    & \multicolumn{1}{l}{{\bf Jeffreys prior}}                             & \multicolumn{1}{l}{(Close Means)}                                                             & \multicolumn{1}{l}{}                                                                   & \multicolumn{1}{l|}{}                                                                                \\ \hline
\multicolumn{1}{|c|}{{\it \begin{tabular}[c]{@{}c@{}}Sample \\ Size\end{tabular}}} & {\it \begin{tabular}[c]{@{}c@{}}Ave. \\ Accept.\\ Rate\end{tabular}} & {\it \begin{tabular}[c]{@{}c@{}}Chains stuck\\ at small values\\ of $\sigma$\end{tabular}} & {\it \begin{tabular}[c]{@{}c@{}}Chains towards\\ high values of \\ $\mu$\end{tabular}} & {\it \begin{tabular}[c]{@{}c@{}}Ave. \\ lik($\theta^{fin}$)\\ /\\ lik($\theta^{true}$)\end{tabular}} \\ \hline
\multicolumn{1}{|c|}{5}                                                            & 0.1119                                                               & 0.54                                                                                          & 0.74                                                                                   & 3.5280                                                                                             \\
\multicolumn{1}{|c|}{6}                                                            & 0.1241                                                               & 0.56                                                                                          & 0.74                                                                                   & 3.6402                                                                                             \\
\multicolumn{1}{|c|}{7}                                                            & 0.0927                                                               & 0.56                                                                                          & 0.70                                                                                   & 3.2180                                                                                             \\
\multicolumn{1}{|c|}{8}                                                            & 0.0693                                                               & 0.54                                                                                          & 0.70                                                                                   & 3.1380                                                                                               \\
\multicolumn{1}{|c|}{9}                                                            & 0.1236                                                               & 0.42                                                                                          & 0.72                                                                                   & 3.3281                                                                                               \\
\multicolumn{1}{|c|}{10}                                                           & 0.1081                                                               & 0.44                                                                                          & 0.84                                                                                   & 2.8173                                                                                               \\
\multicolumn{1}{|c|}{11}                                                           & 0.1172                                                               & 0.40                                                                                          & 0.78                                                                                   & 2.1455                                                                                               \\
\multicolumn{1}{|c|}{12}                                                           & 0.1107                                                               & 0.40                                                                                          & 0.70                                                                                   & 1.8998                                                                                              \\
\multicolumn{1}{|c|}{13}                                                           & 0.1273                                                               & 0.44                                                                                          & 0.74                                                                                   & 1.8269                                                                                               \\
\multicolumn{1}{|c|}{14}                                                           & 0.1253                                                               & 0.42                                                                                          & 0.76                                                                                   & 1.2876                                                                                               \\
\multicolumn{1}{|c|}{15}                                                           & 0.1218                                                               & 0.36                                                                                          & 0.82                                                                                   & 1.2949                                                                                               \\
\multicolumn{1}{|c|}{20}                                                           & 0.1278                                                               & 0.38                                                                                          & 0.66                                                                                   & 1.2587                                                                                               \\ \hline
{\bf k=2}                                                                          & \multicolumn{1}{|l}{{\bf }}                                           & \multicolumn{1}{l}{(Far means)}                                                               & \multicolumn{1}{l}{}                                                                   & \multicolumn{1}{l|}{}                                                                                \\ \hline
\multicolumn{1}{|c|}{5}                                                            & 0.1650                                                               & 0.18                                                                                          & 0.30                                                                                   & 3.7712                                                                                              \\
\multicolumn{1}{|c|}{6}                                                            & 0.2218                                                               & 0.12                                                                                          & 0.20                                                                                   & 3.1400                                                                                               \\
\multicolumn{1}{|c|}{7}                                                            & 0.1836                                                               & 0.12                                                                                          & 0.36                                                                                   & 3.1461                                                                                               \\
\multicolumn{1}{|c|}{8}                                                            & 0.2313                                                               & 0.08                                                                                          & 0.08                                                                                   & 3.5102                                                                                               \\
\multicolumn{1}{|c|}{9}                                                            & 0.1942                                                               & 0.14                                                                                          & 0.12                                                                                   & 3.5585                                                                                               \\
\multicolumn{1}{|c|}{10}                                                           & 0.2290                                                               & 0.04                                                                                          & 0.02                                                                                   & 3.0718                                                                                               \\
\multicolumn{1}{|c|}{11}                                                           & 0.2320                                                               & 0.04                                                                                          & 0.02                                                                                   & 2.9825                                                                                               \\
\multicolumn{1}{|c|}{12}                                                           & 0.2305                                                               & 0.08                                                                                          & 0.02                                                                                   & 2.9122                                                                                               \\
\multicolumn{1}{|c|}{13}                                                           & 0.2264                                                               & 0.06                                                                                          & 0.00                                                                                   & 2.9571                                                                                               \\
\multicolumn{1}{|c|}{14}                                                           & 0.2292                                                               & 0.08                                                                                          & 0.04                                                                                   & 1.0612                                                                                               \\
\multicolumn{1}{|c|}{15}                                                           & 0.2005                                                               & 0.12                                                                                          & 0.04                                                                                   & 1.0804                                                                                               \\
\multicolumn{1}{|c|}{20}                                                           & 0.2343                                                               & 0.00                                                                                          & 0.02                                                                                   & 1.0146                                                                                               \\ \hline
\end{tabular}
\end{table}

\begin{table}[]
\centering
\caption{k=3, ($\mathbf{p}$, $\mu$, $\sigma$) unknown: as in table \ref{tab:allunkn2} for two three-component Gaussian mixture models with close and far means.}
\label{tab:allunkn3}
\begin{tabular}{|ccccc|}
\hline
\multicolumn{1}{|l|}{{\bf k=3}}                                                    & \multicolumn{1}{l}{{\bf Jeffreys prior}}                             & \multicolumn{1}{l}{(Close Means)}                                                          & \multicolumn{1}{l}{}                                                                   & \multicolumn{1}{l|}{}                                                                                \\ \hline
\multicolumn{1}{|c|}{{\it \begin{tabular}[c]{@{}c@{}}Sample \\ Size\end{tabular}}} & {\it \begin{tabular}[c]{@{}c@{}}Ave. \\ Accept.\\ Rate\end{tabular}} & {\it \begin{tabular}[c]{@{}c@{}}Chains stuck\\ at small values\\ of $\sigma$\end{tabular}} & {\it \begin{tabular}[c]{@{}c@{}}Chains towards\\ high values of \\ $\mu$\end{tabular}} & {\it \begin{tabular}[c]{@{}c@{}}Ave. \\ lik($\theta^{fin}$)\\ /\\ lik($\theta^{true}$)\end{tabular}} \\ \hline
\multicolumn{1}{|c|}{5}                                                            & 0.0302                                                               & 0.76                                                                                       & 0.44                                                                                   & 2.9095                                                                                           \\
\multicolumn{1}{|c|}{6}                                                            & 0.0368                                                               & 0.76                                                                                       & 0.48                                                                                   & 3.2507                                                                                           \\
\multicolumn{1}{|c|}{7}                                                            & 0.0290                                                               & 0.80                                                                                       & 0.30                                                                                   & 3.1318                                                                                             \\
\multicolumn{1}{|c|}{8}                                                            & 0.0578                                                               & 0.62                                                                                       & 0.54                                                                                   & 3.0043                                                                                               \\
\multicolumn{1}{|c|}{9}                                                            & 0.0488                                                               & 0.74                                                                                       & 0.52                                                                                   & 2.5798                                                                                             \\
\multicolumn{1}{|c|}{10}                                                           & 0.0426                                                               & 0.70                                                                                       & 0.44                                                                                   & 2.3023                                                                                               \\
\multicolumn{1}{|c|}{11}                                                           & 0.0572                                                               & 0.66                                                                                       & 0.38                                                                                   & 1.7497                                                                                               \\
\multicolumn{1}{|c|}{12}                                                           & 0.0464                                                               & 0.66                                                                                       & 0.48                                                                                   & 1.4032                                                                                             \\
\multicolumn{1}{|c|}{13}                                                           & 0.0706                                                               & 0.52                                                                                       & 0.44                                                                                   & 1.9303                                                                                               \\
\multicolumn{1}{|c|}{14}                                                           & 0.0556                                                               & 0.66                                                                                       & 0.36                                                                                   & 1.3588                                                                                               \\
\multicolumn{1}{|c|}{15}                                                           & 0.0610                                                               & 0.74                                                                                       & 0.44                                                                                   & 1.3588                                                                                               \\
\multicolumn{1}{|c|}{20}                                                           & 0.0654                                                               & 0.48                                                                                       & 0.46                                                                                   & 1.2161                                                                                               \\ \hline
{\bf k=3}                                                                          & \multicolumn{1}{|l}{{\bf }}                                           & \multicolumn{1}{l}{(Far means)}                                                            & \multicolumn{1}{l}{}                                                                   & \multicolumn{1}{l|}{}                                                                                \\ \hline
\multicolumn{1}{|c|}{5}                                                            & 0.0644                                                               & 0.60                                                                                       & 0.10                                                                                   & 5.9707                                                                                     \\
\multicolumn{1}{|c|}{6}                                                            & 0.0631                                                               & 0.64                                                                                       & 0.18                                                                                   & 2.0557                                                                                       \\
\multicolumn{1}{|c|}{7}                                                            & 0.0726                                                               & 0.54                                                                                       & 0.08                                                                                   & 2.9351                                                                                          \\
\multicolumn{1}{|c|}{8}                                                            & 0.1745                                                               & 0.22                                                                                       & 0.12                                                                                   & 2.9193                                                                                              \\
\multicolumn{1}{|c|}{9}                                                            & 0.1809                                                               & 0.32                                                                                       & 0.04                                                                                   & 95.793                                                                                              \\
\multicolumn{1}{|c|}{10}                                                           & 0.1724                                                               & 0.28                                                                                       & 0.14                                                                                   & 2.5938                                                                                               \\
\multicolumn{1}{|c|}{11}                                                           & 0.1948                                                               & 0.24                                                                                       & 0.14                                                                                   & 3.1566                                                                                             \\
\multicolumn{1}{|c|}{12}                                                           & 0.1718                                                               & 0.26                                                                                       & 0.08                                                                                   & 2.8595                                                                                               \\
\multicolumn{1}{|c|}{13}                                                           & 0.2110                                                               & 0.16                                                                                       & 0.06                                                                                   & 1.8595                                                                                               \\
\multicolumn{1}{|c|}{14}                                                           & 0.1880                                                               & 0.24                                                                                       & 0.10                                                                                   & 1.2165                                                                                               \\
\multicolumn{1}{|c|}{15}                                                           & 0.1895                                                               & 0.20                                                                                       & 0.12                                                                                   & 1.2133                                                                                               \\
\multicolumn{1}{|c|}{20}                                                           & 0.2468                                                               & 0.08                                                                                       & 0.02                                                                                   & 1.0146                                                                                               \\ \hline
\end{tabular}
\end{table}

Since the improperness of the posterior distribution is due to the scale parameters, we may use a reparametrization of the problem as in Equation \eqref{reparMix} and use a proper prior on the parameter $\sigma$, for example, by following \cite{robert:mengersen:1999}

\begin{equation*}
	p(\sigma)=\frac{1}{2}\mathcal{U}_{[0,1]}(\sigma)+\frac{1}{2}\frac{1}{\mathcal{U}_{[0,1]}(\sigma)}.
\end{equation*}

\noindent and the Jeffreys prior for all the other parameters $(\mathbf{p},\mu,\delta,\tau)$ conditionally on $\sigma$. 

Since the improperness of the posterior distribution is due to the scale parameters, we may use a reparametrization of the problem as in Equation \eqref{reparMix} and use a proper prior on the parameter $\sigma$, for example, by following \cite{robert:mengersen:1999}

\begin{equation*}
	p(\sigma)=\frac{1}{2}\mathcal{U}_{[0,1]}(\sigma)+\frac{1}{2}\frac{1}{\mathcal{U}_{[0,1]}(\sigma)}.
\end{equation*}

\noindent and the Jeffreys prior for all the other parameters $(\mathbf{p},\mu,\delta,\tau)$ conditionally on $\sigma$. 

Actually, using a proper prior on $\sigma$ does not avoid convergence trouble, as demonstrated by Table \ref{tab:sigmaprop2}, which shows that, even if the chains with respect to the standard deviations are not stuck around $0$ when using a proper prior for $\sigma$ in the reparametrization proposed by \cite{robert:mengersen:1999}, the chains with respect to the locations parameters demonstrate a divergent behavior.

\begin{table}[]
\centering
\small
\caption{k=2, ($\mathbf{p}$, $\mu$, $\delta$, $\tau$,$\sigma$) unknown, proper prior on $\sigma$: results of 50 replications of the experiment by using a proper prior on $\sigma$ and the Jeffreys prior for the other parameters conditionally on it for both close and far means with a Monte Carlo 
approximation of the posterior distribution based on $10^5$ simulations and a burn-in of 
$10^4$ simulations. The table shows the average acceptance rate, the proportion of chains 
diverging towards higher values, the proportion of chains stuck at values of standard deviations close to 0 and the average ratio between the log-likelihood of the last accepted values and the true values in the 50 replications when using the Jeffreys prior. }
\label{tab:sigmaprop2}
\begin{tabular}{|c|cccc|}
\hline
\multicolumn{1}{|l|}{{\bf K=2}}                              & \multicolumn{1}{l}{{\bf Jeffreys prior}}                             & \multicolumn{1}{l}{(Close Means)}                                                          & \multicolumn{1}{l}{}                                                                   & \multicolumn{1}{l|}{}                                                                                \\ \hline
{\it \begin{tabular}[c]{@{}c@{}}Sample \\ Size\end{tabular}} & {\it \begin{tabular}[c]{@{}c@{}}Ave. \\ Accept.\\ Rate\end{tabular}} & {\it \begin{tabular}[c]{@{}c@{}}Chains stuck\\ at small values\\ of $\sigma$\end{tabular}} & {\it \begin{tabular}[c]{@{}c@{}}Chains towards\\ high values of \\ $\mu$\end{tabular}} & {\it \begin{tabular}[c]{@{}c@{}}Ave. \\ lik($\theta^{fin}$)\\ /\\ lik($\theta^{true}$)\end{tabular}} \\ \hline
5                                                            & 0.2094                                                               & 0.02                                                                                       & 0.92                                                                                   & 1.4440                                                                                               \\
6                                                            & 0.2152                                                               & 0.00                                                                                       & 0.98                                                                                   & 1.3486                                                                                               \\
7                                                            & 0.2253                                                               & 0.00                                                                                       & 0.92                                                                                   & 1.3290                                                                                               \\
8                                                            & 0.2021                                                               & 0.00                                                                                       & 0.94                                                                                   & 1.2258                                                                                               \\
9                                                            & 0.1828                                                               & 0.00                                                                                       & 0.84                                                                                   & 1.2666                                                                                               \\
10                                                           & 0.2087                                                               & 0.00                                                                                       & 0.88                                                                                   & 1.1770                                                                                               \\
11                                                           & 0.1854                                                               & 0.00                                                                                       & 0.94                                                                                   & 1.2088                                                                                               \\
12                                                           & 0.1829                                                               & 0.00                                                                                       & 0.86                                                                                   & 1.2153                                                                                               \\
13                                                           & 0.1658                                                               & 0.00                                                                                       & 0.92                                                                                   & 1.1682                                                                                               \\
14                                                           & 0.2017                                                               & 0.00                                                                                       & 0.86                                                                                   & 1.2043                                                                                               \\
15                                                           & 0.1991                                                               & 0.00                                                                                       & 0.88                                                                                   & 1.2002                                                                                               \\
20                                                           & 0.1851                                                               & 0.00                                                                                       & 0.76                                                                                   & 1.1688                                                                                               \\ \hline
{\bf K=2}                                                    & \multicolumn{1}{l}{{\bf }}                                           & \multicolumn{1}{l}{(Far means)}                                                            & \multicolumn{1}{l}{}                                                                   & \multicolumn{1}{l|}{}                                                                                \\ \hline
5                                                            & 0.2071                                                               & 0.00                                                                                       & 0.70                                                                                   & 1.5741                                                                                               \\
6                                                            & 0.2021                                                               & 0.00                                                                                       & 0.68                                                                                   & 1.4384                                                                                               \\
7                                                            & 0.1947                                                               & 0.00                                                                                       & 0.60                                                                                   & 1.3597                                                                                               \\
8                                                            & 0.2054                                                               & 0.00                                                                                       & 0.44                                                                                   & 1.2869                                                                                               \\
9                                                            & 0.2093                                                               & 0.00                                                                                       & 0.46                                                                                   & 1.3064                                                                                               \\
10                                                           & 0.2271                                                               & 0.00                                                                                       & 0.20                                                                                   & 1.1618                                                                                               \\
11                                                           & 0.2030                                                               & 0.00                                                                                       & 0.32                                                                                   & 1.1996                                                                                               \\
12                                                           & 0.2178                                                               & 0.00                                                                                       & 0.24                                                                                   & 1.1494                                                                                               \\
13                                                           & 0.2812                                                               & 0.00                                                                                       & 0.18                                                                                   & 1.1215                                                                                               \\
14                                                           & 0.1880                                                               & 0.00                                                                                       & 0.08                                                                                   & 1.0717                                                                                               \\
15                                                           & 0.2511                                                               & 0.00                                                                                       & 0.06                                                                                   & 1.0594                                                                                               \\
20                                                           & 0.2359                                                               & 0.00                                                                                       & 0.00                                                                                   & 1.0166                                                                                               \\ \hline
\end{tabular}
\end{table}

\vspace{0.3cm}
\section{A noninformative alternative to Jeffreys prior}
\label{sec:alternative}

The information brought by the Jeffreys prior does not seem to be enough to conduct inference in the case of mixture
models. The computation of the determinant creates a dependence between the elements of the Fisher information matrix in
the definition of the prior distribution which makes it difficult to find slight modifications of this prior that would
lead to a proper posterior distribution. For example, using a proper prior for part of the scale parameters and the
Jeffreys prior conditionally on them does not avoid impropriety, as we have demonstrated in Section \ref{sub:post}.

The literature covers attempts to define priors which add a small amount of information that is sufficient to conduct
the statistical analysis without overwhelming the information contained in the data. Some of these are related to the
computational issues in estimating the parameters of mixture models, as in the approach of
\cite{casella:mengersen:robert:titterington:2002}, who find a way to use perfect slice sampler by focusing on components
in the exponential family and conjugate priors. A characteristic example is given by \cite{richardson:green:1997}, who
propose weakly informative priors, which are data-dependent (or
empirical Bayes) and are represented by flat normal priors over an interval corresponding to the range of the data.
Nevertheless, since mixture models belong to the class of ill-posed problems, the influence of a proper prior over the
resulting inference is difficult to assess.

Another solution found in \cite{mengersen:robert:1996} proceeds through the reparametrization \eqref{reparMix} and
introduces a reference component that allows for improper priors. This approach then envisions the other parameters 
as departures from the reference and ties them together by considering each parameter $\theta_i$ as a perturbation of
the parameter of the previous component $\theta_{i-1}$. This perspective is justified by the fact that the $(i-1)$-th
component is not informative enough to absorb all the variability in the data. For instance, a three-component 
mixture model gets rewritten as
\begin{align*}
p\mathcal{N}(\mu,\tau^2)&+(1-p)q\mathcal{N}(\mu+\tau\theta,\tau^2\sigma_1^2) \\
						&\quad {} + (1-p)(1-q)\mathcal{N}(\mu+\tau\theta+\tau\sigma\epsilon,\tau^2\sigma_1^2\sigma_2^2)
\end{align*} 
\noindent where one can impose the constraint $1 \geq \sigma_1 \geq \sigma_2$ for identifiability reasons. Under this
representation, it is possible to use an improper prior on the global location-scale parameter $(\mu,\tau)$, while
proper priors must be applied to the remaining parameters. This reparametrization has been used also for exponential components by \cite{gruet:philippe:robert:1999} and Poisson
components by \cite{robert:titterington:1998}. Moreover, \cite{roeder:wasserman:1997} propose a Markov prior which follows the same resoning of dependence between the parameters for Gaussian components, where each parameter is again a perturbation of the parameter of the previous component $\theta_{i-1}$.

This representation suggests to define a global location-scale parameter in a more implicit way, via a hierarchical
model that considers more levels in the analysis and choose noninformative priors at the last level in the hierarchy.
 
More precisely, consider the Gaussian mixture model \eqref{eq:theMix}

\begin{equation}
\label{eq:hierarc1}
g(x|\boldsymbol{\theta})=\sum_{i=1}^K p_i \mathfrak{n}(x|\mu_i,\sigma_i).
\end{equation}

The parameters of each component may be considered as related in some way; for example, the observations have a
reasonable range, which makes it highly improbable to face very different means in the above Gaussian mixture model. A
similar argument may be used for the standard deviations. 

Therefore, at the second level of the hierarchical model, we may write

\begin{align}
\label{eq:hierarc2}
\mu_i & \stackrel{iid}{\sim} \mathcal{N}(\mu_0, \zeta_0) \nonumber \\
\sigma_i & \stackrel{iid}{\sim} \frac{1}{2} \mathcal{U}(0,\zeta_0) + \frac{1}{2}\frac{1}{\mathcal{U}(0,\zeta_0)} \nonumber \\
\mathbf{p} & \sim Dir\left(\frac{1}{2},\cdots,\frac{1}{2}\right) 
\end{align}

\noindent which indicates that the location parameters vary between components, but are likely to be close, and
that the scale parameters may be lower or bigger than $\zeta_0$, but not exactly equal to $\zeta_0$. The weights are
given a Dirichlet prior (or in the case of just two components, a Beta prior) independently from the components'
parameters.

At the third level of the hierarchical model, the prior may be noninformative:

\begin{align}
\label{eq:hierarc3}
\pi(\mu_0,\zeta_0) \propto \frac{1}{\zeta_0}
\end{align}

As in \citet{mengersen:robert:1996} the parameters in the mixture model are considered tied together; on the other hand, this feature is not obtained via a representation of the mixture model itself, but via a hierarchy in the definition of the model and the parameters.   

\begin{theorem}
The posterior distribution derived from the hierarchical representation of the Gaussian mixture model 
associated with \eqref{eq:hierarc1}, \eqref{eq:hierarc2} and \eqref{eq:hierarc3}
is proper. 
\end{theorem}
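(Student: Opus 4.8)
The plan is to show that the marginal likelihood, i.e. the integral of the likelihood against the full prior defined by \eqref{eq:hierarc1}--\eqref{eq:hierarc3}, is finite, the only improper ingredient being the top-level hyperprior $\pi(\mu_0,\zeta_0)\propto 1/\zeta_0$. First I would expand the product over observations into the sum over the $K^n$ allocation vectors $\mathbf z\in\{1,\dots,K\}^n$, writing $\prod_{j=1}^n\sum_{i=1}^k p_i\,\mathfrak{n}(x_j|\mu_i,\sigma_i)=\sum_{\mathbf z}\prod_{i=1}^k p_i^{n_i}\prod_{j\in S_i}\mathfrak{n}(x_j|\mu_i,\sigma_i)$, where $S_i=\{j:z_j=i\}$ and $n_i=|S_i|$. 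Since this is a \emph{finite} sum it suffices to bound each allocation term; the weights integrate out against $\mathrm{Dir}(\oh,\dots,\oh)$ into a finite Dirichlet constant, independent of $(\mu_0,\zeta_0)$, so the problem decouples into component factors $\prod_{i=1}^k h_i(\mu_0,\zeta_0)$, each $h_i$ being the integral of $\prod_{j\in S_i}\mathfrak{n}(x_j|\mu_i,\sigma_i)$ against the proper level-two priors of \eqref{eq:hierarc2}. The decisive simplification, and the precise reason this construction repairs the improperness of Theorem~\ref{lem:all-post}, is that an \emph{empty} component ($n_i=0$) contributes exactly $h_i=1$, because both level-two priors are probability densities: empty components no longer generate the divergent factor $\int\pi(\mu_i)\pi(\sigma_i)\,\dd\mu_i\,\dd\sigma_i=\infty$ that ruins properness under improper component priors.

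For a non-empty component I would carry out the Gaussian integration over $\mu_i$ explicitly. Completing the square through $\sum_{j\in S_i}(x_j-\mu_i)^2=SS_i+n_i(\mu_i-\bar x_i)^2$, with $\bar x_i$ and $SS_i$ the within-component mean and sum of squares, the $\mu_i$-marginal against $\mathcal N(\mu_0,\zeta_0)$ equals $\phi_i(\sigma_i)\,\mathcal N\!\big(\bar x_i\,\big|\,\mu_0,\sigma_i^2/n_i+\zeta_0\big)$, where $\phi_i(\sigma_i)\propto\sigma_i^{-(n_i-1)}e^{-SS_i/2\sigma_i^2}$. Observe that marginalising the location has removed the $\sigma_i^{-1}$ likelihood singularity at $\sigma_i\to0$, so each $h_i$ is finite for every fixed $(\mu_0,\zeta_0)$; what remains is to control the two-dimensional integral of $\prod_i h_i$ against $\zeta_0^{-1}\,\dd\mu_0\,\dd\zeta_0$.

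I would integrate $\mu_0$ first, and here the key is to treat the non-empty components \emph{jointly} rather than bounding them one by one. By Fubini the $\mu_0$-integral acts on $\prod_{i\,\mathrm{n.e.}}\mathcal N(\bar x_i\,|\,\mu_0,V_i)$ with $V_i=\sigma_i^2/n_i+\zeta_0$, and a product-of-Gaussians computation gives $\int_\mathbb{R}\prod_i\mathcal N(\bar x_i\,|\,\mu_0,V_i)\,\dd\mu_0=\big[\prod_i(2\pi V_i)^{-1/2}\big]\big(2\pi/\textstyle\sum_i V_i^{-1}\big)^{1/2}\exp\big(-\tfrac12\sum_{i<i'}w_{ii'}(\bar x_i-\bar x_{i'})^2\big)$ with positive weights $w_{ii'}$. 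The limit $\zeta_0\to\infty$ is then benign: the two-piece \cite{robert:mengersen:1999} scale prior $\pi(\sigma_i|\zeta_0)=\tfrac{1}{2\zeta_0}\mathbb I_{(0,\zeta_0)}(\sigma_i)+\tfrac{1}{2\zeta_0\sigma_i^2}\mathbb I_{(1/\zeta_0,\infty)}(\sigma_i)$ carries a $\zeta_0^{-1}$ prefactor in each piece, so after the $\sigma_i$-integrals every non-empty component decays at least like $\zeta_0^{-1}$, which together with $\zeta_0^{-1}$ is integrable at infinity.

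The main obstacle I anticipate is the limit $\zeta_0\to0$, precisely because $1/\zeta_0$ is non-integrable there and the location prior $\mathcal N(\mu_0,\zeta_0)$ simultaneously concentrates (its height diverging like $\zeta_0^{-1/2}$). Independent per-component bounds fail at this endpoint; what saves the argument is the exponential cross-term $\exp\big(-\tfrac12\sum_{i<i'}w_{ii'}(\bar x_i-\bar x_{i'})^2\big)$ above, in which $w_{ii'}\sim 1/\zeta_0$ when the $V_i\gtrsim\zeta_0$, so that for any allocation with at least two non-empty components of distinct means the $\mu_0$-integrated quantity is exponentially small in $1/\zeta_0$, overwhelming any polynomial blow-up. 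The degenerate allocations must be handled separately: when a single component is non-empty the within-component factor $e^{-SS_i/2\sigma_i^2}$ provides the suppression and one needs $n\ge2$ (with distinct observations) so that $SS_i>0$ on the small-$\sigma$ piece of the prior, while the large-$\sigma$ piece contributes $O(\zeta_0^{\,n_i-1})$. Establishing the sharp rate of $\prod_i h_i$ as $\zeta_0\to0$, correctly exploiting the joint cancellation and splitting the two branches of the scale prior, is the step I expect to require the most care; everything else reduces to bounded or already-convergent integrals.
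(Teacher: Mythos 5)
Your proposal is correct, and its skeleton is the one the paper uses: expand the likelihood over the finitely many allocation vectors, exploit the properness of the level-two priors in \eqref{eq:hierarc2}, integrate the component locations in closed form, and track the powers of $\zeta_0$ produced by the two branches of the scale prior against the non-integrable hyperprior $1/\zeta_0$. The difference is one of completeness, and it is substantial. The paper's own proof examines only the single allocation assigning every observation to the first component, splits the scale prior into its branches, and asserts convergence of each one-dimensional integral in turn (``the term $1/\zeta_0^4$ \ldots has its counterpart in the integrals relative to the location priors''); it never examines allocations containing singleton components, for which the likelihood carries no scale information (no $e^{-SS_i/(2\sigma_i^2)}$ factor) --- precisely the analogue of the configurations that caused improperness in Theorem \ref{lem:all-post} --- and it never isolates the endpoint $\zeta_0\to 0$. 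Your proposal fills both gaps: empty components contribute a factor of exactly one because the level-two priors are probability densities; singleton components are rescued by the cross-terms $\exp\{-\oh\sum_{i<i'} w_{ii'}(\bar x_i-\bar x_{i'})^2\}$ produced by the $\mu_0$-integration, with $w_{ii'}\sim 1/\zeta_0$; and the two endpoints of the $\zeta_0$-integral are treated by separate mechanisms. In doing so you also surface hypotheses that the paper leaves implicit but that are genuinely necessary: for $n=1$, or when two tied observations are assigned to the same component (so that $n_i\ge 2$ but $SS_i=0$, making $\phi_i(\sigma_i)\propto\sigma_i^{-(n_i-1)}$ non-integrable at $0$ against the flat branch of the scale prior), the corresponding allocation term diverges and the statement fails; so ``$n\ge 2$ with distinct observations'' is a needed assumption, not pedantry.

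One imprecision to repair when you execute the $\zeta_0\to 0$ step: your dichotomy ``at least two non-empty components with distinct means'' versus ``a single non-empty component'' is not exhaustive, because two non-empty groups can share the same group mean while all observations are distinct (e.g. $\{0,1\}$ and $\{0.2,0.8\}$), in which case the cross-term is identically one. The case analysis should instead be organized per component: either $n_i\ge 2$ and $SS_i>0$, giving suppression of order $e^{-SS_i/(4\zeta_0^2)}$ on the small-$\sigma$ branch and $O(\zeta_0^{\,n_i-1})$ on the heavy-tailed branch, or $n_i=1$, handled by the cross-term against some other non-empty component (which exists and has a distinct value, or else has $SS>0$). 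With that reorganization every allocation is covered by tools you already have. A last cosmetic remark: the paper's displayed posterior uses a scale prior with density $\zeta_0/(2\sigma^2)$ on $(\zeta_0,\infty)$, whereas you used the literal $1/\mathcal{U}(0,\zeta_0)$ of \eqref{eq:hierarc2}, with density $1/(2\zeta_0\sigma^2)$ on $(1/\zeta_0,\infty)$; both are proper and the argument goes through unchanged for either normalization.
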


\begin{proof}
Consider the composition of the three levels of the hierarchical model described in equations \eqref{eq:hierarc1}, \eqref{eq:hierarc2} and \eqref{eq:hierarc3}:

\begin{align}
\label{eq:hierarch_post}
\pi(\boldsymbol{\mu},\boldsymbol{\sigma},\mu_0,\zeta_0;\mathbf{x}) & \propto L(\mu_1,\mu_2,\sigma_1,\sigma_2;\mathbf{x})  p^{-1/2} (1-p)^{-1/2}
			\nonumber \\
			& {} \times \frac{1}{\zeta_0} \frac{1}{2\pi\zeta_0^2} \exp\left\{- \frac{(\mu_1-\mu_0)^2 (\mu_2-\mu_0)^2}{2\zeta_0^2}\right\} \nonumber \\ 
			& {} \times \left[ \frac{1}{2}\frac{1}{\zeta_0} \mathbb{I}_{[\sigma_1\in(0,\zeta_0)]}(\sigma_1) + \frac{1}{2}\frac{\zeta_0}{\sigma_1^2} \mathbb{I}_{[\sigma_1\in(\zeta_0,+\infty)]}(\sigma_1) \right] \nonumber \\
			& {} \times \left[ \frac{1}{2}\frac{1}{\zeta_0} \mathbb{I}_{[\sigma_2\in(0,\zeta_0)]}(\sigma_2) + \frac{1}{2}\frac{\zeta_0}{\sigma_2^2} \mathbb{I}_{[\sigma_2\in(\zeta_0,+\infty)]}(\sigma_2) \right]
\end{align}

\noindent where $L(\cdot;\mathbf{x})$ is given by Equation \eqref{eq:mixlik}. 

Once again, we can initialize the proof by considering only the first term in the sum composing the likelihood function
for the mixture model. Then the product in \eqref{eq:hierarch_post} may be split into four terms corresponding to the
different terms in the scale parameters' prior. For instance, the first term is

\begin{align*}
\int_0^\infty & \int_{-\infty}^\infty \int_\mathbb{R}\int_\mathbb{R} \int_\mathbb{R^+} \int_\mathbb{R^+} \int_0^1 
				\frac{1}{\sigma_1^n} p_1^n \exp \left\{- \frac{\sum_{i=1}^n (x_i-\mu_1)^2}{2\sigma_1^2} \right\} \nonumber \\
				& {} \times \frac{1}{\zeta_0^3} \exp\left\{-\frac{(\mu_1-\mu_0)^2 (\mu_2-\mu_0)^2}{2\zeta_0^2} \right\} \nonumber \\
				& {} \times \frac{1}{4}\frac{1}{\zeta_0} \frac{1}{\zeta_0} \mathbb{I}_{[\sigma_1 \in (0,\zeta_0)]}(\sigma_1) \mathbb{I}_{[\sigma_2 \in (0,\zeta_0)]}(\sigma_2)
d p d\sigma_1 d\sigma_2 d\mu_1 d\mu_2 d \mu_0 d\zeta_0 
\end{align*}

\noindent and the second one

\begin{align*}
\int_0^\infty & \int_{-\infty}^\infty \int_\mathbb{R}\int_\mathbb{R} \int_\mathbb{R^+} \int_\mathbb{R^+} \int_0^1 
				\frac{1}{\sigma_1^n} p_1^n \exp \left\{- \frac{\sum_{i=1}^n (x_i-\mu_1)^2}{2\sigma_1^2} \right\} \nonumber \\
				& {} \times \frac{1}{\zeta_0^3} \exp\left\{-\frac{(\mu_1-\mu_0)^2 (\mu_2-\mu_0)^2}{2\zeta_0^2} \right\} \nonumber \\
				& {} \times \frac{1}{4}\frac{1}{\zeta_0} \frac{\zeta_0}{\sigma_2^2} \mathbb{I}_{[\sigma_1 \in (0,\zeta_0)]}(\sigma_1) \mathbb{I}_{[\sigma_2 \in (\zeta_0,\infty)]}(\sigma_2)
d p d\sigma_1 d\sigma_2 d\mu_1 d\mu_2 d \mu_0 d\zeta_0 .
\end{align*}

The integrals with respect to $\mu_1$, $\mu_2$ and $\mu_0$ converge, since the data are carrying
information about $\mu_0$ through $\mu_1$. The integral with respect to $\sigma_1$ converges as well, because, as
$\sigma_1 \rightarrow 0$, the exponential function goes to $0$ faster than $\frac{1}{\sigma_1^n}$ goes to $\infty$
(integrals where $\sigma_1>\zeta_0$ are not considered here because this reasoning may easily extend to those
cases).  The integrals with respect to $\sigma_2$ 
converge, because they provide a factor proportional to $\zeta_0$ and $1/\zeta_0$ respectively
which simplifies with the normalizing constant of the reference distribution (the uniform in the first case and the
Pareto in second one). Finally, the term $1/\zeta_0^4$ resulting from the previous operations has its counterpart in the
integrals relative to the location priors. Therefore, the integral with respect to $\zeta_0$ converges. 

The part of the posterior distribution relative to the weights is not an issue, since the weights belong to the
corresponding simplex.
\end{proof}

Table \ref{tab:hierMM} shows the results given by simulation from the posterior distribution of the hierarchical mixture
model and confirms that the chains always converge. 

\begin{table}[]
\centering
\footnotesize
\caption{Hierarchical Mixture model: results of 50 replications of the experiment for a two and a three-component Gaussian mixture model with a Monte Carlo approximation of the posterior distribution based on $10^5$ simulations and a burn-in of $10^4$ simulations. The table shows the average acceptance rate, the proportion of chains diverging towards higher values, the proportion of chains stuck at values of standard deviations close to 0, the mean and the median log-likelihood of the last accepted values and the mean and the median maximum log-likelihood of the accepted values.}
\label{tab:hierMM}
\begin{tabular}{|c|ccccccc|}
\hline
\multicolumn{1}{|l|}{{\bf k=2}}                              & \multicolumn{1}{l}{}                                                 & \multicolumn{1}{l}{}                                                                       & \multicolumn{1}{l}{}                                                                   & \multicolumn{1}{l}{}                                                                         & \multicolumn{1}{l}{}                                                                            & \multicolumn{1}{l}{}                                                                        & \multicolumn{1}{l|}{}                                                                         \\ \hline
{\it \begin{tabular}[c]{@{}c@{}}Sample \\ Size\end{tabular}} & {\it \begin{tabular}[c]{@{}c@{}}Ave. \\ Accept.\\ Rate\end{tabular}} & {\it \begin{tabular}[c]{@{}c@{}}Chains stuck\\ at small values\\ of $\sigma$\end{tabular}} & {\it \begin{tabular}[c]{@{}c@{}}Chains towards\\ high values of \\ $\mu$\end{tabular}} & {\it \begin{tabular}[c]{@{}c@{}}Mean\\ l($\theta^{fin}$)/\\ l($\theta^{true}$)\end{tabular}} & {\it \begin{tabular}[c]{@{}c@{}}Median \\ l($\theta^{fin}$)/\\ l($\theta^{true}$)\end{tabular}} & {\it \begin{tabular}[c]{@{}c@{}}Mean\\ max(l($\theta$))/\\ l($\theta^{true}$)\end{tabular}} & {\it \begin{tabular}[c]{@{}c@{}}Median\\ max(l($\theta$))/\\ l($\theta^{true}$)\end{tabular}} \\ \hline
3                                                            & 0.1947                                                               & 0.00                                                                                       & 0.00                                                                                   & 1.1034                                                                                       & 0.9825                                                                                          & 0.0838                                                                                      & 0.5778                                                                                        \\
4                                                            & 0.2295                                                               & 0.00                                                                                       & 0.00                                                                                   & 1.0318                                                                                       & 1.0300                                                                                          & 0.4678                                                                                      & 0.5685                                                                                        \\
5                                                            & 0.2230                                                               & 0.00                                                                                       & 0.00                                                                                   & 0.9572                                                                                       & 0.9924                                                                                          & 0.8464                                                                                      & 0.7456                                                                                        \\
6                                                            & 0.2275                                                               & 0.00                                                                                       & 0.00                                                                                   & 0.9870                                                                                       & 0.9641                                                                                          & 0.6614                                                                                      & 0.6696                                                                                        \\
7                                                            & 0.2112                                                               & 0.00                                                                                       & 0.00                                                                                   & 1.0658                                                                                       & 1.0043                                                                                          & 0.8406                                                                                      & 0.7848                                                                                        \\
8                                                            & 0.2833                                                               & 0.00                                                                                       & 0.00                                                                                   & 1.0077                                                                                       & 1.0284                                                                                          & 0.8268                                                                                      & 0.8495                                                                                        \\
9                                                            & 0.2696                                                               & 0.00                                                                                       & 0.00                                                                                   & 1.0741                                                                                       & 1.0179                                                                                          & 0.8854                                                                                      & 0.8613                                                                                        \\
10                                                           & 0.2266                                                               & 0.00                                                                                       & 0.00                                                                                   & 1.1446                                                                                       & 0.9968                                                                                          & 0.9589                                                                                      & 0.8508                                                                                        \\
15                                                           & 0.1982                                                               & 0.00                                                                                       & 0.00                                                                                   & 1.0201                                                                                       & 0.9959                                                                                          & 0.9409                                                                                      & 0.9280                                                                                        \\
20                                                           & 0.2258                                                               & 0.00                                                                                       & 0.00                                                                                   & 1.2023                                                                                       & 1.0145                                                                                          & 0.9172                                                                                      & 0.9400                                                                                        \\
30                                                           & 0.2073                                                               & 0.00                                                                                       & 0.00                                                                                   & 0.9888                                                                                       & 1.0022                                                                                          & 1.0424                                                                                      & 0.9656                                                                                        \\
50                                                           & 0.2724                                                               & 0.00                                                                                       & 0.00                                                                                   & 1.0493                                                                                       & 1.0043                                                                                          & 1.0281                                                                                      & 0.9859                                                                                        \\
100                                                          & 0.2739                                                               & 0.00                                                                                       & 0.00                                                                                   & 1.0932                                                                                       & 1.0025                                                                                          & 1.0805                                                                                      & 0.9932                                                                                        \\
200                                                          & 0.3031                                                               & 0.00                                                                                       & 0.00                                                                                   & 1.1610                                                                                       & 1.0036                                                                                          & 1.1519                                                                                      & 0.9964                                                                                        \\
500                                                          & 0.2753                                                               & 0.00                                                                                       & 0.00                                                                                   & 1.1729                                                                                       & 1.0023                                                                                          & 1.1694                                                                                      & 0.9989                                                                                        \\
1000                                                         & 0.2317                                                               & 0.00                                                                                       & 0.00                                                                                   & 1.1800                                                                                       & 1.0021                                                                                          & 1.1772                                                                                      & 0.9994                                                                                        \\ \hline
{\bf k=3}                                                    & \multicolumn{1}{l}{}                                                 & \multicolumn{1}{l}{}                                                                       & \multicolumn{1}{l}{}                                                                   & \multicolumn{1}{l}{}                                                                         & \multicolumn{1}{l}{}                                                                            & \multicolumn{1}{l}{}                                                                        & \multicolumn{1}{l|}{}                                                                         \\ \hline
3                                                            & 0.2840                                                               & 0.00                                                                                       & 0.00                                                                                   & 1.1316                                                                                       & 1.0503                                                                                          & 0.3432                                                                                      & 0.2950                                                                                        \\
4                                                            & 0.2217                                                               & 0.00                                                                                       & 0.00                                                                                   & 1.0326                                                                                       & 0.9452                                                                                          & 0.6699                                                                                      & 0.6624                                                                                        \\
5                                                            & 0.2144                                                               & 0.00                                                                                       & 0.00                                                                                   & 1.0610                                                                                       & 1.0421                                                                                          & 0.6858                                                                                      & 0.6838                                                                                        \\
6                                                            & 0.2258                                                               & 0.00                                                                                       & 0.00                                                                                   & 1.0908                                                                                       & 0.9683                                                                                          & 0.6472                                                                                      & 0.6355                                                                                        \\
7                                                            & 0.1843                                                               & 0.00                                                                                       & 0.00                                                                                   & 1.0436                                                                                       & 0.9915                                                                                          & 0.7878                                                                                      & 0.8008                                                                                        \\
8                                                            & 0.2760                                                               & 0.00                                                                                       & 0.00                                                                                   & 1.0276                                                                                       & 1.0077                                                                                          & 0.7996                                                                                      & 0.7958                                                                                        \\
9                                                            & 0.2028                                                               & 0.00                                                                                       & 0.00                                                                                   & 1.0025                                                                                       & 1.0145                                                                                          & 0.7830                                                                                      & 0.8016                                                                                        \\
10                                                           & 0.2116                                                               & 0.00                                                                                       & 0.00                                                                                   & 1.0426                                                                                       & 1.0015                                                                                          & 0.8752                                                                                      & 0.8591                                                                                        \\
15                                                           & 0.2023                                                               & 0.00                                                                                       & 0.00                                                                                   & 1.0247                                                                                       & 1.0063                                                                                          & 0.8810                                                                                      & 0.8871                                                                                        \\

20                                                           & 0.2211                                                               & 0.00                                                                                       & 0.00                                                                                   & 1.0281                                                                                       & 1.0104                                                                                          & 0.9290                                                                                      & 0.9268                                                                                        \\
30                                                           & 0.2242                                                               & 0.00                                                                                       & 0.00                                                                                   & 1.1978                                                                                       & 1.0123                                                                                          & 1.0841                                                                                      & 0.9508                                                                                        \\
50                                                           & 0.2513                                                               & 0.00                                                                                       & 0.00                                                                                   & 1.0543                                                                                       & 1.0142                                                                                          & 1.0148                                                                                      & 0.9775                                                                                        \\
100                                                          & 0.2768                                                               & 0.00                                                                                       & 0.00                                                                                   & 1.0563                                                                                       & 1.0206                                                                                          & 1.0324                                                                                      & 0.9955                                                                                        \\
200                                                          & 0.2910                                                               & 0.00                                                                                       & 0.00                                                                                   & 1.0325                                                                                       & 1.0118                                                                                          & 1.0200                                                                                      & 0.9993                                                                                        \\
500                                                          & 0.2329                                                               & 0.00                                                                                       & 0.00                                                                                   & 1.0943                                                                                       & 1.0079                                                                                          & 1.0882                                                                                      & 1.0002                                                                                        \\
1000                                                         & 0.2189                                                               & 0.00                                                                                       & 0.00                                                                                   & 1.1068                                                                                       & 1.0105                                                                                          & 1.1212                                                                                      & 1.0110                                                                                        \\ \hline
\end{tabular}
\end{table}

Figures \ref{fig:hierc_densmean2_3_8}--\ref{fig:hierc_densmean3_30_1000} show the results how a simulations study to approximate the posterior distribution of the means of a two or three-component mixture model, compared to the true values (red vertical lines) and for different sample sizes, from $n=3$ to $n=1000$.

\begin{figure}
\centering
\includegraphics[scale=0.5]{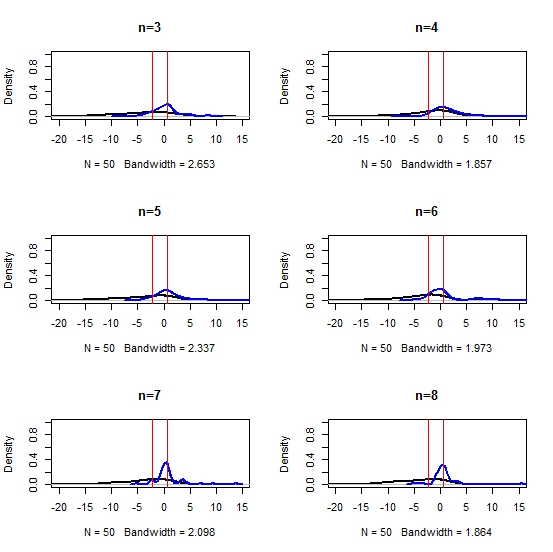}
\caption{Distribution of the posterior means for the hierarchical mixture model with two components, global mean $\mu_0=0$ and global variance $\zeta_0=5$, based on $50$ replications of the experiment with different sample sizes, black and blue lines for the marginal posterior distribution of $\mu_1$ and $\mu_2$ respectively.}
\label{fig:hierc_densmean2_3_8}
\end{figure}

\begin{figure}
\centering
\includegraphics[scale=0.5]{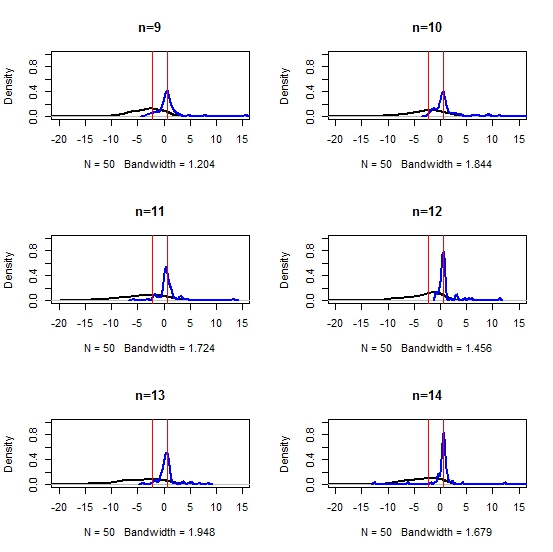}
\caption{Same caption as in Figure \ref{fig:hierc_densmean2_3_8}.}
\label{fig:hierc_densmean2_9_14}
\end{figure}

\begin{figure}
\centering
\includegraphics[scale=0.5]{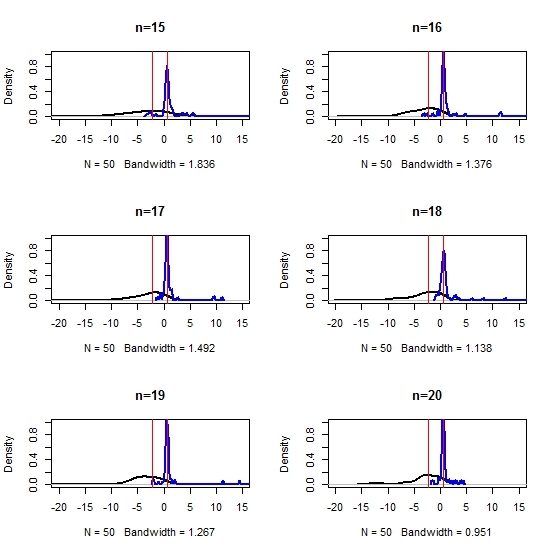}
\caption{Same caption as in Figure \ref{fig:hierc_densmean2_3_8}.}
\label{fig:hierc_densmean2_15_20}
\end{figure}

\begin{figure}
\centering
\includegraphics[scale=0.5]{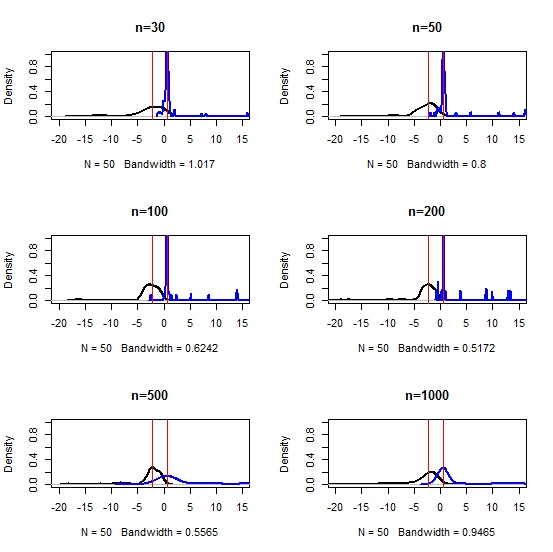}
\caption{Same caption as in Figure \ref{fig:hierc_densmean2_3_8}.}
\label{fig:hierc_densmean2_30_1000}
\end{figure}

\begin{figure}
\centering
\includegraphics[scale=0.5]{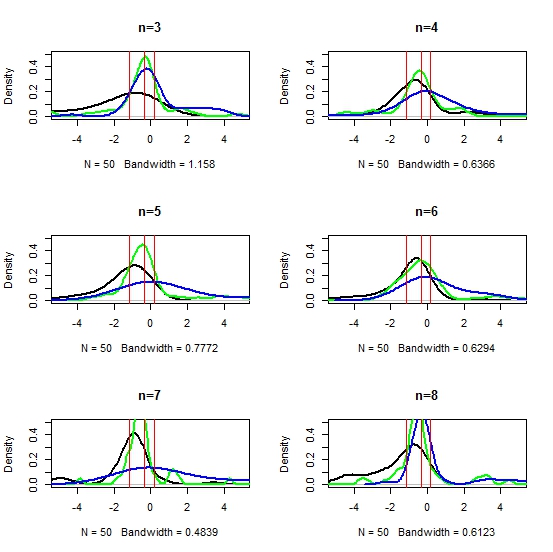}
\caption{Distribution of the posterior means for the hierarchical mixture model with three components, global mean $\mu_0=0$ and global variance $\zeta_0=5$, based on $50$ replications of the experiment with different sample sizes (the red lines stands for the true values, black, green and blue lines for the marginal posterior distributions of $\mu_1$, $\mu_2$ and $\mu_3$ respectively).}
\label{fig:hierc_densmean3_3_8}
\end{figure}

\begin{figure}
\centering
\includegraphics[scale=0.5]{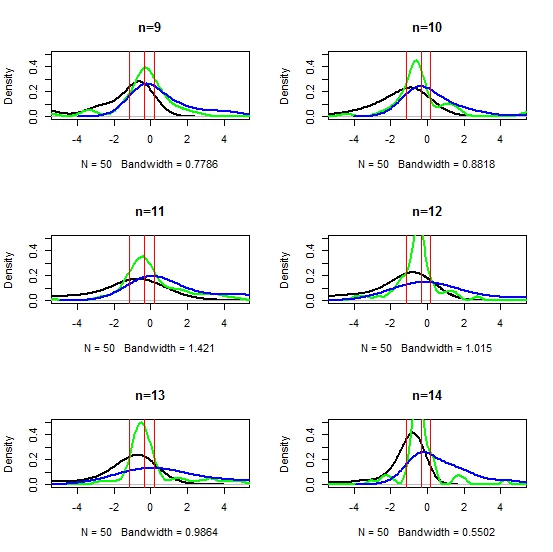}
\caption{Same caption as in Figure \ref{fig:hierc_densmean3_3_8}.}
\label{fig:hierc_densmean3_9_14}
\end{figure}

\begin{figure}
\centering
\includegraphics[scale=0.5]{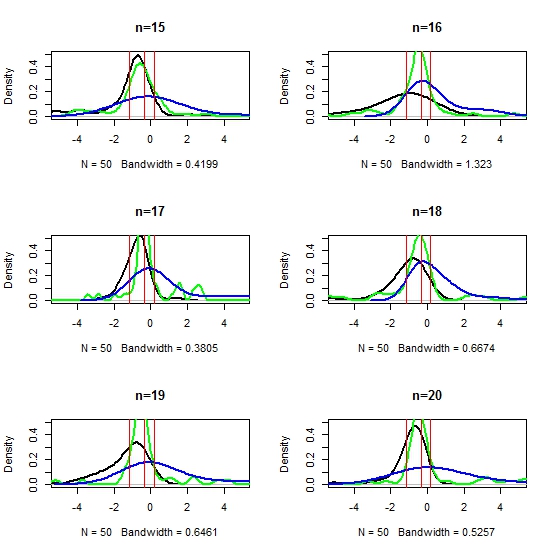}
\caption{Same caption as in Figure \ref{fig:hierc_densmean3_3_8}.}
\label{fig:hierc_densmean3_15_20}
\end{figure}

\begin{figure}
\centering
\includegraphics[scale=0.5]{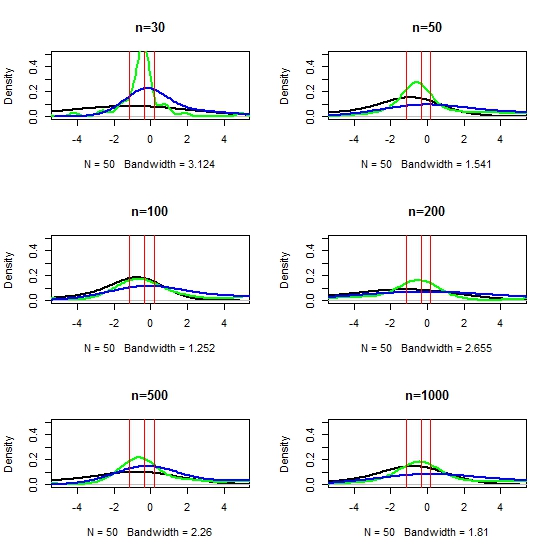}
\caption{Same caption as in Figure \ref{fig:hierc_densmean3_3_8}.}
\label{fig:hierc_densmean3_30_1000}
\end{figure}

\vspace{0.3cm}
\section{Implementation features}
\label{sec:implant}

The computing expense due to derive the Jeffreys prior for a set of parameter values is in $\mathrm{O}(d^2)$ if
$d$ is the total number of (independent) parameters.

Each element of the Fisher information matrix is an integral of the form

\begin{equation*}
-\int_{\mathcal{X}} \frac{\partial^2 \log \left[\sum_{h=1}^k p_h\,f(x|\theta_h)\right]}{\partial \theta_i \partial \theta_j}\left[\sum_{h=1}^k p_h\,f(x|\theta_h)\right]^{-1} d x
\end{equation*}

\noindent which has to be approximated. We have applied both numerical
integration and Monte Carlo integration and simulations show that, in general,
numerical integration obtained via Gauss-Kronrod quadrature (see
\cite{piessens:1983} for details), has more stable results. Neverthless, when
one or more proposed values for the standard deviations or the weights is too
small, the approximations tend to be very dependent on the bounds used for
numerical integration (usually chosen to omit a negligible part of the
density) or the numerical approximation may not be even applicable. 
In this case, Monte Carlo integration seems to have more stable, where the stability of the results depends on the Monte Carlo sample size. 

Figure \ref{fig:MCvsNUM_incrN} shows the value of the Jeffreys prior obtained via Monte Carlo integration of the
elements of the Fisher information matrix for an increasing number of Monte Carlo simulations both in the case where the
Jeffreys prior is concentrated (where the standard deviations are small) and  where it assumes low values. The value
obtained via Monte Carlo integration is then compared with the value obtained via numerical integration. The sample size
relative to the point where the graph stabilizes may be chosen to perform the approximation.

A similar analysis is shown in Figures \ref{fig:MCvsNUM_bpl1} and \ref{fig:MCvsNUM_bpl2} which provide the boxplots of $100$ replications of the Monte Carlo approximations for different numbers of simulations (on the \textit{x}-axis); one can choose to use the number of simulations which lead to a reasonable or acceptable variability of the results. 

\begin{figure}
\centering
\includegraphics[scale=0.4]{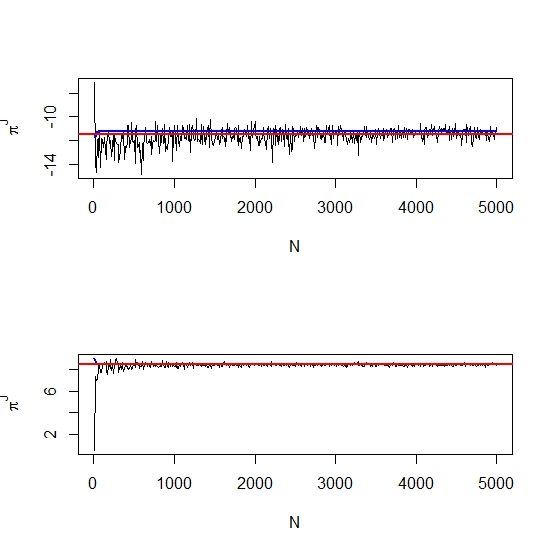}
\caption{Jeffreys prior obtained via Monte Carlo integration (and numerical integration, in \textit{red}) for the model $0.25\mathcal{N}(-10,1)+0.10\mathcal{N}(0,5)+0.65\mathcal{N}(15,7)$ (above) and for the model $\frac{1}{3}\mathcal{N}(-1,0.2)+\frac{1}{3}\mathcal{N}(0,0.2)+\frac{1}{3}\mathcal{N}(1,0.2)$ (below).}
\label{fig:MCvsNUM_incrN}
\end{figure}

\begin{figure}
\centering
\includegraphics[scale=0.4]{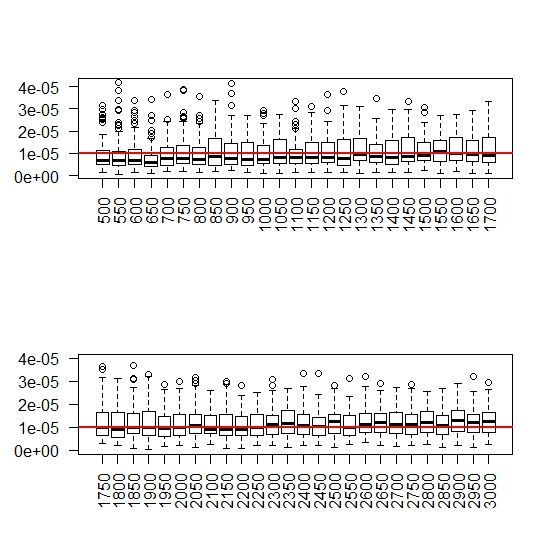}
\caption{Boxplots of 100 replications of the procedure which approximates the Fisher information matrix via Monte Carlo integration to obtain the Jeffreys prior for the model $0.25\mathcal{N}(-10,1)+0.10\mathcal{N}(0,5)+0.65\mathcal{N}(15,7)$ for sample sizes from $500$ to $3000$. The value obtained via numerical integration is represented by the red line.}
\label{fig:MCvsNUM_bpl1}
\end{figure}

\begin{figure}
\centering
\includegraphics[scale=0.4]{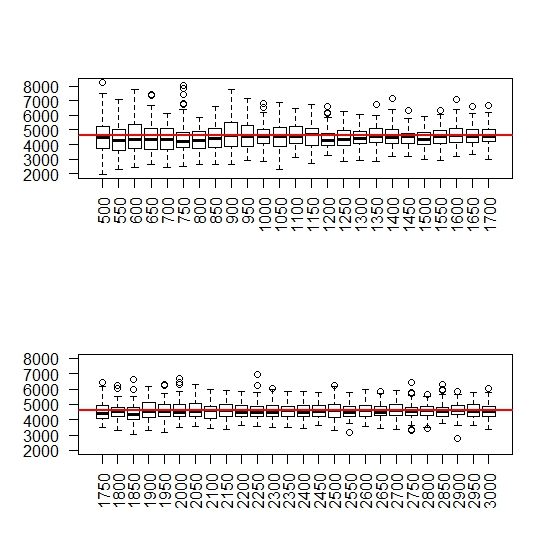}
\caption{Same caption as in Figure \ref{fig:MCvsNUM_bpl1} for the model $\frac{1}{3}\mathcal{N}(-1,0.2)+\frac{1}{3}\mathcal{N}(0,0.2)+\frac{1}{3}\mathcal{N}(1,0.2)$.}
\label{fig:MCvsNUM_bpl2}
\end{figure}

Since the approximation problem is one-dimensional, another numerical solution could be based on the sums of Riemann;
Figure \ref{fig:MCvsNUMSRvsINTR} shows the comparison between the results of the Gauss-Kronrod quadrature procedure and
a procedure based on sums of Riemann for an increasing number of points considered in a region which contain the
$99.999\%$ of the data density. Moreover, Figure \ref{fig:MCvsRIEMbxp} shows the comparison between the approximation to
the Jeffreys prior obtained via Monte Carlo integration and via the sums of Riemann: it is clear that the sums of
Riemann lead to more stable results in comparison with Monte Carlo integration. On the other hand, they can be applied
in more situations than the Gauss-Kromrod quadrature, in particular, in cases where the standard deviations are very
small (of order $10^{-2}$). Nevertheless, when the standard deviations are smaller than this, one has to pay attention
on the features of the function to integrate. In fact, the mixture density tends to concentrate around the modes, with
regions of density close to 0 between them. The elements of the Fisher informtation matrix are, in general, ratios
between the components' densities and the mixture density, then in those regions an indeterminate form of type
$\frac{0}{0}$ is obtained; Figure \ref{fig:FishInfoelem} represents the behavior of one of these elements when $\sigma_i
\rightarrow 0$ for $i=1,\cdots,k$. 

\begin{figure}
\centering
\includegraphics[scale=0.4]{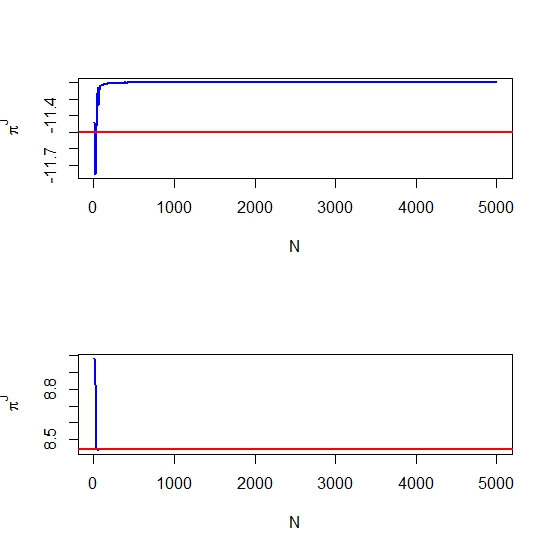}
\caption{Comparison between the Jeffreys prior density obtained via integration in the Fisher information matrix via Gauss-Kronrod quadrature and sums of Riemann for the model $0.25\mathcal{N}(-10,1)+0.10\mathcal{N}(0,5)+0.65\mathcal{N}(15,7)$ (above) and $\frac{1}{3}\mathcal{N}(-1,0.2)+\frac{1}{3}\mathcal{N}(0,0.2)+\frac{1}{3}\mathcal{N}(1,0.2)$ (below).}
\label{fig:MCvsNUMSRvsINTR}
\end{figure}

\begin{figure}
\centering
\includegraphics[scale=0.4]{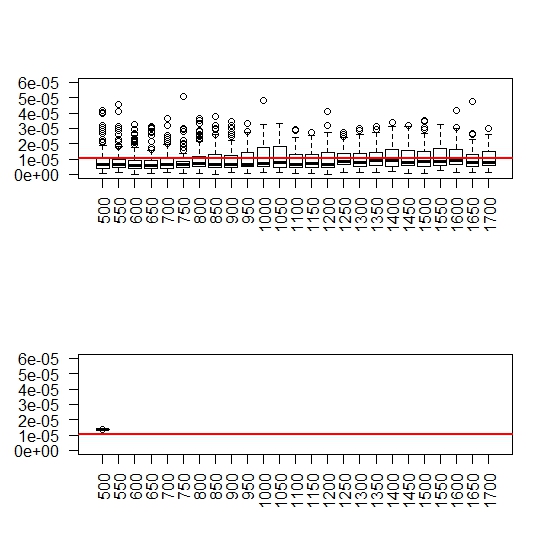}
\caption{Boxplots of 100 replications of the procedure based on Monte Carlo integration (above) and sums of Riemann (below) which approximates the Fisher information matrix of the model $0.25\mathcal{N}(-10,1)+0.10\mathcal{N}(0,5)+0.65\mathcal{N}(15,7)$ for sample sizes from $500$ to $1700$. The value obtained via numerical integration is represented by the red line (in the graph below, all the approximations obtained with more than $550$ knots give the same result, exactly equal to the one obtained via Gauss-Kronrod quadrature).}
\label{fig:MCvsRIEMbxp}
\end{figure}

\begin{figure}
\centering
\includegraphics[scale=0.4]{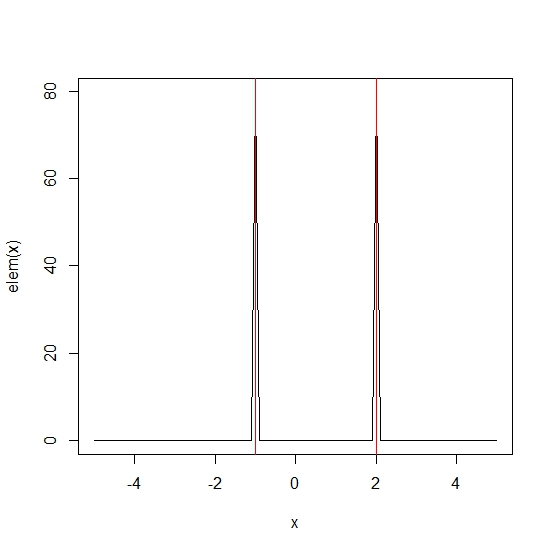}
\caption{The first element on the diagonal of the Fisher information matrix relative to the first weight of the two-component Gaussian mixture model $0.5 \mathcal{N}(-1,0.01)+0.5 \mathcal{N}(2,0.01)$.}
\label{fig:FishInfoelem}
\end{figure}

Thus, we have decided to use the sums of Riemann (with a number of points equal to $550$) to approximate the Jeffreys prior when the standard deviations are sufficiently big and Monte Carlo integration (with sample sizes of $1500$) when they are too small. In this case, the variability of the results seems to decrease as $\sigma_i$ approaches $0$, as shown in Figure \ref{fig:MCsmallsd}.
 
\begin{figure}
\centering
\includegraphics[scale=0.4]{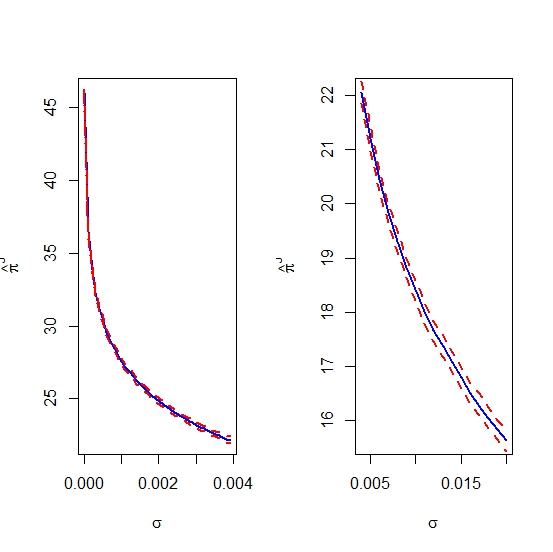}
\caption{Approximation of the Jeffreys prior (in log-scale) for the two-component Gaussian mixture model $0.5 \mathcal{N}(-1,\sigma)+0.5\mathcal{N}(2,\sigma)$, where $\sigma$ is taken equal for both components and decreasing.}
\label{fig:MCsmallsd}
\end{figure}

We have chosen to consider Monte Carlo samples of size equal to $1500$ because both the value of the approximation and
its standard deviations are stabilizing.

An adaptive MCMC algorithm has been used to define the variability of the
kernel density functions used to propose the moves. During the burnin, the
variability of the kernel distributions has been reduced or increased depending
on the acceptance rate, in a way such that the acceptance rate stay between $20\%$
and $40\%$. The transitional kernel used have been truncated normals for the
weights, normals for the means and log-normals for the standard deviations (all
centered on the values accepted in the previous iteration).

\vspace{0.3cm}
\section{Conclusion}\label{sec:concl}

This thorough analysis of the Jeffreys priors in the setting of Gaussian mixtures shows that mixture distributions can
also be considered as an ill-posed problem with regards to the production of non-informative priors. Indeed, we have
shown that most configurations for Bayesian inference in this framework do not allow for the standard Jeffreys prior to
be taken as a reference.  While this is not the first occurrence where Jeffreys priors cannot be used as reference
priors, the wide range of applications of mixture distributions weights upon this discovery and calls for a  
new paradigm in the construction of non-informative Bayesian procedures for mixture inference. Our proposal in Section
\ref{sec:alternative} could constitute such a reference, as it simplifies the representation of
\cite{mengersen:robert:1996}. 

\bibliographystyle{ims}  
\hyphenation{Post-Script Sprin-ger}

\end{document}